\definecolor{darkgreen}{rgb}{0,0.5,0.5}
\definecolor{darkred}{rgb}{0.5,0,0}
\newcommand\drop[1]{}
\newcommand*{\newaliastheorem}[3]{
  \newaliascnt{#1}{#2}
  \newtheorem{#1}[#1]{#3}
  \aliascntresetthe{#1}
  \expandafter\newcommand\csname #1autorefname\endcsname{#3}
}
\newtheorem{theorem}{Theorem}[section]
\title{Online Graph Coloring for $k$-Colorable Graphs}
\author{Ken-ichi Kawarabayashi\thanks{National Institute of Informatics \& The University of Tokyo, Tokyo, Japan, \texttt{k\_keniti@nii.ac.jp}.\\ 
    Supported by JSPS Kakenhi 22H05001, JP25K24465, and by JST ASPIRE JPMJAP2302.}
    \and 
    Hirotaka Yoneda\thanks{The University of Tokyo, Tokyo, Japan, \texttt{squar37@gmail.com}.\\ 
    Supported by JSPS JP25K24465, JST ASPIRE JPMJAP2302 and by JST ACT-X JPMJAX25CT.}
    \and 
    Masataka Yoneda\thanks{The University of Tokyo, Tokyo, Japan, \texttt{e869120@gmail.com}.\\ 
    Supported by JSPS JP25K24465 and JST ASPIRE JPMJAP2302.}}
\date{}
\begin{document}
\maketitle

\pagenumbering{gobble}

\begin{abstract}
We study the problem of online graph coloring for $k$-colorable graphs. The best previously known deterministic algorithm uses $\widetilde{O}(n^{1-\frac{1}{k!}})$ colors for general $k$ and $\widetilde{O}(n^{5/6})$ colors for $k = 4$, both given by Kierstead in 1998. In this paper, we finally break this barrier, achieving the first major improvement in nearly three decades.
Our results are summarized as follows:

\begin{enumerate}
    \item \textbf{$k \geq 5$ case.} We provide a deterministic online algorithm to color $k$-colorable graphs with $\widetilde{O}(n^{1-\frac{1}{k(k-1)/2}})$ colors, significantly improving the current upper bound of $\widetilde{O}(n^{1-\frac{1}{k!}})$ colors. Our algorithm also matches the best-known bound for $k = 4$ ($\widetilde{O}(n^{5/6})$ colors).
    \item \textbf{$k = 4$ case.} We provide a deterministic online algorithm to color $4$-colorable graphs with $\widetilde{O}(n^{14/17})$ colors, improving the current upper bound of $\widetilde{O}(n^{5/6})$ colors. 
    \item \textbf{$k = 2$ case.} We show that for randomized algorithms, the upper bound is $1.034 \log_2 n + O(1)$ colors and the lower bound is $\frac{91}{96} \log_2 n - O(1)$ colors. This means that we close the gap to a factor of $1.09$.
\end{enumerate}

With our algorithm for the $k \geq 5$ case, we also obtain a deterministic online algorithm for graph coloring that achieves a competitive ratio of $O(\frac{n}{\log \log n})$, which improves the best-known result of $O(\frac{n \log \log \log n}{\log \log n})$ by Kierstead.

For the bipartite graph case ($k = 2$), the limit of online deterministic algorithms is known: any deterministic algorithm requires $2 \log_2 n - O(1)$ colors. Our results imply that randomized algorithms can perform slightly better but still have a limit.
\end{abstract}

\newpage
{
    \setcounter{tocdepth}{2}
    \addtocontents{toc}{\protect\linespread{1.08}\protect\selectfont} 
    \tableofcontents
    \clearpage
}
\pagenumbering{arabic}
\setcounter{page}{1}

\section{Introduction}\label{sec:intro}

Graph coloring is the problem of coloring each vertex of a graph $G = (V, E)$ with the minimum number of colors, so that no two adjacent vertices have the same color. It is one of the most fundamental problems in graph theory and algorithms.

We study online algorithms for graph coloring, where the input graph is revealed over time. In the \emph{online coloring problem}, vertices arrive one by one, together with their incident edges. Upon the arrival of each vertex, we must immediately assign its color (which must differ from the neighbors' colors) before the next vertex arrives. The challenge is to design a coloring strategy that minimizes the number of colors used.

The online coloring problem appears to be even harder than graph coloring. The most intuitive strategy is perhaps the First-Fit algorithm, which repeatedly assigns the least-indexed available color to each arriving vertex. However, this strategy cannot bound the number of colors, even for bipartite graphs (\autoref{fig:intro-firstfit}).

\begin{figure}[htbp]
  \centering
  \includegraphics[width=\linewidth]{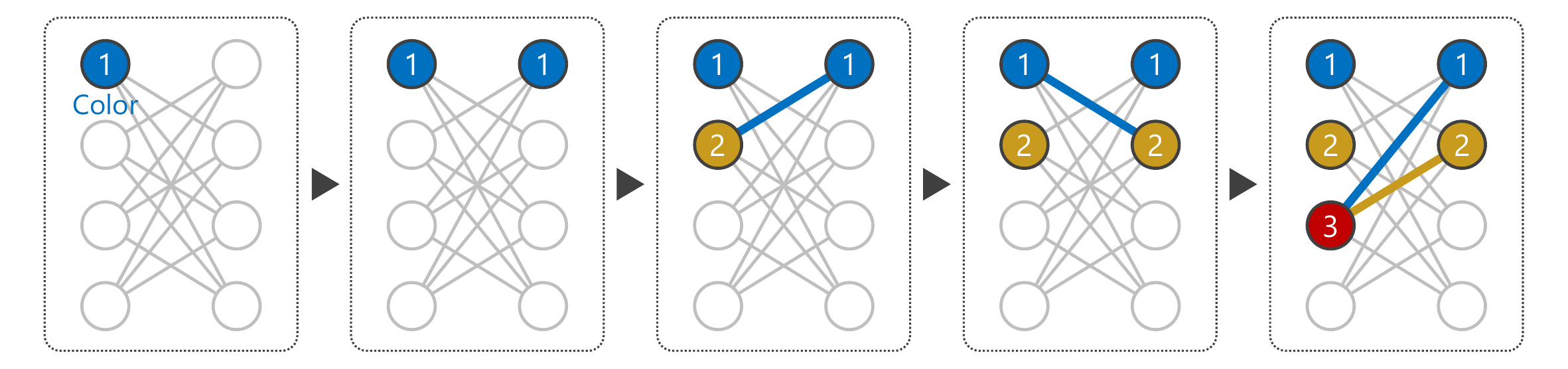}
  \caption{A worst-case input on which First-Fit uses $\frac{1}{2} n$ colors even for bipartite graphs \cite{Ira94}.}
  \label{fig:intro-firstfit}
\end{figure}

\subsection{Previous Studies}

As background information, the graph coloring problem is known to be hard to solve, or even to approximate, in polynomial time. It is inapproximable within a factor $n^{1-\varepsilon}$ for any $\varepsilon > 0$, unless ZPP = NP \cite{FEIGE1998187}. The best known approximation factor is $O(\frac{n (\log \log n)^2}{(\log n)^3})$ \cite{HALLDORSSON199319}. Even for 3-colorable graphs, the best approximation algorithm achieves $\widetilde{O}(n^{0.1954})$ colors \cite{KTY24, BHL26}.

\paragraph{Online coloring.}
Generally, the main research direction in online algorithms is to minimize the competitive ratio.\footnote{The competitive ratio for an algorithm is the maximum value of ``the algorithm's performance (i.e., the number of colors used) divided by the optimal value (i.e., chromatic number)'' for all possible inputs.} For deterministic online coloring, Lov\'{a}sz, Saks, Trotter (1989) \cite{LST89} gave a deterministic online coloring algorithm with a competitive ratio of $O(\frac{n}{\log^* n})$, and Kierstead (1998) \cite{Kie98} improved this ratio (though not explicitly stated) to $O(\frac{n \log \log \log n}{\log \log n})$. For randomized algorithms, Vishwanathan (1992) \cite{Vis92} gave a randomized algorithm with a competitive ratio of $O(\frac{n}{(\log n)^{1/2}})$, and Halld{\'o}rsson (1997) \cite{HALLDORSSON1997265} improved this ratio to $O(\frac{n}{\log n})$. However, the lower bound of $\Omega(\frac{2n}{(\log n)^2})$ is known for both deterministic and randomized algorithms \cite{HS94}.

\paragraph{Algorithms for $k$-colorable graphs.}
Given very strong negative results for approximating the chromatic number in the online graph coloring problem (even worse than graph coloring), it is natural to consider the online graph coloring for $k$-colorable graphs, where $k$ is a constant.

For $k \geq 3$, a deterministic algorithm that uses $O(n^{1-\frac{1}{k!}})$ colors is given by Kierstead (1998) \cite{Kie98}, and a randomized algorithm that uses $\widetilde{O}(n^{1-\frac{1}{k-1}})$ colors is given by Vishwanathan (1992) \cite{Vis92}. For special cases when $k = 3$ and $k = 4$, deterministic algorithms that use $\widetilde{O}(n^{2/3})$ colors and $\widetilde{O}(n^{5/6})$ colors are also given by Kierstead (1998) \cite{Kie98}. However, no improvement has been made for deterministic algorithms since 1998, and a significant gap remains between the upper and lower bounds. Surprisingly, the only known lower bound is $\Omega((\log n)^{k-1})$ colors \cite{Vis92}.

For $k = 2$, a deterministic algorithm that uses $2 \log_2 (n+1)$ colors is found by Lov\'{a}sz, Saks, Trotter (1989) \cite{LST89}. For the lower bound of deterministic algorithms, after a series of studies \cite{Bea76, BBH+14, GKM+14}, Gutowski, Kozik, Micek, Zhu (2014) \cite{GKM+14} finally showed that achieving $2 \log_2 n - 10$ colors is impossible. Note that the lower bound is $\log_2 n$ even for trees \cite{Bea76, GL88}. However, the performance of randomized algorithms remains open, where Vishwanathan \cite{Vis92} gives a lower bound of $\frac{1}{72} \log_2 n$ colors.

\vspace{-0.4em}

\paragraph{Related works.}
Deterministic and randomized online algorithms to color the following graph classes achieve  $\Theta(\log n)$ colors: tree, planar, bounded-treewidth, and disk graphs \cite{DBLP:conf/esa/AlbersS17}. Also, the First-Fit algorithm performs better for $d$-degenerate graphs (i.e., every induced subgraph has a vertex of degree at most $d$), which uses only $O(d \log n)$ colors. For more references on online graph colorings, we refer the reader to the survey by Kierstead \cite{Kie05}.

A related, recently well-studied topic is \emph{online edge coloring}; edges arrive one by one, and we must color each edge immediately upon arrival. For example, a deterministic algorithm that only uses $(1+o(1)) \Delta$ colors when the maximum degree $\Delta$ is $\omega(\log n)$, is known \cite{blikstad2024online,blikstad2025deterministic,blikstad2025online}.

\vspace{-0.4em}

\paragraph{Deterministic vs. Randomized.}
Generally speaking, deterministic online algorithms have large gaps compared to randomized online algorithms for many problems. For example, deterministic online caching algorithms cannot provide better worst-case guarantees than known trivial algorithms, but randomization may yield significantly better results (see \cite{timbook}, Chapter 24).

The difficulty for deterministic online algorithms is that, for worst-case analysis, we must assume ``adversarial attack'' in the context of machine learning (this assumption is typical in the adversarial bandit problem \cite{ACF+02}). This means that in online coloring, the adversary may construct the worst-case input for the next vertex by observing the colors assigned to the previous vertices.

\subsection{Our Contributions}\label{subsec:intro-contributions}

We study online coloring for $k$-colorable graphs, where $k$ is a constant. Recall that the best-known deterministic algorithms use $\widetilde{O}(n^{2/3})$ colors for $k = 3$, $\widetilde{O}(n^{5/6})$ colors for $k = 4$, and $\widetilde{O}(n^{1-\frac{1}{k!}})$ colors for any $k \geq 5$, all given by Kierstead (1998) \cite{Kie98}. For $k = 2$, the best-known algorithm, deterministic or randomized, uses $2 \log_2 (n+1)$ colors \cite{LST89}. Nearly thirty years later, this paper finally makes progress. We improve the results for all $k \geq 2$ except $k = 3$.

\begin{enumerate}
    \item \textbf{Case $k \geq 5$.} We propose a deterministic algorithm for general $k \geq 2$ that uses $\widetilde{O}(n^{1-\frac{1}{k(k-1)/2}})$ colors (\autoref{thm:k5-col}). This matches the best-known bound for $k = 2, 3, 4$ and improves the bound for $k \geq 5$. Our algorithm also improves the competitive ratio for deterministic online coloring, from $O(\frac{n \log \log \log n}{\log \log n})$ \cite{Kie98} to $O(\frac{n}{\log \log n})$ (\autoref{thm:k5-competitive}).
    \item \textbf{Case $k = 4$.} We propose a deterministic algorithm for $k = 4$ that uses $\widetilde{O}(n^{14/17})$ colors (\autoref{thm:k4-col}), which improves the best-known bound by a factor of $\widetilde{O}(n^{1/102})$.
    \item \textbf{Case $k = 2$.} We propose a randomized algorithm for $k = 2$ (bipartite graphs) that uses $1.034 \log_2 n + O(1)$ colors (\autoref{thm:upperbound}), which is better than the deterministic bound. We also show the lower bound that any randomized algorithm uses $\frac{91}{96} \log_2 n - O(1)$ colors (\autoref{thm:lowerbound1}), closing the upper/lower bound gap to a factor of $1.09$.
\end{enumerate}

We also note that by applying our $k = 4$ result to the general case, we obtain an algorithm for general $k \geq 4$ that uses $\widetilde{O}(n^{1-\frac{1}{k(k-1)/2-1/3}})$ colors (\autoref{col:k4-improvement}). See \autoref{tab:intro} for comparison.

\begin{table}[htbp]
    \centering
    \begin{tabular}{|c||c|c|c|c|c|} \hline
        $k$ & 3 & 4 & 5 & 6 & 7 \\
        Previous Results & $\widetilde{O}(n^{0.6667})$ & $\widetilde{O}(n^{0.8334})$ & $\widetilde{O}(n^{0.9917})$ & $\widetilde{O}(n^{0.9987})$ & $\widetilde{O}(n^{0.9999})$ \\
        Our Results & --- & $\widetilde{O}(n^{0.8236})$ & $\widetilde{O}(n^{0.8966})$ & $\widetilde{O}(n^{0.9319})$ & $\widetilde{O}(n^{0.9517})$ \\
        \hline
    \end{tabular}
    \caption{Comparison of previous best results \cite{Kie98} and our results for $k \leq 7$.}
    \label{tab:intro}
\end{table}

\subsection{Overview of Our Techniques}

We now provide an overview of our techniques for the results stated above.

\subsubsection{Algorithm for General $k$ (\autoref{sec:k5})}

We aim for $O(n^{1-\varepsilon})$ colors for $k$-colorable graphs, for some $\varepsilon > 0$. For each vertex $v$, we color $v$ with the First-Fit strategy if $v$ can be colored by some of the colors $1, \dots, n^{1-\varepsilon}$, and otherwise we must pay for new colors. But in the latter case, $v$ has neighbors of each color $1, \dots, n^{1-\varepsilon}$. Hence, $|N_S(v)| \geq n^{1-\varepsilon}$ holds for each $v \in T$, where $S$ is the vertices colored by First-Fit and $T$ is the rest of the vertices. We effectively color the vertices of $T$ using this large-degree condition. A major pitfall in designing these online algorithms is the temporal ambiguity between past and future vertices, making the proofs fragile. We bypass this issue by reducing it to the situation where $S$ is given in advance before the vertices in $T$ arrive, which we call the \emph{S-T problem}.

We take an approach that restructures Kierstead's algorithm \cite{Kie98} to obtain a ``good subset'' $S' \subseteq S$ that enables effective coloring of future vertices. We say that $S'$ is an ``$\ell$-color set'' if there \emph{exists} a $k$-coloring of $G$ in which at most $\ell$ colors appear in $S'$. The best scenario is that $S'$ is a 1-color set $(\ell = 1)$; in this case, the graph $G[N_T(S')]$ is $(k-1)$-colorable. In particular, all the ``future'' vertices that are adjacent to a vertex in $S'$ can be $(k-1)$-colored. This allows us to apply our algorithm inductively on $k$, and hence we can (inductively) color the vertices in $N_T(S')$.

As the original $S$ is a $k$-color set $(\ell = k)$, the question is how to obtain a good subset with a smaller $\ell$. For $\ell = k-1$, we pick a vertex $v_1 \in T$, and then $S_1 := N_S(v_1)$ is indeed a $(k-1)$-color set. For $\ell = k-2$, we pick two \emph{adjacent} vertices $v_2, v_3 \in T$, and then either $S_2 := N_{S_1}(v_2)$ or $S_3 := N_{S_1}(v_3)$ is indeed a $(k-2)$-color set (see \autoref{fig-614}). In general, given an $\ell$-color set $S'$, we pick a small subset $X_T \subseteq T$ such that $G[X_T]$ is non-$(k-\ell)$-colorable as a ``witness'', and then $N_{S'}(t)$ is an $(\ell-1)$-color set for at least one $t \in X_T$. In this way, going down for $\ell = k, \dots, 2$, we finally obtain a 1-color set $S'$. Then, as argued in the previous paragraph, we can color the vertices in $N_T(S') \subseteq T$ using an algorithm for $(k-1)$-colorable graphs. Since each vertex has large degree, we are guaranteed to obtain a subset that is large enough, and therefore we do not pay too many colors for coloring vertices in $T$. Also, some readers may worry that many mistakes might occur (e.g., either $S_2$ or $S_3$ may not be a $(k-2)$-color set), but we can bound the number of mistakes as long as the witness $X_T$ is small enough, and therefore do not pay many colors for mistakes.

However, we must consider the case that such a witness $X_T$ does not exist. Hence, we must build an online coloring algorithm for ``locally $(k-\ell)$-colorable graphs'' where every small (constant-order) subgraph is $(k-\ell)$-colorable. Kierstead bypassed this situation by using the entire subset as a witness, but this caused many ``mistakes'' and hit a factorial barrier on the number of colors used, with $O(n^{1-\frac{1}{k!}})$ colors.

The most important idea is to introduce the concept of \emph{level-$\ell$ sets}, a localized version of $\ell$-color sets. By proving that we can bound the size of the witness subset, we tightly control the number of spawned sub-instances (hence, unlike Kierstead's algorithm, we can bound the number of ``mistakes''). This leads to $O(n^{1-\frac{1}{k(k-1)/2+1}})$ colors for locally $k$-colorable graphs, and with this result, we achieve $O(n^{1-\frac{1}{k(k-1)/2}})$ colors for $k$-colorable graphs.

\begin{figure}[htpb]
  \centering
  \includegraphics[width=\linewidth]{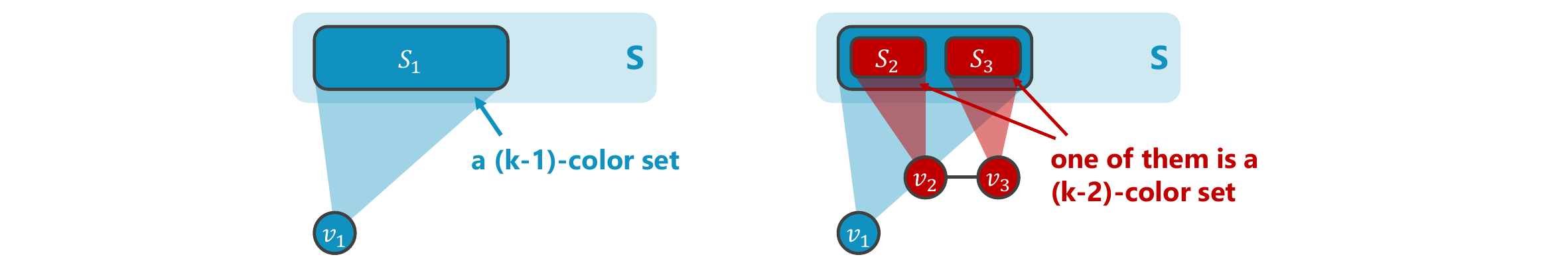}
  \caption{The creation of $(k-1)$-color and $(k-2)$-color sets. Note that the entire set $S$ is a $k$-color set.}
  \label{fig-614}
\end{figure}

\vspace{-0.5em}

\subsubsection{Algorithm for $k = 4$ (\autoref{sec:k4})}

The key to our algorithm is the use of second-neighborhood structure. If we run the algorithm above (for general $k$) with $k = 4$, we obtain 1-color set of $\Omega(n^{1/2})$ vertices, 2-color sets of $\Omega(n^{2/3})$ vertices, and 3-color sets of $\Omega(n^{5/6})$ vertices. If any of the larger sets (e.g., 3-color sets of $\Omega(n^{0.84})$ vertices) were obtained, there would be a good chance for improving the algorithm.

We find such sets with our novel \emph{double greedy method}, which applies First-Fit twice to exploit the second-neighborhood structure. Consider the following strategy for each vertex $v$: (1) if $v$ can be colored by the first $n^{5/6}$ colors, color it with the smallest one (First-Fit), (2) otherwise, if $v$ can be colored by the next $n^{5/6}$ colors, color it with the smallest one (First-Fit), and (3) otherwise, color $v$ with a different algorithm. Let $S$ be the vertices colored by the first First-Fit, let $T$ be the vertices colored by the second First-Fit, and let $U$ be the rest of the vertices.\footnote{Indeed, our algorithm only use $n^{14/17}, n^{13/17}$ colors in $S, T$, respectively, but we assume $n^{5/6}$ colors for simplicity.} We can find a 1-color set of $\Omega(n^{1/2})$ vertices in $T$, using the large degrees from $U$ to $T$. Let $T' \subseteq T$ be the obtained 1-color set. Then, $N_S(T')$ is a 3-color set of $\Omega(n)$ vertices in most cases (\autoref{fig-621} left), due to the large degrees from $T$ to $S$. This is progress toward the future vertices arriving.

However, there are some exceptional cases in which we cannot obtain a 3-color set of $\Omega(n)$ vertices; in the worst case, it can be as small as $\Theta(n^{5/6})$ vertices. This happens when there exists a dense subgraph between $S$ and $T$. In such cases, we can pick a pair of vertices $u_1, u_2 \in T$ where $Z = N_S(u_1) \cap N_S(u_2)$ is large, that is, $\Omega(n^{2/3+\varepsilon})$ vertices for some $\varepsilon > 0$ (\autoref{fig-621} right). If there exists a 4-coloring of the given graph in which distinct colors are used for $u_1$ and $u_2$, then $Z$ is indeed a 2-color set, leading to make progress (for the future vertices to arrive). We call this method \emph{Common \& Simplify technique}. With these two techniques, we eventually achieve $\widetilde{O}(n^{14/17})$ colors, improving on Kierstead's result \cite{Kie98} by a factor of $\widetilde{O}(n^{1/102})$.

\begin{figure}[ht]
  \centering
  \includegraphics[width=0.98\linewidth]{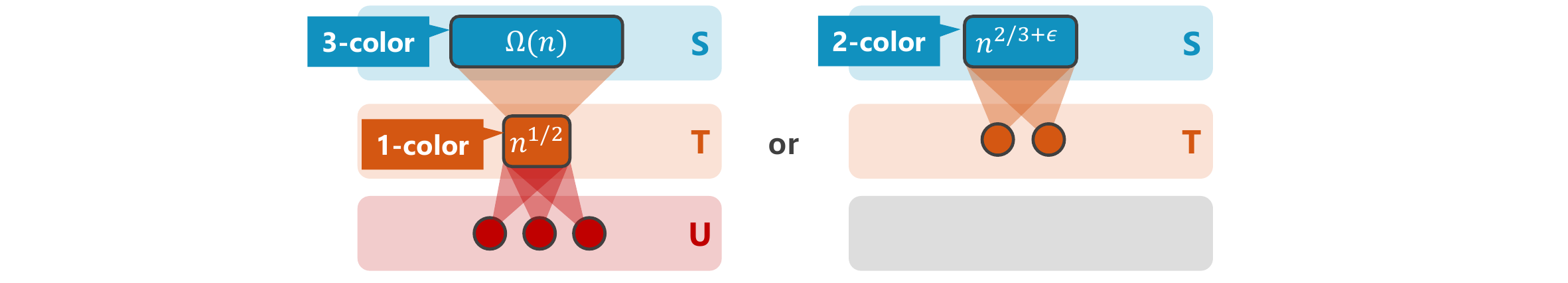}
  \caption{A sketch of our techniques in $k = 4$ case.}
  \label{fig-621}
\end{figure}

\subsubsection{Upper Bound for $k=2$ (\autoref{sec:k2-upper})}

We apply randomization to Lov\'{a}sz, Saks, Trotter's algorithm \cite{LST89} and show this leads to better performance. By analyzing how quickly the ``third color'' is forced to be used, we obtain an upper bound of $1.096 \log_2 n + O(1)$ colors. For a better result, we analyze how quickly the $(2L+1)$-th color is forced to be used for $L \geq 2$. We use a computer-aided approach based on dynamic programming. Eventually, with careful analysis up to $L = 10$, we improve the upper bound to $1.034 \log_2 n$.

\subsubsection{Lower Bound for $k=2$ (\autoref{sec:k2-lower})}

We consider a binary-tree-like graph (\autoref{fig-641}) and use the classic ``potential function'' argument. We introduce a potential function to represent the required number of colors, and show that any online algorithm would increase the potential by ``$\frac{3}{4}$ colors'' per depth, leading to a lower bound of $\frac{3}{4} \log_2 n - O(1)$ colors. For a better result, we note that when the potential increase is smaller at one depth, a larger increase typically occurs at the next depth. We analyze how potential increases occur across two consecutive depths. However, the setting is more complex and requires a computer to brute-force all the essential cases. In the end, we obtain a lower bound of $\frac{91}{96} \log_2 n - O(1)$ colors.

\begin{figure}[htpb]
  \centering
  \includegraphics[width=0.95\linewidth]{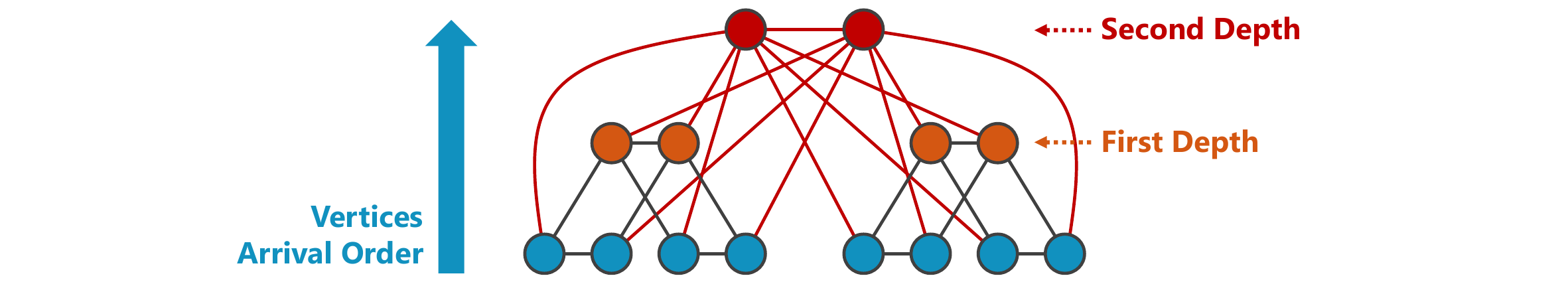}
  \caption{The input for the depth-$2$ case, which we consider for the lower bound.}
  \label{fig-641}
\end{figure}

\newpage
\section{Preliminaries}\label{sec:prelim}

\paragraph{Basic notation.}
The set of vertices adjacent to a vertex $v \in V$ is denoted by $N(v)$. The set of vertices adjacent to at least one vertex in $S \subseteq V$ is denoted by $N(S) := \bigcup_{v \in S} N(v)$. For a vertex set $X \subseteq V$, we denote $N_X(v) := N(v) \cap X$ and $N_X(S) := N(S) \cap X$. The induced subgraph of a vertex set $S \subseteq V$ is denoted by $G[S]$. The chromatic number of $G$ is denoted by $\chi(G)$.

In the $O, \Omega, \Theta$-notation, we use $\widetilde{O}(f(n)), \widetilde{\Omega}(f(n))$, and $\widetilde{\Theta}(f(n))$ to hide polylogarithmic factors in $n$. Also, we assume all logarithms are in base $2$, unless stated otherwise (i.e., $\log n := \log_2 n$).

\paragraph{Online coloring.}
We now formally define the \emph{online coloring problem}. An undirected graph $G = (V, E)$ is initially empty (i.e., $V = \emptyset$). Vertices are added to $G$ one by one, together with their incident edges. The final number of vertices $n$ is unknown until the end.

Immediately after a vertex $v$ (and its incident edges) is added, we must assign a color $c(v) \in \mathbb{N}$ to $v$. This color must differ from the colors of all its neighbors. Note that the next vertex arrives only after coloring $v$. The goal is to minimize the number of colors used, that is, $|\{c(v) : v \in V\}|$.

\paragraph{First-Fit algorithm.}
The most straightforward strategy for online coloring is to assign each new vertex the smallest available color. This strategy is called First-Fit.

\paragraph{$\ell$-color set.}
We define the notion of ``$\ell$-color set'', a concept used extensively throughout our online coloring algorithms. For a graph $G = (V, E)$ and a vertex set $S \subseteq V$, we say that ``$S$ is an $\ell$-color set in some $k$-coloring of $G$'' if:
\begin{quote}
    There exists a $k$-coloring of $G$, say $\varphi: V \to \{1, \dots, k\}$, such that $|\{\varphi(x): x \in S\}| \leq \ell$, i.e., at most $\ell$ distinct colors appear in $S$. (Note that this does not need to hold for all $k$-colorings $\varphi$.)
\end{quote}
Otherwise, we say that ``$S$ is not an $\ell$-color set in any $k$-coloring of $G$''. Let us observe that when $\ell = k$, $S$ is an $\ell$-color set in some $k$-coloring of $G$ if and only if $G$ is $k$-colorable.

\paragraph{Key inequality for analysis.}
In this paper, we sometimes use the following inequality:
\begin{lemma}
    \label{lem:prelim-ineq}
    Let $b$ be a constant with $0 \leq b \leq 1$. If non-negative real numbers $x_1, \dots, x_k \ (k \leq k_{\max})$ satisfy $x_1 + \dots + x_k \leq s_{\max}$, then $x_1^b + \dots + x_k^b \leq s_{\max}^b \cdot k_{\max}^{1-b}$.
\end{lemma}

\begin{proof}
    For completeness, we give a proof here. Given that $0 \leq b \leq 1$, the function $\varphi(x) = x^b$ is concave. By Jensen's inequality $\varphi(\mathbb{E}[X]) \geq \mathbb{E}[\varphi(X)]$ (where $X$ is a random variable), we have:
    \begin{equation*}
        \left(\frac{x_1 + \dots + x_k}{k}\right)^b \geq \frac{x_1^b + \dots + x_k^b}{k}
    \end{equation*}
    Hence, $x_1^b + \dots + x_k^b \leq (x_1 + \dots + x_k)^b \cdot k^{1-b} \ (\leq s_{\max}^b \cdot k_{\max}^{1-b})$, which proves the lemma. The equality holds if $k = k_{\max}$ and $x_1 = \dots = x_k = \frac{s_{\max}}{k_{\max}}$.
\end{proof}

\section{Online Coloring for $k$-Colorable Graphs}\label{sec:k5}

In this section, we present a deterministic online algorithm that colors any $k$-colorable graph with $\widetilde{O}(n^{1-\frac{1}{k(k-1)/2}})$ colors, for any constant $k \geq 2$. This improves the previous result of $\widetilde{O}(n^{1-\frac{1}{k!}})$ colors by Kierstead \cite{Kie98}.

Technically, we only consider the case when the final number of vertices $n$ for $G$ is known in advance. The following lemma can justify this.

\begin{lemma}
    \label{lem:k5-n-is-known}
    Let $f: \mathbb{N} \to \mathbb{R}_{\geq 0}$ be a non-decreasing unbounded function. For a specific graph class (e.g., $k$-colorable graphs), suppose there exists a deterministic online coloring algorithm for $n$-vertex graphs that uses at most $f(n)$ colors, for every $n \in \mathbb{N}$. Then there exists a deterministic online coloring algorithm for this graph class that uses at most $4f(n)$ colors, even when the final number of vertices, $n$, is unknown.
\end{lemma}

\begin{proof}
    We run the following process for $i = 0, 1, 2, \dots$:
    \begin{quote}
        Let $t_i$ be the largest integer such that $f(t_i) \leq 2^i$ (or $t_i = 0$ if no such integer exists). For the next $t_i$ vertices (or until all vertices arrive), we use an algorithm for $t_i$-vertex graphs, using completely new colors.\footnote{For example, if $f(n) = \sqrt{n}$, we use an algorithm for $1$-vertex graphs for the first $1$ vertex, then an algorithm for $4$-vertex graphs for the next $4$ vertices, then an algorithm for $16$-vertex graphs for the next $16$ vertices, and so on.}
    \end{quote}
    Let $i' := \lceil \log_2 f(n) \rceil$. Since $f(n) \leq 2^{i'}$, we have $t_{i'} \geq n$, so the process with $i > i'$ never occurs. Thus, we used at most $f(t_0) + \dots + f(t_{i'}) \leq 2^0 + \dots + 2^{i'} < 2^{i'+1} = 2^{\lceil \log_2 f(n) \rceil + 1} \leq 4f(n)$ colors.
\end{proof}

\subsection{Research Background}
\label{subsec:k5-background}

We first review Kierstead's \cite{Kie98} framework, in an original restructured way, for the techniques that achieve $\widetilde{O}(n^{2/3})$ colors for the $k = 3$ case and $\widetilde{O}(n^{5/6})$ colors for the $k = 4$ case.

We attempt to color a $k$-colorable graph online, with $\widetilde{O}(n^{1-\varepsilon})$ colors (for some $\varepsilon > 0$). The first step is to perform First-Fit using $\lceil n^{1-\varepsilon} \rceil$ colors. When a vertex $v$ arrives, if $v$ can be colored with any of the colors $1, \dots, \lceil n^{1-\varepsilon} \rceil$, we color $v$ with the smallest possible color. Otherwise, we must color $v$ using a different algorithm. In this case, $|N_{V_{\mathrm{FF}}}(v)| \geq n^{1-\varepsilon}$ holds, where $V_{\mathrm{FF}}$ is the current set of vertices colored by First-Fit. We must employ a strategy to color $v$ by exploiting the dense structure between $V_{\mathrm{FF}}$ and $V \setminus V_{\mathrm{FF}}$.

To this end, this process of coloring the remaining vertices reduces to the following ``S-T problem'', a variant of the online coloring problem. Essentially, the S-T problem corresponds to the setting in which the First-Fit vertices arrive first, and the rest of the vertices arrive later.

\begin{definition}[S-T problem]
    We initially specify a vertex set $S \ (\neq \emptyset)$ and the density parameter $\mu \geq 1$. Initially, a graph $G = (V, E)$ consists of the vertex set $S$ and no edges. Vertices are added to $G$ one by one, together with their incident edges. We denote the current set of added vertices by $T := V \setminus S$. Each added vertex $t \in T$ satisfies $|N_S(t)| \geq \frac{|S|}{\mu}$. Immediately after a vertex $t \in T$ (and its incident edges) is added, we must assign its color $c(t) \in \mathbb{N}$. This color must differ from the colors of all its neighbors in $T$. Note that in the S-T problem, the vertices in $S$ will not be colored. See \autoref{fig:s-t-problem}.
\end{definition}

\begin{figure}[t]
    \centering
    \includegraphics[width=0.9\linewidth]{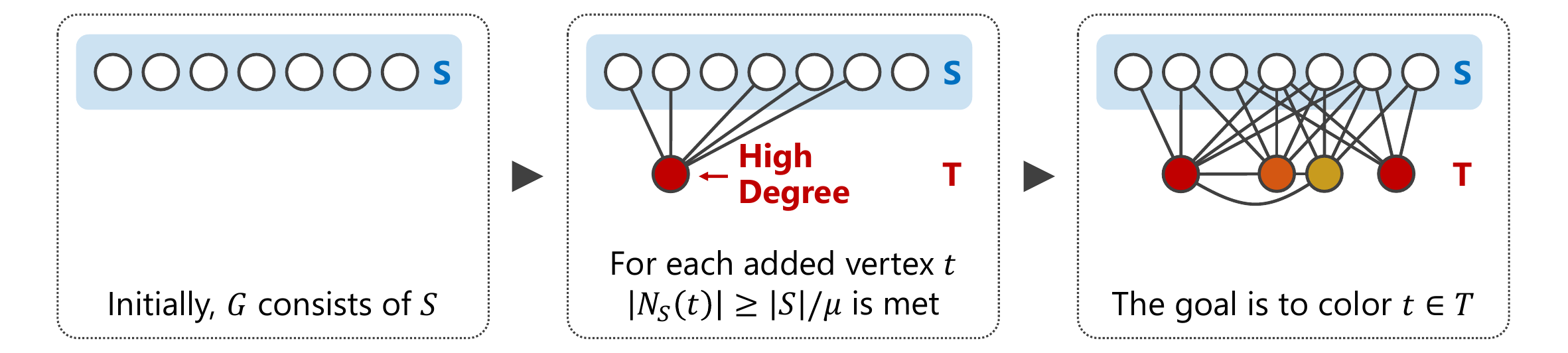}
    \caption{The sketch of the S-T problem. The vertex set $S$, uncolored, is given first, and each new vertex $t \in T$ satisfies $|N_S(t)| \geq \frac{|S|}{\mu}$.}
    \label{fig:s-t-problem}
\end{figure}

\paragraph{On the S-T problem.}
The S-T problem must be treated as an independent problem, not as a ``subroutine'' for the online coloring. It lies in parallel with the online coloring problem; e.g., there are no constraints on the colorability of $G$ by default. Therefore, like we consider online coloring for $k$-colorable graphs, we also consider the S-T problem for $k$-colorable graphs. We solve the original online coloring problem by reducing it to the S-T problem with $|S| = n, |T| \leq n, \mu = n^{\varepsilon}$. We explain the reduction in \autoref{subsec:k5-st-reduction}, which is non-trivial but relatively straightforward.

\paragraph{Using the concept of $\ell$-color set.}
Now, our main focus is to solve the S-T problem for $k$-colorable graphs. Kierstead's idea was to reduce it to an easier family of subproblems: the case of the S-T problem where $S$ is guaranteed to be an $\ell$-color set in some $k$-coloring of $G$. Note that the $\ell = k$ case is the original S-T problem.

Suppose we have a vertex $t' \in T$, and let $S' := N_S(t')$. Then, $S'$ is a $(k-1)$-color set in some $k$-coloring of $G$. Therefore, if a vertex $t \in T$ with sufficiently many neighbors in $S'$ arrives in the future, we can color $t$ using an algorithm for the S-T problem between $S'$ and the set of such vertices $t$, which corresponds to the $\ell = k-1$ case (see \autoref{fig:reduction-example}). 

More specifically, we will solve the $\ell = k-1$ case using the algorithm for $\ell = k-2$, and so on, ultimately reducing to the $\ell = 1$ case. For the $\ell = 1$ case, where $S$ is guaranteed to be a $1$-color set, clearly $G[T]$ is $(k-1)$-colorable. Therefore, in general, we reduce the original problem to ``online coloring for $(k-1)$-colorable graphs'', which can be solved inductively on $k$.

\begin{figure}[t]
    \centering
    \includegraphics[width=0.9\linewidth]{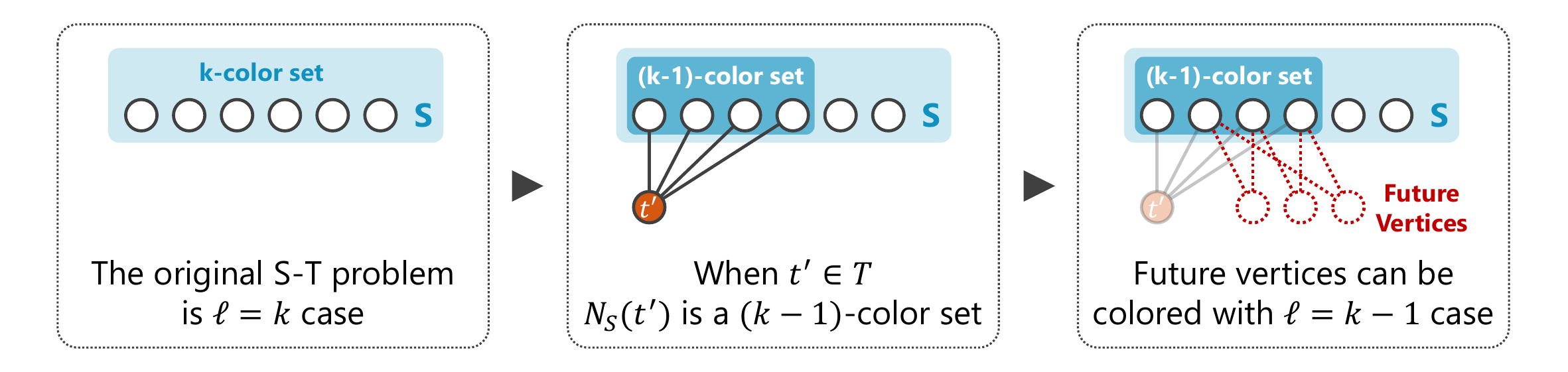}
    \caption{The sketch of the procedure when we reduce the $\ell = k$ problem to the $\ell = k-1$ problem. Note that in the $\ell = t$ case, $S$ is a $t$-color set.}
    \label{fig:reduction-example}
\end{figure}

\paragraph{Locally $k$-colorable graphs.}
In Kierstead's framework, it is necessary to extract a small non-$(k-\ell)$-colorable subgraph from $T$, in order to reduce the $\ell$-color set case to the $(\ell-1)$-color set case. Hence, we need to construct an online coloring algorithm for graphs with no such small subgraphs, which are defined below as \emph{locally $(k-\ell)$-colorable graphs}.

\begin{definition}
    A graph $G = (V, E)$ is locally $k$-colorable if $\chi(G[X]) \leq k$ for all $X \subseteq V$ such that $|X| < 2^{2^k}$.
\end{definition}

Note that locally $k$-colorable graphs are not necessarily $k$-colorable. Indeed, $\chi(G)$ can be arbitrarily large compared to $k$, even for $k = 2$; for example, there exists a graph with arbitrarily large girth and arbitrarily large chromatic number, by a classical result of Erd\H{o}s \cite{Erdos1959}.

For $k = 1$, the graph has no edges, so we can color it online with one color. For $k = 2$, only $O(n^{1/2})$ colors are needed, as shown by the following stronger theorem of Kierstead \cite{Kie98}.

\begin{theorem}[\cite{Kie98}]
    \label{thm:kierstead-2good}
    There is a deterministic online algorithm that colors any graph that contains neither $3$-cycles nor $5$-cycles using only $O(n^{1/2})$ colors.
\end{theorem}

When $k \geq 3$, in the original problem for $k$-colorable graphs, Kierstead could achieve $\widetilde{O}(n^{2/3})$ colors for $k = 3$, $\widetilde{O}(n^{5/6})$ colors for $k = 4$, but did not obtain a comparable result for $k = 5$ (he barely achieved $\widetilde{O}(n^{119/120})$ colors).

\subsection{New Concept: Level-$\ell$ Set}

In this section, we devise an algorithm to color locally $k$-colorable graphs online with $O(n^{1-\frac{1}{k(k-1)/2+1}})$ colors. The starting point is to consider whether we can use a framework similar to that in \autoref{subsec:k5-background}. To adapt this framework, the most important idea is to define the ``locally'' version of $\ell$-color set. We define the corresponding concept, \emph{level-$\ell$ set}, as follows.

\begin{definition}
    For a graph $G = (V, E)$ and a vertex set $S \subseteq V$, we say that ``$S$ is a level-$\ell$ set of $G$'' (in the context of locally $k$-colorable graphs) if:
    \begin{quote}
        For all $X \subseteq V$ such that $|X| < 2^{2^k - 2^{k-\ell} + 1}$, $X \cap S$ is an $\ell$-color set in some $k$-coloring of $G[X]$.
    \end{quote}
    Let us observe that when $\ell = k$, $S$ is a level-$\ell$ set of $G$ if and only if $G$ is locally $k$-colorable.
\end{definition}

Then we can consider a family of subproblems: the S-T problem, where $S$ is guaranteed to be a level-$\ell$ set of $G$. The S-T problem for locally $k$-colorable graphs is equivalent to the $\ell = k$ case. We reduce this problem to the $\ell = k-1$ case, then $\ell = k-2$, ..., and finally to $\ell = 1$. For the $\ell = 1$ case, $G[T]$ is locally $(k-1)$-colorable (as we prove in \autoref{lem:k5-base}). Therefore, in general, we reduce the original problem to ``online coloring for locally $(k-1)$-colorable graphs'', which can be shown inductively on $k$. We explain the details later, but this is the general strategy to solve the problem.

\paragraph{Remarks on thresholds.}
The reader may wonder why the size thresholds for locally $k$-colorable graphs and level-$\ell$ sets are double-exponential. As we detail in \autoref{subsec:k5-remarks}, when local colorability fails, our algorithm recursively extracts uncolorable witness subgraphs. This double-exponential limit is the strict combinatorial ceiling required to absorb the cascading union bounds of these recursive extractions (we calculate the exact maximum witness sizes in \autoref{tab:k5-threshold}).

\subsection{The Big Picture}

In order to obtain an algorithm, we formally define the problems we need to solve.

\begin{itemize}
    \item Problem $\textsf{LOC}(k)$: The online coloring problem where, if $G$ is no longer locally $k$-colorable (i.e., we find $X \subseteq V$ such that $|X| < 2^{2^k}, \chi(G[X]) > k$), we can terminate the problem by outputting $X$ as a witness.
    \item Problem $\textsf{ST}(k)$: The S-T problem (with density parameter $\mu$) where, if $G$ is no longer locally $k$-colorable (i.e., we find $X \subseteq V$ such that $|X| < 2^{2^k}, \chi(G[X]) > k$), we can terminate the problem by outputting $X$ as a witness. We also denote this problem by $\textsf{ST}(k, \mu)$.
    \item Problem $\textsf{SP}(k, \ell)$: The S-T problem (with density parameter $\mu$) where, if $S$ is no longer a level-$\ell$ set of $G$ (i.e., we find $X \subseteq V$ such that $|X| < 2^{2^k - 2^{k-\ell} + 1}$ and $X \cap S$ is not an $\ell$-color set in any $k$-coloring of $G[X]$), we can terminate the problem by outputting $X$ as a witness. We also denote this problem by $\textsf{SP}(k, \ell, \mu)$.
\end{itemize}

Essentially, $\textsf{LOC}(k)$ is the online coloring problem for locally $k$-colorable graphs, $\textsf{ST}(k)$ is the S-T problem for locally $k$-colorable graphs, and $\textsf{SP}(k, \ell)$ is the S-T problem where $S$ is guaranteed to be a level-$\ell$ set of $G$. However, for our purpose, as given above, we allow inputs that violate these assumptions. To handle such cases, we introduce exception handling: ``output $X$ as a witness and terminate the problem''. \textbf{Even in this case, we need to color the last vertex we receive.}

Additionally, we define the problem $\textsf{LOC}'(k)$ to be the version of $\textsf{LOC}(k)$ where the final number of vertices $n$ is known in advance. By \autoref{lem:k5-n-is-known}, $\textsf{LOC}'(k)$ is equally difficult as $\textsf{LOC}(k)$.

\paragraph{The proof roadmap.}
The goal is to solve $\textsf{LOC}(k)$ deterministically for constant $k \geq 2$. As a premise, we assume that the easier problems, $\textsf{LOC}(0), \dots, \textsf{LOC}(k-1)$, can all be solved with $O(n^{1-\alpha})$ colors ($0 \leq \alpha \leq 1$). The proof roadmap, as depicted in \autoref{fig:loc-roadmap}, is to design an algorithm to solve $\textsf{SP}(k, 1)$ first, then $\textsf{SP}(k, 2), \dots, \textsf{SP}(k, k)$, and finally $\textsf{LOC}(k)$.

\begin{figure}[b]
    \centering
    \includegraphics[width=0.9\linewidth]{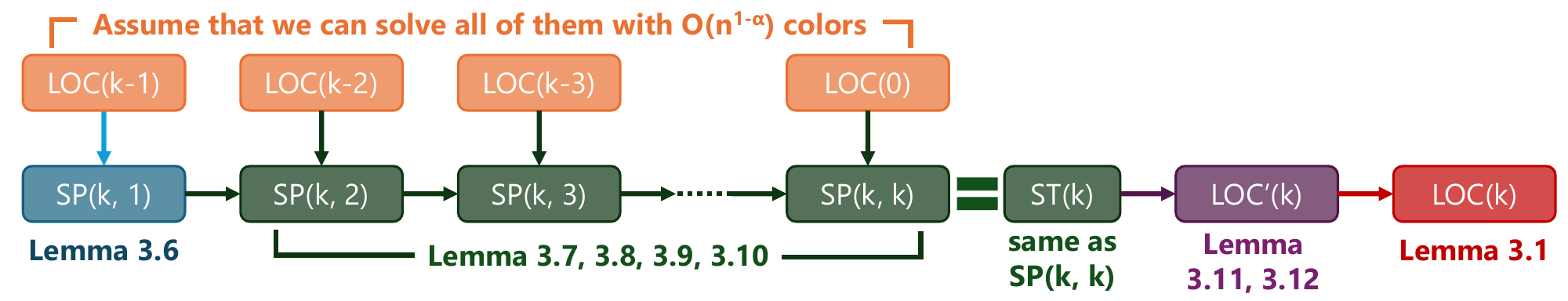}
    \caption{The roadmap for solving $\textsf{LOC}(k)$; e.g., we solve $\textsf{SP}(k, 2)$ using $\textsf{LOC}(k-2)$ and $\textsf{SP}(k, 1)$.}
    \label{fig:loc-roadmap}
\end{figure}

We will eventually achieve $O(|T|^{1-\alpha} \mu^{(\ell-1)\alpha})$ colors for $\textsf{SP}(k, \ell, \mu)$, and $O(n^{1-\frac{\alpha}{1+(k-1)\alpha}})$ colors for $\textsf{LOC}(k)$. Once these results are shown, given that $\textsf{LOC}(0), \textsf{LOC}(1)$ can be solved with $O(1)$ colors, we can calculate that $\textsf{LOC}(2)$ can be solved in $O(n^{1/2})$ colors (by $\alpha = 1$), $\textsf{LOC}(3)$ in $O(n^{3/4})$ colors (by $\alpha = \frac{1}{2}$), ..., and $\textsf{LOC}(k)$ in $O(n^{1-\frac{1}{k(k-1)/2+1}})$ colors for any constant $k \geq 2$. Note that since every $k$-colorable graph is locally $k$-colorable, we also achieve $O(n^{1-\frac{1}{k(k-1)/2+1}})$ colors for online coloring of $k$-colorable graphs (slightly worse than our best result).

\paragraph{Important remarks.}
Each problem listed above will be solved by creating instances $I$ of another problem. We denote the graph $G$ and the vertex sets $V, S, T$ for this instance as $G(I), V(I), S(I), T(I)$.

\subsection{Solution for $\textsf{SP}(k, 1)$}

We first solve $\textsf{SP}(k, 1)$. The idea is simply to color the vertices in $T$ using $\textsf{LOC}(k-1)$.

\begin{lemma}
    \label{lem:k5-base}
    $\textnormal{\textsf{SP}}(k, 1)$ can be solved with $O(|T|^{1-\alpha})$ colors.
\end{lemma}

\begin{proof}
    We color the vertices in $T$ by running $\textsf{LOC}(k-1)$. When $\textsf{LOC}(k-1)$ terminates by outputting $X_T \subseteq T$, it satisfies $|X_T| < 2^{2^{k-1}}, \chi(G[X_T]) > k-1$. For each $t \in X_T$, let $\mathrm{up}(t)$ be one of the vertices in $N_S(t)$. This can be defined because $|N_S(t)| \geq \frac{|S|}{\mu} > 0$. Let $X_S := \{\mathrm{up}(t) : t \in X_T\}$.

    If there exists a $k$-coloring of $G[X_S \cup X_T]$ in which all vertices in $X_S$ have the same color, then the vertices in $X_T$ must have the remaining $k-1$ colors. However, this contradicts $\chi(G[X_T]) > k-1$. Therefore, $X_S$ is not a $1$-color set in any $k$-coloring of $G[X_S \cup X_T]$. Hence, we can output $X := X_S \cup X_T$ and terminate $\textsf{SP}(k, 1)$. The size $|X| \leq 2 |X_T| < 2^{2^{k-1}+1}$ also satisfies the requirement. Since $\textsf{LOC}(k-1)$ uses $O(n^{1-\alpha})$ colors, we only use $O(|T|^{1-\alpha})$ colors to solve $\textsf{SP}(k, 1)$.
\end{proof}

\subsection{Solution for $\textsf{SP}(k, \ell)$ for $\ell \geq 2$: Algorithm}

Next, we describe an algorithm for solving $\textsf{SP}(k, \ell, \mu)$ where $\ell \in \{2, \dots, k\}$.

\paragraph{Overview.}
We assume that we already have a solution for $\textsf{SP}(k, \ell-1)$. First, we color the vertices in $T$ by running $\textsf{LOC}(k-\ell)$. The key intuition is as follows:

\begin{quote}
    Suppose we find a small subset $X_T \subseteq T$ such that $\chi(G[X_T]) > k-\ell$. Then, as long as $S$ itself is a level-$\ell$ set, the subset $N_S(t)$ forms a level-$(\ell-1)$ set for at least one vertex $t \in X_T$. These subsets are easier to handle (inductively on $\ell$) and lead to some profit for future vertices.
\end{quote}

When $\textsf{LOC}(k-\ell)$ terminates by outputting $X_T \subseteq T$, then for each vertex $t' \in X_T$, we simultaneously initiate a new instance $I$ of the problem $\textsf{SP}(k, \ell-1, 2\mu)$ where $S(I) := N_S(t')$. We use completely disjoint color sets for these instances. Note that we set the density parameter to $2\mu$ for these instances. Here, we create $|X_T| < 2^{2^{k-\ell}}$ instances, which is $O(1)$ (and hence we only pay an $O(1)$ factor for the number of colors used). These instances are used for processing future vertices: when a new vertex $t \in T$ arrives and satisfies $|N_{S(I)}(t)| \geq \frac{|S(I)|}{2\mu}$, we can color $t$ using the instance $I$. Importantly, as long as $S$ is a level-$\ell$ set of $G$, at least one instance $I$ survives forever (\autoref{lem:k5-sp-output}). We also initiate a new instance of $\textsf{LOC}(k-\ell)$ for coloring the rest of the vertices in $T$, using completely different colors. If the new $\textsf{LOC}(k-\ell)$ also terminates, we do the same procedure again. The key is that this procedure occurs at most $2\mu$ times (\autoref{lem:k5-sp-size}). This is why we can solve the problem more effectively when $\mu$ is small.

Overall, when a new vertex $t \in T$ arrives, we color $t$ using an ``active'' instance (which is not yet terminated) of $\textsf{SP}(k, \ell-1, 2\mu)$ if possible, or using $\textsf{LOC}(k-\ell)$ otherwise; see \autoref{fig:k5-sp-procedure}.

\begin{figure}[htbp]
  \centering
  \includegraphics[width=\linewidth]{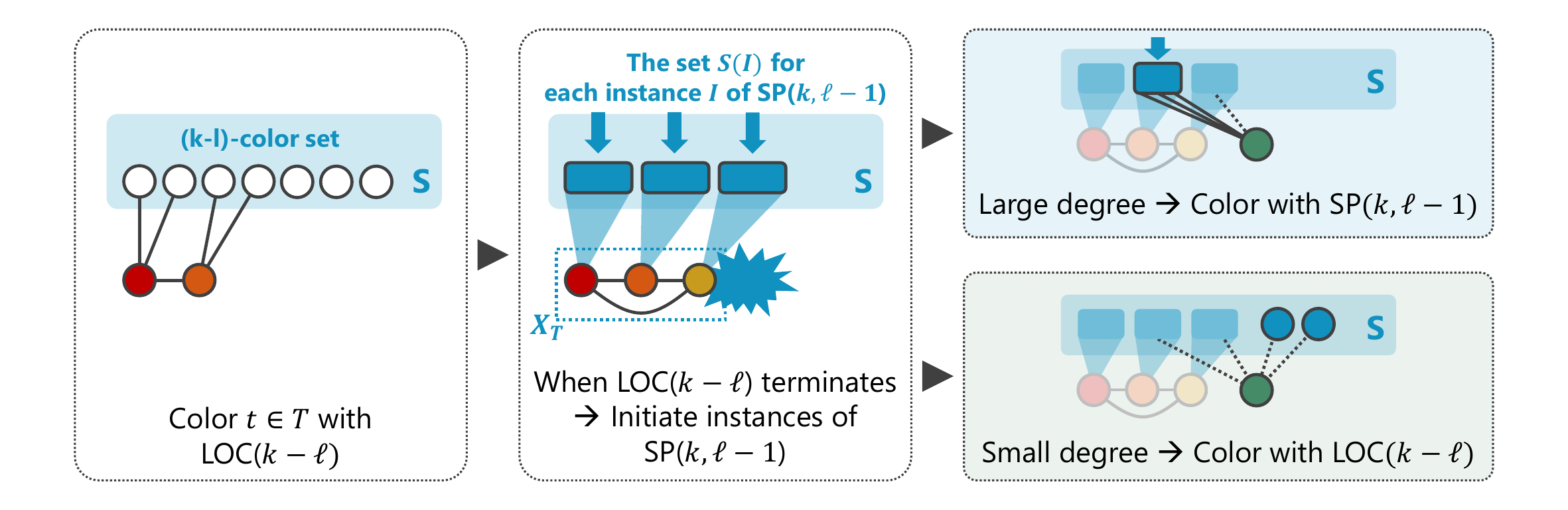}
  \caption{The sketch of the procedure when $\ell = k-2$. In the middle and the right figure, each big rounded blue square in $S$ is the set of adjacent vertices for each $t \in X_T$. Note that when $\ell = k-2$ and $\textsf{LOC}(k-\ell)$ terminates, the set $X_T$ is not 2-colorable.} \label{fig:k5-sp-procedure}
\end{figure}

\begin{figure}[t]
    \centering
    \includegraphics[width=1\linewidth]{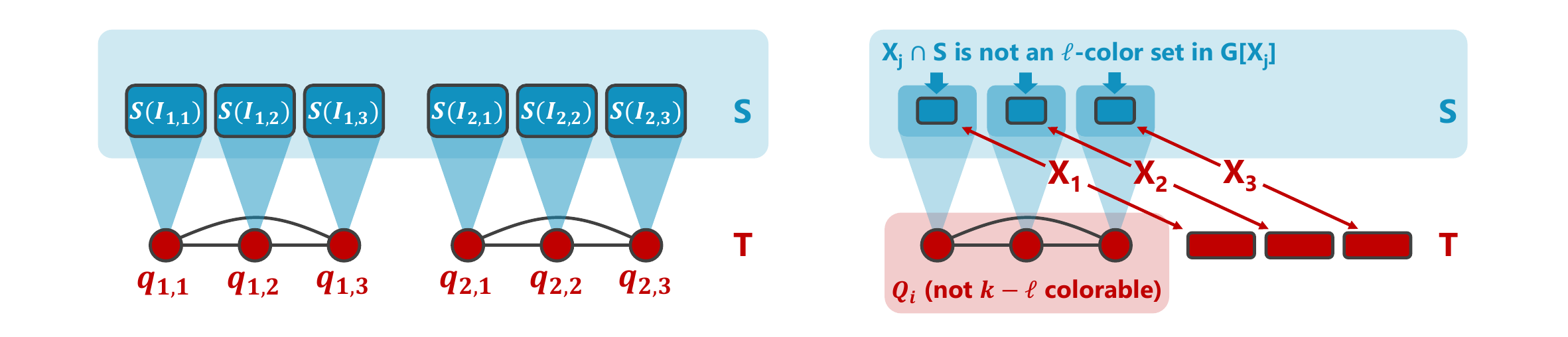}
    \caption{Left: an example of relationships between $Q_i$ and $S(I_{i,j})$. Right: an example of the graph $G[X]$ for the output $X$ when $\textsf{SP}(k, \ell)$ terminates, for $k-\ell = 2$. The output set $X$ consists all of the dark (red and dark blue) vertices.}
    \label{fig:k5-sp-output}
\end{figure}

\paragraph{Implementation.}
We denote the created instances of $\textsf{LOC}(k-\ell)$ by $I^{\textsf{LOC}}_1, I^{\textsf{LOC}}_2, \dots$, in the chronological order. When the $r$-th instance $I^{\textsf{LOC}}_r$ terminates, we let $Q_r = \{q_{r,1}, \dots, q_{r,|Q_r|}\}$ for its output, and $I_{r,1}, \dots, I_{r,|Q_r|}$ for the instances of $\textsf{SP}(k, \ell-1, 2\mu)$ created at that time, so that $S(I_{r,j}) = N_S(q_{r,j})$ for each $j$; see \autoref{fig:k5-sp-output} left (note that the $(r+1)$-th instance $I^{\textsf{LOC}}_{r+1}$ is created after this event). \textbf{We use completely disjoint colors for all of these instances.}

The algorithm for solving $\textsf{SP}(k, \ell, \mu)$ starts by initiating the instance $I^{\textsf{LOC}}_1$. Whenever a vertex $t \in T$ arrives, we do the following procedure:
\begin{enumerate}
    \item If there exists an active instance $I_{i,j}$ (i.e., not terminated) such that $|N_{S(I_{i,j})}(t)| \geq \frac{|S(I_{i,j})|}{2\mu}$:
    \begin{enumerate}
        \item We choose one such instance $I_{i,j}$ arbitrarily, and color $t$ using this instance $I_{i,j}$ (line 4). Specifically, we add $t$ (and its incident edges toward $S(I_{i,j}) \cup T(I_{i,j})$) to $I_{i,j}$, and let it color (using our algorithm for $\textsf{SP}(k, \ell-1)$). Back in the original problem, we color $t$ according to its color in $I_{i,j}$.
        \item If $I_{i,1}, \dots, I_{i,|Q_i|}$ are all terminated (i.e., $I_{i,j}$ terminates as a result of step 1 (a) and this was the last active instance among them), then $S$ is no longer a level-$\ell$ set of $G$. We output $X := X_1 \cup \dots \cup X_{|Q_i|} \cup Q_i$ and terminate the original problem $\textsf{SP}(k, \ell, \mu)$, where $X_j$ is the output set of $I_{i,j}$ (lines 5--7); see \autoref{fig:k5-sp-output} right.
    \end{enumerate}
    \item Otherwise, we color $t$ using instance $I^{\textsf{LOC}}_r$, where $I^{\textsf{LOC}}_r$ is the currently running instance of $\textsf{LOC}(k-\ell)$ (line 9). Specifically, we add $t$ (and its incident edges toward $V(I^{\textsf{LOC}}_r)$) to $I^{\textsf{LOC}}_r$, and let it color (using our algorithm for $\textsf{LOC}(k-\ell)$). If $I^{\textsf{LOC}}_r$ terminates as a result, let $Q_r = \{q_{r,1}, \dots, q_{r,|Q_r|}\}$ be its output. Then, for each $j = 1, \dots, |Q_r|$, we initiate an instance $I_{r,j}$ of $\textsf{SP}(k, \ell-1, 2\mu)$ so that $S(I_{r,j}) := N_S(q_{r,j})$ (line 11). We also initiate a new instance $I^{\textsf{LOC}}_{r+1}$ of $\textsf{LOC}(k-\ell)$, which increments $r$ by $1$ (line 12).
\end{enumerate}
The pseudocode of this algorithm is given in \autoref{alg:k5-sp}.

\begin{algorithm}[t]
\caption{An algorithm for solving $\textsf{SP}(k, \ell, \mu)$ where $\ell \in \{2, \dots, k\}$}
\begin{algorithmic}[1]
    \State $I^{\mathrm{LOC}}_1 \gets$ a new instance of $\textsf{LOC}(k-\ell)$, and $r \gets 1$
    \For {each arrival of $t \in T$}
        \If {there exists an active instance $I_{i,j}$ such that $|N_{S(I_{i,j})}(t)| \geq \frac{|S(I_{i,j})|}{2\mu}$}
            \State color $t$ using $I_{i,j}$
            \If {$I_{i,1}, \dots, I_{i,|Q_i|}$ are all terminated}
                \State $X_j \gets$ the output set for $I_{i, j}$, for $j = 1, \dots, |Q_i|$
                \State \Return $X \gets (X_1 \cup \cdots \cup X_{|Q_i|}) \cup Q_i$ and terminate $\textsf{SP}(k, \ell, \mu)$
            \EndIf
        \Else
            \State color $t$ using $I^{\mathrm{LOC}}_r$
            \If {$I^{\mathrm{LOC}}_r$ terminates by outputting $Q_r = \{q_{r,1}, \dots, q_{r,|Q_r|}\} \subseteq T$}
                \State $I_{r,j} \gets$ a new instance of $\textsf{SP}(k, \ell-1, 2\mu)$ where $S(I_{r,j}) := N_S(q_{r,j})$, for $j = 1, \dots, |Q_r|$
                \State $I^{\mathrm{LOC}}_{r+1} \gets$ a new instance of $\textsf{LOC}(k-\ell)$, and $r \gets r+1$
            \EndIf
        \EndIf
    \EndFor
\end{algorithmic}
\label{alg:k5-sp}
\end{algorithm}

\subsection{Solution for $\textsf{SP}(k, \ell)$ for $\ell \geq 2$: Correctness and Analysis}

First, we prove that the algorithm for $\textsf{SP}(k, \ell)$ works correctly, that is, the output set $X$ satisfies the conditions.

\begin{lemma}
    \label{lem:k5-sp-output}
    If instances $I_{i,1}, \dots, I_{i,|Q_i|}$ are all terminated, the output set $X := X_1 \cup \dots \cup X_{|Q_i|} \cup Q_i$ satisfies the requirements.
\end{lemma}

\begin{proof}
    Suppose that $X \cap S$ is an $\ell$-color set in some $k$-coloring of $G[X]$. Then, there exists a $k$-coloring of $G[X]$, say $\varphi: X \to \{1, \dots, k\}$, such that $\varphi(s) \in \{1, \dots, \ell\}$ for all $s \in X \cap S$. By the assumption of $\textsf{LOC}(k-\ell)$, we have $\chi(G[Q_i]) > k-\ell$, so there exists $q_{i, j} \in Q_i$ such that $\varphi(q_{i,j}) \in \{1, \dots, \ell\}$.

    For such $q_{i,j}$, each $s \in X_j \cap S$ has color $\varphi(s) \in \{1, \dots, \ell\} \setminus \{\varphi(q_{i,j})\}$, because $s$ is adjacent to $q_{i,j}$ (since $s \in X_j \cap S \subseteq S(I_{i,j}) = N_S(q_{i,j})$). Hence, $X_j \cap S$ is an $(\ell-1)$-color set in some $k$-coloring of $G[X_j]$. However, this contradicts the assumption of $\textsf{SP}(k, \ell-1)$. Therefore, $X \cap S$ is not an $\ell$-color set in any $k$-coloring of $G[X]$.

    Next, we show that the size of $X$ satisfies the requirement. We have the following:
    \begin{equation*}
        |X| \leq (|X_1| + \cdots + |X_{|Q_i|}|) + |Q_i| \leq \left(2^{2^k-2^{k-\ell+1}+1}-1\right) \cdot \left(2^{2^{k-\ell}}-1\right) + \left(2^{2^{k-\ell}}-1\right) < 2^{2^k-2^{k-\ell}+1}
    \end{equation*}
    Note that $|X_j| < 2^{2^k-2^{k-\ell+1}+1}$ and $|Q_i| < 2^{2^{k-\ell}}$ follow from the assumptions for $\textsf{SP}(k, \ell-1)$ and $\textsf{LOC}(k-\ell)$, respectively.
\end{proof}

Next, we analyze the number of colors used in this procedure. The important point is to see how many times the instances of $\textsf{SP}(k, \ell-1)$ are created. The following lemma is the key to bounding this number.

\begin{lemma}
    \label{lem:k5-sets}
    Let $\mu \geq 1$ be a real number. Let $A_1, \dots, A_m \subseteq S$ be sets such that $|A_i| \geq \frac{|S|}{\mu}$ for each $i$, and $|A_i \cap A_j| \leq \frac{|A_i|}{2\mu}$ for each $i, j$ with $i < j$. Then, $m < 2\mu$ holds.
\end{lemma}

\begin{proof}
    We analyze the size of $A_1 \cup \dots \cup A_t$ for an integer $t \leq m$. If $1 - \frac{t-1}{2\mu} \geq 0$, then we have:
    \begin{multline*}
        |A_1 \cup \dots \cup A_t| = \sum_{i=1}^t |A_i \setminus (A_1 \cup \dots \cup A_{i-1})| \geq \sum_{i=1}^t (|A_i| - |A_i \cap A_1| - \dots - |A_i \cap A_{i-1}|) \\
        \geq \sum_{i=1}^t \left(|A_i| - \frac{|A_1|}{2\mu} - \dots - \frac{|A_{i-1}|}{2\mu}\right) \geq \sum_{i=1}^t \left(1-\frac{t-i}{2\mu}\right) |A_i| \geq \sum_{i=1}^t \left(1-\frac{t-i}{2\mu}\right) \frac{|S|}{\mu} = \frac{t(4\mu-t+1)}{4\mu^2} |S|
    \end{multline*}
    By substituting $t = \lceil 2\mu \rceil$ (which is valid when $m \geq 2\mu$), we have $|A_1 \cup \dots \cup A_t| > |S|$. This is a contradiction. Therefore, $m < 2\mu$ holds.
\end{proof}

\begin{lemma}
    \label{lem:k5-sp-size}
    $r < 2\mu + 1$.
\end{lemma}

\begin{proof}
    While the algorithm for $\textsf{SP}(k, \ell)$ has not yet terminated, for each $i = 1, \dots, r-1$, there exists an active instance $I_{i,t_i}$ for some $t_i$. Let $A_i := S(I_{i,t_i}) \ (= N_S(q_{i,t_i}))$.
    
    For each $i$, we have $|A_i| \geq \frac{|S|}{\mu}$, by the input condition for $\textsf{SP}(k, \ell, \mu)$. Also, for each $i, j \ (i < j)$, we have $|A_i \cap A_j| \leq \frac{|A_i|}{2\mu}$. This is because, given that the vertex $q_{j,t_j}$ arrives after the instance $I_{i,t_i}$ is created, if $|A_i \cap A_j| > \frac{|A_i|}{2\mu}$, then $q_{j,t_j}$ should have been colored using the instance $I_{i,t_i}$, which is a contradiction. Therefore, by applying \autoref{lem:k5-sets}, we obtain $r < 2\mu+1$.
\end{proof}

\begin{lemma}
    \label{lem:k5-sp}
    We can solve $\textnormal{\textsf{SP}}(k, \ell)$ with $O(|T|^{1-\alpha} \mu^{(\ell-1)\alpha})$ colors, for $\ell = 1, \dots, k$.
\end{lemma}

\begin{proof}
    We prove by induction on $\ell$. The $\ell = 1$ case is already shown in \autoref{lem:k5-base}. For $\ell \geq 2$, when solving $\textsf{SP}(k, \ell)$, the number of instances of $\textsf{SP}(k, \ell-1)$ we have created is at most $(r-1) \cdot \max_i |Q_i| \leq 2\mu \cdot 2^{2^{k-\ell}} = O(\mu)$. By the induction hypothesis, each instance $I_{i,j}$ uses $O(|T(I_{i,j})|^{1-\alpha} (2\mu)^{(\ell-2)\alpha})$ colors. The sum of these values is bounded by $O(|T|^{1-\alpha} \mu^{(\ell-1)\alpha})$, by $\sum_{i,j} |T(I_{i,j})| \leq |T|$ and \autoref{lem:prelim-ineq}. Also, the number of instances of $\textsf{LOC}(k-\ell)$ we have created is at most $r < 2\mu+1$. Since $\textsf{LOC}(k-\ell)$ can be solved with $O(n^{1-\alpha})$ colors, the total number of colors used here is $O(|T|^{1-\alpha} \mu^{\alpha})$, by the same argument. Therefore, we only use $O(|T|^{1-\alpha} \mu^{(\ell-1)\alpha})$ colors in total to solve $\textsf{SP}(k, \ell)$.
\end{proof}

\subsection{From the S-T Problem to Online Coloring}
\label{subsec:k5-st-reduction}

The remaining task is to adapt the framework from the S-T problem to the online coloring problem. To solve $\textsf{LOC}'(k)$, we set the parameter $\mu \geq 1$ and apply the First-Fit strategy using the first $\lceil \frac{n}{\mu} \rceil$ colors. We use the S-T problem framework to color the vertices that are not colored by First-Fit. In this subsection, we formally describe the algorithm to solve $\textsf{LOC}'(k)$ for $k \geq 2$.

First, immediately after $\textsf{LOC}'(k)$ starts, we create an instance $I$ of the problem $\textsf{ST}(k, \mu)$; we prepare the ``auxiliary'' vertices $s_1, \dots, s_n$, and set $S(I) := \{s_1, \dots, s_n\}$. We denote the vertices that will arrive at $T(I)$ by $t_1, t_2, \dots$, in the arrival order. The vertices in $\textsf{LOC}'(k)$ and in $\textsf{ST}(k, \mu)$ are different, but they are corresponding. We denote the current correspondence by $\sigma: V \to \{s_1, \dots, s_n, t_1, t_2, \dots\}$. After that, for each arrival of $v \in V$ in $\textsf{LOC}'(k)$, we do the following:

\begin{enumerate}
    \item If $v$ can be colored by colors $1, \dots, \lceil \frac{n}{\mu} \rceil$, we color $v$ with the smallest available color (First-Fit). Then, let $\sigma(v) := s_i$, where $v$ is the $i$-th vertex colored by First-Fit.
    \item Otherwise, let $\sigma(v) := t_i$, where $v$ is the $i$-th vertex not colored by First-Fit. Then:
    \begin{itemize}
        \item Add vertex $t_i$ to the instance $I$, along with edge $\{t_i, \sigma(v')\}$ for each $v' \in N(v)$ (the vertices adjacent to $v$ in the original problem), and let it color (using our algorithm for $\textsf{ST}(k)$, which is equivalent to $\textsf{SP}(k, k)$). This is a valid input to $I$; $|N_S(t_i)| \geq \frac{|S|}{\mu}$ must hold, because $v$ is adjacent to at least $\frac{n}{\mu}$ First-Fit vertices.
        \item We color $v$ with color $\lceil \frac{n}{\mu} \rceil + c$, where $c$ is the color of vertex $t_i$ in the instance $I$.
    \end{itemize}
    \item If $I$ terminates by outputting $X_I \subseteq S(I) \cup T(I)$, we terminate $\textsf{LOC}'(k)$ by outputting $X := \{u \in V : \sigma(u) \in X_I\}$.
\end{enumerate}

Note that, for each $u, v \in V$, the condition that edge $\{\sigma(u), \sigma(v)\}$ is added to $I$ is not only because $\{u, v\} \in E$, but also because the vertex which arrives later among $u$ and $v$ is not colored by First-Fit. This means that some edges are deleted when they are added to $I$. Next, we prove the correctness.

\begin{lemma}
    \label{lem:k5-loc-output}
    The algorithm for $\textnormal{\textsf{LOC}}'(k)$ colors the graph correctly, and the output $X$ satisfies the condition.
\end{lemma}

\begin{proof}
    We prove that two adjacent vertices $ u$ and $ v$ are colored with different colors. Let $V_{\mathrm{FF}}$ be the set of vertices colored by First-Fit.
    \begin{itemize}
        \item Case $u, v \in V_{\mathrm{FF}}$: Suppose $u$ comes before $v$. When we color $v$ with First-Fit, we choose a color different from $u$'s.
        \item Case $u, v \notin V_{\mathrm{FF}}$: In $I$, $\sigma(u)$ and $\sigma(v)$ must be colored by different colors, because edge $\{\sigma(u), \sigma(v)\}$ is added to $I$. Therefore, $u$ and $v$ are colored with different colors.
        \item Otherwise: The vertices in $V_{\mathrm{FF}}$ are colored with color $\lceil \frac{n}{\mu} \rceil$ or earlier, and the other vertices are colored with color $\lceil \frac{n}{\mu} \rceil + 1$ or later. Hence, $u$ and $v$ are colored with different colors.
    \end{itemize}
    
    We also prove that the output set $X$ satisfies the condition. First, $(G(I))[X_I]$ is not $k$-colorable, by the assumption of $\mathsf{ST}(k)$. Let $X'_I$ be the set of vertices obtained by removing from $X_I$ those vertices not yet mapped by $\sigma$. Then, $(G(I))[X_I']$ is also not $k$-colorable, because we only remove isolated vertices. Moreover, $G[X]$ is a graph obtained by adding some edges to a graph isomorphic to $(G(I))[X_I']$, so $G[X]$ is also not $k$-colorable. Also, $|X| = |X'_I| \leq |X_I| < 2^{2^k}$. Therefore, $X$ satisfies the condition to terminate $\textsf{LOC}'(k)$.
\end{proof}

Finally, we show the number of colors for $\textsf{LOC}'(k)$ by setting the parameter $\mu$ optimally.

\begin{lemma}
    \label{lem:k5-loc}
    $\textnormal{\textsf{LOC}}'(k)$ can be solved with $O(n^{1-\frac{\alpha}{1+(k-1)\alpha}})$ colors. The same result also holds for $\textsf{LOC}(k)$.
\end{lemma}

\begin{proof}
    $\textsf{ST}(k)$ is equivalent to $\textsf{SP}(k, k)$, so by \autoref{lem:k5-sp}, we can solve $\textsf{ST}(k)$ with $O(|T|^{1-\alpha} \mu^{(k-1)\alpha})$ colors. To solve $\textsf{LOC}'(k)$, we need additional $\lceil \frac{n}{\mu} \rceil$ colors for First-Fit. Therefore, by $|T| \leq n$, the total number of colors we use is $O(n^{1-\alpha} \mu^{(k-1)\alpha} + \frac{n}{\mu})$. By setting $\mu = n^{\frac{\alpha}{1+(k-1)\alpha}}$, the number of colors becomes $O(n^{1-\frac{\alpha}{1+(k-1)\alpha}})$. With \autoref{lem:k5-n-is-known}, we can also solve $\textsf{LOC}(k)$ using $O(n^{1-\frac{\alpha}{1+(k-1)\alpha}})$ colors.
\end{proof}

\begin{theorem}
    \label{thm:k5-loc}
    For all $k \geq 0$, $\textnormal{\textsf{LOC}}(k)$ can be solved with $O(n^{1-\frac{1}{k(k-1)/2+1}})$ colors.
\end{theorem}

\begin{proof}
    First, we can solve $\textsf{LOC}(0)$ with $1$ color, because when the first vertex $v$ arrives, we can output $X := \{v\}$ and terminate the problem. Also, we can solve $\textsf{LOC}(1)$ with $2$ colors; we can use the same color until an edge appears, and when an edge $\{u, v\}$ appears, we can output $X := \{u, v\}$ and terminate the problem (we color the last vertex using the second color).

    Next, we consider the case $k \geq 2$. We prove by induction on $k$. \autoref{lem:k5-loc} shows that $\textsf{LOC}(k)$ can be solved with $O(n^{1-\frac{\alpha}{1+(k-1)\alpha}})$ colors, where we can apply this lemma with $\alpha = \frac{1}{(k-1)(k-2)/2+1}$, by the assumption of induction. Therefore, $\textsf{LOC}(k)$ can be solved with $O(n^{1-\frac{1}{k(k-1)/2+1}})$ colors.
\end{proof}

\subsection{The Improvement}
\label{subsec:k5-improvement}

In the previous subsections, we presented a deterministic online algorithm for coloring locally $k$-colorable graphs using $O(n^{1-\frac{1}{k(k-1)/2+1}})$ colors. This result means that we can also color $k$-colorable graphs with $O(n^{1-\frac{1}{k(k-1)/2+1}})$ colors. Compared to the current best result of $O(n^{1-\frac{1}{k!}})$ colors, the factorial term became quadratic, which is already a big improvement.

We can further improve this result. The key is that, while only an $O(n^{1/2})$-color algorithm for $\textsf{LOC}(2)$ has been found in the previous subsections, there is an $O(\log n)$-color algorithm for bipartite graphs ($k = 2$ case), by Lov\'{a}sz, Saks, Trotter \cite{LST89}. We improve the exponent by using the following result as a base case.

\begin{theorem}[\cite{LST89}]
    \label{thm:bipartite}
    There exists a deterministic online algorithm to color bipartite graphs (i.e., $2$-colorable graphs) with $2 \log_2 (n+1)$ colors.
\end{theorem}

\paragraph{Problems to solve.}
To construct an algorithm, we formally define the problems we need to solve. The key is that we consider the S-T problem in which $S$ is an $\ell$-color set in some $k$-coloring of $G$. The ultimate goal is to solve $\textsf{COL}(k)$ for $k \geq 3$.

\begin{itemize}
    \item Problem $\textsf{COL}(k)$: The online coloring problem where, if $G$ is no longer $k$-colorable, we can terminate the problem by declaring this fact.
    \item Problem $\textsf{ST}^+(k)$: The S-T problem where, if $G$ is no longer $k$-colorable, we can terminate the problem by declaring this fact. We also denote this problem by $\textsf{ST}^+(k, \mu)$.
    \item Problem $\textsf{SP}^+(k, \ell)$: The S-T problem where, if $S$ is no longer an $\ell$-color set in any $k$-coloring of $G$, we can terminate the problem by declaring this fact. We also denote this problem by $\textsf{SP}^+(k, \ell, \mu)$.
\end{itemize}

\paragraph{The improved algorithm.}
We follow the framework used to solve the problem for locally $k$-colorable graphs. First, we can solve $\textsf{SP}^+(k, 1)$ by coloring $T$ with $\textsf{COL}(k-1)$ (instead of $\textsf{LOC}(k-1)$). This is possible because, if $G[T]$ becomes non-$(k-1)$-colorable, then $S$ is not a $1$-color set in any $k$-coloring of $G$ (by a similar argument to \autoref{lem:k5-base}). Next, we can solve $\textsf{SP}^+(k, \ell)$ (for $\ell \geq 2$) by the same algorithm as \autoref{alg:k5-sp}, that creates instances of $\textsf{SP}^+(k, \ell-1)$ (instead of $\textsf{SP}(k, \ell-1)$). Note that we color vertices in $T$ with $\textsf{LOC}(k-\ell)$ also in this case. When instances $I_{i, 1}, \dots, I_{i, |Q_i|}$ are all terminated, we simply declare that ``$S$ is not an $\ell$-color set in any $k$-coloring of $G$'' (instead of creating the output $X$). This is justified by the following lemma.

\begin{lemma}
    In the algorithm for $\textnormal{\textsf{SP}}^+(k, \ell)$ with $\ell \geq 2$, if instances $I_{i, 1}, \dots, I_{i, |Q_i|}$ are all terminated, then $S$ is not an $\ell$-color set in any $k$-coloring of $G$.
\end{lemma}

\begin{proof}
    Suppose that $S$ is an $\ell$-color set in some $k$-coloring of $G$. Then, there exists a $k$-coloring of $G$, say $\varphi: V \to \{1, \dots, k\}$, such that $\varphi(s) \in \{1, \dots, \ell\}$ for all $s \in S$. By the assumption of $\textsf{LOC}(k-\ell)$, we have $\chi(G[Q_i]) > k-\ell$, so there exists $q_{i, j} \in Q_i$ that $\varphi(q_{i,j}) \in \{1, \dots, \ell\}$.

    For such $q_{i,j}$, each $s \in S(I_{i,j})$ has color $\varphi(s) \in \{1, \dots, \ell\} \setminus \{\varphi(q_{i,j})\}$, because $s$ is adjacent to $q_{i,j}$ due to $s \in S(I_{i,j}) = N_S(q_{i,j})$. Hence, $S(I_{i,j})$ is an $(\ell-1)$-color set in some $k$-coloring of $G[S(I_{i,j}) \cup T(I_{i,j})]$. However, this contradicts the assumption of $\textsf{SP}^+(k, \ell-1)$. Therefore, $S$ is not an $\ell$-color set in any $k$-coloring of $G$.
\end{proof}

$\textsf{ST}^+(k)$ is equivalent to $\textsf{SP}^+(k, k)$. We solve $\textsf{COL}'(k)$ using the same algorithm as in \autoref{subsec:k5-st-reduction}, that creates an instance of $\textsf{ST}^+(k)$ (instead of $\textsf{ST}(k)$). If this instance terminates, then $G$ is no longer $k$-colorable (by a similar argument to \autoref{lem:k5-loc-output}), so we can terminate the problem. Given this algorithm for $\textsf{COL}'(k)$, we can solve $\textsf{COL}(k)$ via \autoref{lem:k5-n-is-known}.

Next, we analyze the number of colors used.

\begin{lemma}
    \label{lem:k5-col}
    Suppose that $\textnormal{\textsf{COL}}(k-1)$ and $\textnormal{\textsf{LOC}}(0), \dots, \textnormal{\textsf{LOC}}(k-2)$ can all be solved with $\widetilde{O}(n^{1-\alpha})$ colors $(0 \leq \alpha \leq 1)$. Then, $\textnormal{\textsf{SP}}^+(k, \ell, \mu)$ can be solved with $\widetilde{O}(|T|^{1-\alpha} \mu^{(\ell-1)\alpha})$ colors, and $\textnormal{\textsf{COL}}(k)$ can be solved with $\widetilde{O}(n^{1-\frac{\alpha}{1+(k-1)\alpha}})$ colors.
\end{lemma}

\begin{proof}
    The structure of the algorithm is the same as that for locally $k$-colorable graphs, except for the terminating condition. Therefore, the same results as \autoref{lem:k5-base}, \autoref{lem:k5-sp}, and \autoref{lem:k5-loc} hold for the number of colors used. Note that if we change the $O(\cdot)$ notation to $\widetilde{O}(\cdot)$, the arguments still hold.
\end{proof}

\begin{theorem}
    \label{thm:k5-col}
    For all $k \geq 2$, $\textnormal{\textsf{COL}}(k)$ can be solved with $\widetilde{O}(n^{1-\frac{1}{k(k-1)/2}})$ colors.
\end{theorem}

\begin{proof}
    First, we can solve $\textsf{COL}(2)$ with $2 \log_2 (n+1) + 1$ colors by \autoref{thm:bipartite} (one extra color is for the last vertex when the graph becomes non-bipartite), which is $\widetilde{O}(1)$ colors. Next, we consider the case $k \geq 3$. We prove by induction on $k$. \autoref{lem:k5-col} shows that $\textsf{COL}(k)$ can be solved with $\widetilde{O}(n^{1-\frac{\alpha}{1+(k-1)\alpha}})$ colors, where we can use this lemma with $\alpha = \frac{1}{(k-1)(k-2)/2}$, by the induction hypothesis. Therefore, $\textsf{COL}(k)$ can be solved with $\widetilde{O}(n^{1-\frac{1}{k(k-1)/2}})$ colors.
\end{proof}

\subsection{The Competitive Ratio}

We presented a deterministic online algorithm to color locally $k$-colorable graphs with $O(n^{1-\frac{1}{k(k-1)/2+1}})$ colors (\autoref{thm:k5-loc}). This algorithm can also be used to color $k$-colorable graphs. To analyze the competitive ratio, we need to examine the constant factor in the number of colors used.

\begin{lemma}
    \label{lem:k5-const}
    For $k \geq 0$, $\textnormal{\textsf{LOC}}(k)$ can be solved with $2^{2^{k+3}} n^{1-\frac{1}{k(k-1)/2+1}}$ colors.
\end{lemma}

\begin{proof}
    Given that $\textsf{LOC}(0)$ and $\textsf{LOC}(1)$ can be solved with $1$ and $2$ colors, respectively (see \autoref{thm:k5-loc}), the lemma holds for $k = 0, 1$. Hence, we consider the case $k \geq 2$. We assume that $\textsf{LOC}(0), \dots, \textsf{LOC}(k-1)$ can all be solved with $c n^{1-\alpha}$ colors ($c \geq 1, \alpha = \frac{1}{(k-1)(k-2)/2+1}$).

    We prove that, for $\ell = 1, \dots, k$, $\textsf{SP}(k, \ell, \mu)$ can be solved with $(2^{2^{k+2} - 2^{k-\ell+2}}) c |T|^{1-\alpha} \mu^{(\ell-1)\alpha}$ colors, by induction on $\ell$. For the $\ell = 1$ case, $\textsf{SP}(k, 1, \mu)$ can be solved with $c |T|^{1-\alpha}$ colors (see \autoref{lem:k5-base}). For the $\ell \geq 2$ case, we follow the proof of \autoref{lem:k5-sp}. The number of instances of $\textsf{SP}(k, \ell-1, 2\mu)$ is at most $2\mu \cdot 2^{2^{k-\ell}}$, so the number of colors used in this part is at most
    \begin{align*}
        (2^{2^{k+2} - 2^{k-\ell+3}}) c |T|^{1-\alpha} (2\mu)^{(\ell-2)\alpha} \cdot (2\mu \cdot 2^{2^{k-\ell}})^\alpha & = (2^{2^{k+2} - 2^{k-\ell+3}}) 2^{2^{k-\ell} \alpha} 2^{(\ell-1)\alpha} c |T|^{1-\alpha} \mu^{(\ell-1)\alpha} \\
        & \leq (2^{2^{k+2} - 2^{k-\ell+3}}) 2^{2^{k-\ell}} \cdot 2 \cdot c |T|^{1-\alpha} \mu^{(\ell-1)\alpha}
    \end{align*}
    where the inequality follows from $2^{(\ell-1)\alpha} \leq 2$, by $(k-1)\alpha = \frac{k-1}{(k-1)(k-2)/2+1} \leq 1$. Also, the number of instances of $\textsf{LOC}(k-\ell)$ is at most $2\mu+1$, so the number of colors used in this part is at most
    \begin{equation*}
        c |T|^{1-\alpha} (2\mu+1)^\alpha \leq c |T|^{1-\alpha} (3\mu)^{\alpha} \leq 3 \cdot c |T|^{1-\alpha} \mu^{(\ell-1)\alpha}
    \end{equation*}
    where the inequalities follow from $\mu \geq 1, \alpha \leq 1, \ell \geq 2$. In total, we use at most $\{(2^{2^{k+2} - 2^{k-\ell+3}}) 2^{2^{k-\ell}} \cdot 2 + 3\} c |T|^{1-\alpha} \mu^{(\ell-1)\alpha} \leq 2^{2^{k+2} - 2^{k-\ell+2}} c |T|^{1-\alpha} \mu^{(\ell-1)\alpha}$, which matches the desired bound. Therefore, it needs  $2^{2^{k+2}-4} c |T|^{1-\alpha} \mu^{(k-1)\alpha}$ colors to solve $\textsf{ST}(k)$. To solve $\textsf{LOC}(k)$, by the proofs of \autoref{lem:k5-loc} and \autoref{lem:k5-n-is-known}, we set $\mu = n^{\frac{1}{k(k-1)/2+1}}$ and use at most $4 \cdot \{(2^{2^{k+2}-4} c + 1) n^{1-\frac{1}{k(k-1)/2+1}} + 1\}$ colors, which is not larger than $2^{2^{k+2}} c n^{1-\frac{1}{k(k-1)/2+1}}$.

    Finally, we prove that $\textsf{LOC}(k)$ can be solved with $2^{2^{k+3}} n^{1-\frac{1}{k(k-1)/2+1}}$ colors, by induction on $k$. By the induction hypothesis, we can use the argument above with $c = 2^{2^{k+2}}$. Then, we can derive that $\textsf{LOC}(k)$ can be solved with $2^{2^{k+2}} c n^{1-\frac{1}{k(k-1)/2+1}} = 2^{2^{k+3}} n^{1-\frac{1}{k(k-1)/2+1}}$ colors.
\end{proof}

Finally, we prove the main result of this subsection. The previous best-known competitive ratio was $O(\frac{n \log \log \log n}{\log \log n})$ by Kierstead \cite{Kie98}, which follows from the deterministic online algorithm to color $k$-colorable graphs with $n^{1-\frac{1}{k!}}$ colors. We improve this ratio by a factor of $\log \log \log n$.

\begin{theorem}
    \label{thm:k5-competitive}
    There exists a deterministic online coloring algorithm with competitive ratio $O(\frac{n}{\log \log n})$.
\end{theorem}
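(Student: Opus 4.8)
The plan is to run the locally-$\ell$-colorable algorithm of \autoref{thm:k5-local} with the parameter $\ell$ chosen adaptively as a function of the number of vertices seen so far, restarting the current instance whenever it aborts, and to show that once $\ell$ overtakes the (unknown) chromatic number $\chi$ everything becomes cheap, while the colours spent before that happens are negligible because the level thresholds grow doubly exponentially.

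Concretely, I would put $\epsilon_\ell = \frac{2}{\ell(\ell-1)+2}$ and write $\mathcal{A}_\ell$ for the unknown-$n$ algorithm of \autoref{thm:k5-local}. By \autoref{lem:k5-const}(a), together with the ``before aborting'' bound of \autoref{lem:k5-colors} and the abort step of \autoref{alg:subproblem}, one instance of $\mathcal{A}_\ell$ fed $m$ vertices uses at most $2^{2^{\ell+2}} m^{1-\epsilon_\ell}$ colours, maintains a legal colouring of everything it has seen at every step, and simply aborts --- still within that colour bound --- if its input turns out not to be locally $\ell$-colorable. For a vertex count $m$ exceeding some absolute constant $C_0$, I would define $L(m)$ to be the largest $\ell \ge 1$ with $2^{2^{\ell+2}} m^{1-\epsilon_\ell} \le m / \log\log m$. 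A short check shows that $L$ is well defined, non-decreasing in $m$, and satisfies $L(m) \ge \lfloor \frac{1}{2}\log\log m \rfloor$ (plug in $\ell = \lfloor \frac{1}{2}\log\log m \rfloor$: then $2^{2^{\ell+2}} = 2^{O(\sqrt{\log m})}$, which is dwarfed by $m^{\epsilon_\ell} = 2^{\Omega(\log m/(\log\log m)^2)}$). Writing $N_\ell = \min\{m : L(m) \ge \ell\}$, one obtains $\log N_\ell = \Theta(2^\ell \ell^2)$, hence $\log\log N_\ell = \Theta(\ell)$ and $N_{\ell+1} \ge N_\ell^2$ for all large $\ell$, so that $\sum_{j \le \ell} N_j = O(N_\ell)$.

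The algorithm $\mathcal{A}$ keeps the current level equal to $L$ of the current count: while the level stays fixed it feeds the arriving vertices (with their mutual edges) to a running instance of $\mathcal{A}_{\ell}$ holding a private block of colours; when the count crosses $N_{\ell+1}$ it increments the level and opens a fresh instance, and when the running instance aborts it opens a new instance at the same level, again with fresh colours; counts below $C_0$ get one new colour per vertex, an additive $O(1)$. This is a legal online colouring because vertices handled by distinct instances use disjoint colour blocks. For the analysis, fix $G$ with $n = |V(G)|$ and $\chi = \chi(G)$ and bound the colours spent at each level $\ell$ (which occurs only when $n \ge N_\ell$). Every induced subgraph of $G$ is $\chi$-colorable, hence locally $\ell$-colorable whenever $\ell \ge \chi$, so at such a level the unique instance never aborts and, on its $m_\ell \le n$ vertices, uses at most $2^{2^{\ell+2}} m_\ell^{1-\epsilon_\ell}$ colours; since $m_\ell < N_{\ell+1}$ this is at most $N_{\ell+1}/\log\log N_{\ell+1}$, and using $m_\ell \le n$ it is also at most $n/\log\log n$, both by the defining inequality of $L$. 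At a level $\ell < \chi$ I would invoke only the trivial bound: the instances at level $\ell$ together use at most $m_\ell < N_{\ell+1}$ colours, since each instance uses no more colours than vertices and distinct instances use fresh colours. Summing the $\ell < \chi$ contributions gives $\sum_{\ell < \chi} N_{\ell+1} = O(N_\chi)$; if level $\chi$ is reached then $n \ge N_\chi$, so $\log\log N_\chi = \Theta(\chi)$ and monotonicity of $x/\log\log x$ yield $O(N_\chi) = O(\chi N_\chi/\log\log N_\chi) = O(\chi n/\log\log n)$, while if level $\chi$ is never reached then $n < N_\chi$, hence $\chi > L(n) = \Omega(\log\log n)$ and the grand total, which is at most $\sum_{\ell \le L(n)} N_{\ell+1} = O(N_{L(n)}) = O(n)$, is already $O(\chi n/\log\log n)$. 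The remaining levels $\chi \le \ell \le L(n)$ contribute $O(n/\log\log n)$: the top two of them contribute at most $n/\log\log n$ each, and every lower level $\ell$ contributes at most $N_{\ell+1} \le N_{L(n)-1} \le \sqrt{N_{L(n)}} \le \sqrt{n}$, which sum to $o(n/\log\log n)$. Altogether $\mathcal{A}$ uses $O(\chi n/\log\log n)$ colours on every $n$-vertex graph, i.e.\ its competitive ratio is $O(n/\log\log n)$.

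The step I expect to be the main obstacle is precisely this adaptivity: because $\chi$ is unknown, the level catches up to $\chi$ only gradually, and one must handle the restart-on-abort behaviour at levels below $\chi$ and check that the colours wasted there total only $O(N_\chi)$ --- which is affordable only because the thresholds $N_\ell$ grow doubly exponentially, so that $\log\log N_\chi = \Theta(\chi)$. A smaller but necessary ingredient is that $\mathcal{A}_\ell$ degrades gracefully on inputs that fail to be locally $\ell$-colorable, keeping a legal partial colouring and respecting its colour bound up to the abort; this is exactly what \autoref{lem:k5-colors} and the abort step of \autoref{alg:subproblem} provide.
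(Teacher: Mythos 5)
Your proposal is correct, but it takes a noticeably different route from the paper. Both arguments hinge on the same numeric fact (essentially \autoref{lem:k5-const} plus the observation that $2^{2^{\ell+2}} n^{1-\epsilon_\ell} \le n/\log\log n$ whenever $\ell \le \tfrac12\log\log n$), but the paper then does something much blunter: it fixes a single level $\ell = \tfrac12\log\log n$ once and for all, runs the locally-$\ell$-colorable algorithm, and upon abort simply gives every subsequent vertex its own fresh color. The ratio analysis is a two-case split: if $\chi(G) \le \tfrac12\log\log n$ the algorithm never aborts and uses at most $n/\log\log n$ colors, and if $\chi(G) > \tfrac12\log\log n$ even the trivial $n$ colors already give ratio $O(n/\log\log n)$ because the optimum is large --- so nothing clever is needed below the threshold, which is exactly the work your low-level bookkeeping replaces. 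What your adaptive scheme buys in exchange is robustness: the level $L(m)$ depends only on the number of vertices seen so far, so you never need $n$ in advance (the paper's choice of $\ell$ implicitly uses $n$, leaving that to a Lemma~\ref{lem:k5-n-is-known}-style patch), and you get the slightly sharper guarantee that the total number of colors is $O(\chi\, n/\log\log n)$ with the wasted colors at levels below $\chi$ explicitly charged against the doubly exponential growth of the thresholds $N_\ell$. One small slip: in the case where level $\chi$ is never reached, the expression $\sum_{\ell \le L(n)} N_{\ell+1} = O(N_{L(n)})$ is not literally right, since the last summand is $N_{L(n)+1}$, which can vastly exceed $N_{L(n)}$ and $n$; but the bound you actually need there --- total colors at most $n$, because distinct instances use disjoint color blocks and each color is used on at least one vertex --- is immediate, so the conclusion $O(n) = O(\chi n/\log\log n)$ stands.
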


\begin{proof}
    First, we prove that, for all $k \leq \frac{1}{2} \log \log n$, the following holds for sufficiently large $n$:
    \begin{equation*}
        2^{2^{k+3}} \cdot n^{1-\frac{1}{k(k-1)/2+1}} < \frac{n}{\log \log n}
    \end{equation*}
    Taking logarithms of both sides, we obtain:
    \begin{align*}
        & 2^{k+3} + \left(1-\frac{1}{k(k-1)/2+1}\right)\log n < \log n - \log \log \log n \\
        (\Leftrightarrow) \ & 2^{k+3} + \log \log \log n < \frac{\log n}{k(k-1)/2+1}
    \end{align*}
    For $k \leq \frac{1}{2} \log \log n$, the left-hand side is $O(\sqrt{\log n})$ by $2^{k+2} = O(\sqrt{\log n})$, but the right-hand side is $\Omega(\frac{\log n}{(\log \log n)^2})$. Therefore, the inequality holds for sufficiently large $n$. With \autoref{lem:k5-const}, $\textsf{LOC}(k)$, for $k \leq \frac{1}{2} \log \log n$, only uses $\frac{n}{\log \log n}$ colors, when $n$ is sufficiently large.
    
    Now, we describe an algorithm that works even when $k$ is unknown. We run $\textsf{LOC}(1)$, $\textsf{LOC}(2)$, $\textsf{LOC}(3)$, $\dots$, in this order, each using completely disjoint colors. For each $i$, we start coloring vertices with $\textsf{LOC}(i)$ only after $\textsf{LOC}(i-1)$ terminates (this only happens when the graph becomes non-locally-$(i-1)$-colorable, which is non-$(i-1)$-colorable). If $\chi(G) < \frac{1}{2} \log \log n$, the number of colors used is at most $\chi(G) \cdot \frac{n}{\log \log n}$ by the discussion above. Otherwise, we obviously use at most $n$ colors. Hence, this algorithm achieves a competitive ratio of $O(\frac{n}{\log \log n})$.

    $\textsf{LOC}(k)$ runs in polynomial time, even when $k$ is not constant. To prove this, it suffices to show that the total number of created instances of any problem is bounded by a polynomial. For the algorithms for all problems, each vertex triggers the creation of at most one instance, provided that $q_{i,j}$ triggers the creation of $I_{i,j}$ in \autoref{alg:k5-sp}. Also, the recursion depth when solving $\textsf{LOC}(k)$ is $O(k^2)$. Hence, there are at most $O(k^2)$ instances per each vertex in $\textsf{LOC}(k)$, which is $O(nk^2)$ instances in total.
\end{proof}

\subsection{Additional Remarks}
\label{subsec:k5-remarks}

\paragraph{Comparison with Kierstead's algorithm.}
Kierstead discovered an algorithm for $k$-colorable graphs that uses $\widetilde{O}(n^{2/3})$ colors for $k = 3$ and $\widetilde{O}(n^{5/6})$ colors $k = 4$, using the framework similar to what we mentioned in \autoref{subsec:k5-background} and \autoref{subsec:k5-improvement}. However, for $k \geq 5$, he did not find a decent algorithm for $\textsf{LOC}(k-\ell)$ (when $k-\ell \geq 3$), so he instead used $\textsf{COL}(k-\ell)$ for coloring a subset of vertices in $T$ and used the entire subset as a ``witness'' for initiating sub-instances. This created a lot of instances, so he only achieved $O(n^{1-\frac{1}{k!}})$ colors. Our fundamental contribution is to give an effective algorithm for $\textsf{LOC}(k)$, which is achieved by introducing a new concept of level-$\ell$ set, the ``locally'' version of $\ell$-color set.

We also applied Kierstead's framework to understand the problem better. Kierstead used a structure called ``witness tree'' to maintain sub-instances, but we take an inductive approach instead. We also introduced the S-T Problem and presented a reduction in \autoref{subsec:k5-st-reduction} to clarify the essential part that must be processed online. Note that Kierstead proved that, for ``every'' $k$-coloring of $G$ (in the original problem $\textsf{COL}(k)$), there exists an instance $I$ of $\textsf{SP}(k, \ell)$ that at most $\ell$ distinct colors appear on $S(I)$ (in fact, this also holds in our algorithm). Instead, we defined an $\ell$-color set to have at most $\ell$ distinct colors in ``some'' $k$-coloring, which is a weaker definition, but this change enabled a ``bottom-up'' inductive approach.

\paragraph{Thresholds for locally $k$-colorable graphs.}
We defined locally $k$-colorable graphs so that every subgraph with order less than $2^{2^k}$ is $k$-colorable. We note that our algorithm for $\textsf{LOC}(k)$ actually returns a smaller non-$k$-colorable subgraph. Let $f(k)$ and $g(k, \ell)$ be the maximum size of the output set in $\textsf{LOC}(k)$ and $\textsf{SP}(k, \ell)$, respectively. By \autoref{lem:k5-base} and \autoref{lem:k5-sp-output}, we have:
\begin{equation*}
    g(k, 1) = 2f(k-1), \quad g(k, \ell) = (g(k, \ell-1) + 1) \cdot  f(k-\ell) \ (\ell \in \{2, \dots, k\}), \quad f(k) = g(k, k)
\end{equation*}
The values for $k \leq 5$ are mentioned in \autoref{tab:k5-threshold}. Note that $f(2) = 5$ matches Kierstead's result that colors $C_3, C_5$-free graphs online with $O(n^{1/2})$ colors (\autoref{thm:kierstead-2good}, \cite{Kie98}). They are much less than $2^{2^k}$, but still grow double-exponentially on $k$, which is the reason that we set the threshold to $2^{2^k}$ for $\textsf{SP}(k)$ and to $2^{2^k - 2^{k-\ell}}$ for $\textsf{SP}(k, \ell)$.

\begin{table}[htbp]
    \centering
    \caption{The maximum size for the output set: $f(k)$ for $\textsf{LOC}(k)$, and $g(k, \ell)$ for $\textsf{SP}(k, \ell)$.}
    \begin{tabular}{|c|c|ccccc|}
        $k$ & $f(k)$ & $g(k, 1)$ & $g(k, 2)$ & $g(k, 3)$ & $g(k, 4)$ & $g(k, 5)$ \\ \hline
        0 & 1 & - & - & - & - & - \\
        1 & 2 & - & - & - & - & - \\
        2 & 5 & 4 & 5 & - & - & - \\
        3 & 23 & 10 & 22 & 23 & - & - \\
        4 & 473 & 46 & 235 & 472 & 473 & - \\
        5 & 217823 & 946 & 21781 & 108910 & 217822 & 217823 \\
    \end{tabular}
    \label{tab:k5-threshold}
\end{table}

\paragraph{On the competitive ratio.}
The bottleneck of the competitive ratio $O(\frac{n}{\log \log n})$ is in the double-exponential constant factor in the number of colors used ($2^{2^{k+3}}$ in \autoref{lem:k5-const}), not in the exponent $1-\frac{1}{k(k-1)/2}$. If we just improve the factor to $2^{\mathrm{poly}(k)}$, the competitive ratio would improve to $O(\frac{n}{(\log n)^c})$ for some $c > 0$, which would close the gap to the $\Omega(\frac{n}{(\log n)^2})$ lower bound \cite{HS94}. We also note that the reason for this factor is that the output size of $\textsf{LOC}(k)$ is double-exponential.

\section{Online Coloring for $4$-Colorable Graphs}\label{sec:k4}

In this section, we present a deterministic online algorithm for coloring a 4-colorable graph $G$ with $n$ vertices using $\widetilde{O}(n^{14/17})$ colors, improving the previous bound of $\widetilde{O}(n^{5/6})$ colors \cite{Kie98}.

\subsection{The Overall Idea}\label{sec:k4-intro}

First, we apply First-Fit with $n^{14/17}$ colors. Then, the problem of coloring the remaining vertices reduces to solving $\textsf{ST}^+(4)$ (where $|S| = n, |T| \leq n$) with $\mu = n^{3/17}$, by the discussion in \autoref{subsec:k5-st-reduction} and \autoref{subsec:k5-improvement}. Therefore, all we need is to solve $\textsf{ST}^+(4, n^{3/17})$ using $\widetilde{O}(n^{14/17})$ colors. The current result in \autoref{lem:k5-col} only achieves $\widetilde{O}(|T|^{2/3} \mu) = \widetilde{O}(n^{43/51})$ colors, so we need to further improve it. As this problem is challenging, we first consider the special case satisfying the following property, in which there is no dense subgraph in $G$.

\begin{description}
    \item[\textbf{No-dense property.}] There are no subsets $S_D \subseteq S$ and $T_D \subseteq T$ that satisfy the following conditions: (i) $|T_D| \leq n^{8/17}$, (ii) $|S_D| \leq n^{8/17}|T_D|$, and (iii) $|N_{S_D}(t)| \geq \frac{1}{3} n^{14/17}$ for all $t \in T_D$.
\end{description}

The steps for solving this special case are shown in \autoref{fig-506}. First, we obtain a 1-color set $T_D \subseteq T$ such that $|T_D| = \Omega(n^{7/17})$ (\autoref{fig-506}A). To obtain a large 1-color set, we apply First-Fit on $T$, and use the same approach as \autoref{sec:k5} that utilizes the large degrees. Since First-Fit is run in two stages, we call this technique the \emph{double greedy method}. Then, assuming the no-dense property, we obtain a large 3-color set in $S$ (\autoref{fig-506}B), since $N_S(T_D)$ is a 3-color set of $\Omega(n^{15/17})$ vertices. By repeating this process, we can cover $S$ with $O(n^{2/17})$ 3-color sets (\autoref{fig-506}C). Afterward, we color the remaining vertices using $\textsf{SP}^+(4, 3)$ based on these 3-color sets (\autoref{fig-506}D). The advantage is that we only need to create $O(n^{2/17})$ instances of $\textsf{SP}^+(4, 3)$ (instead of $\mu = n^{3/17}$ instances), resulting in an improved number of colors.

\begin{figure}[t]
  \centering
  \includegraphics[width=\linewidth]{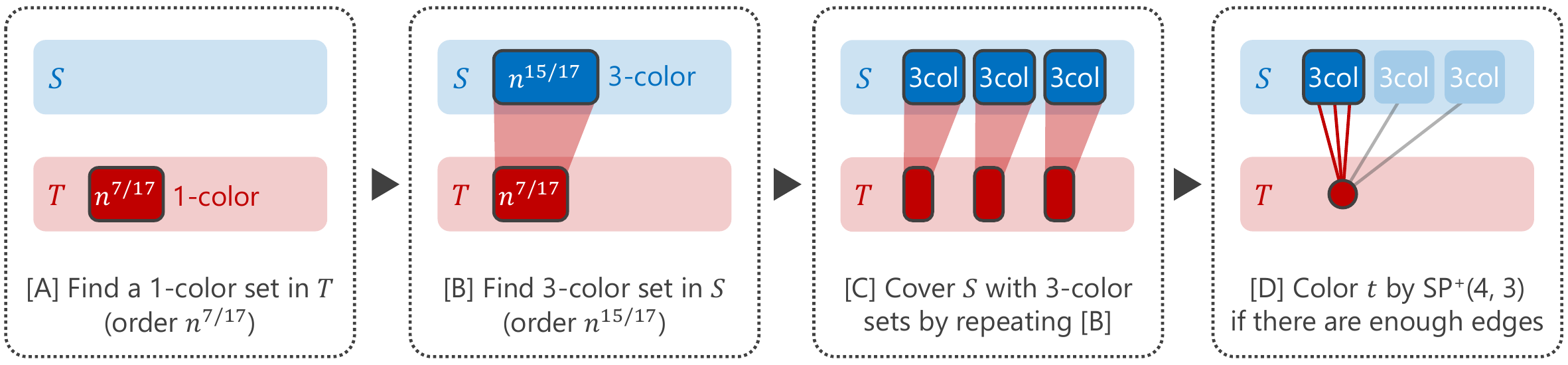}
  \caption{The sketch of our algorithm for the no-dense case.} \label{fig-506}
\end{figure}

However, the no-dense property does not always hold. In the worst case, $|S_D| = n^{14/17}$ even though $|T_D| = n^{8/17}$. To address this, we use our novel \emph{Common \& Simplify technique} to reduce $\textsf{ST}^+(4)$ to the case with the no-dense property. The core idea is to pick two highly common vertices $u_1, u_2 \in T$ such that $|N_S(u_1) \cap N_S(u_2)| = \widetilde{\Omega}(n^{12/17})$ (we can show that such a pair exists if the no-dense property does not hold; see \autoref{lem:k4-ab-free}). If there exists a 4-coloring of $G$ in which different colors are used for $u_1$ and $u_2$, $N_S(u_1) \cap N_S(u_2)$ is a 2-color set of $\widetilde{\Omega}(n^{12/17})$ vertices (\autoref{fig-520}), where we can use them for $\textsf{SP}^+(4, 2)$. Note that sometimes $u_1$ and $u_2$ may be assigned the same color, but we can force $(u_1, u_2)$ to be different colors with a ratio of at least $\frac{1}{4}$ (see \autoref{alg:division}).

In \autoref{subsec:k4-cand}, we introduce the ``monochromatic candidates problem'' which is used as a tool in the proof. In \autoref{subsec:k4-nodense} and \autoref{subsec:k4-nodense-proof}, we present an algorithm for the no-dense property case. In \autoref{subsec:k4-dense}, we introduce the Common \& Simplify technique and finally present a deterministic online algorithm that colors any 4-colorable graph with $\widetilde{O}(n^{14/17})$ colors.

\begin{figure}[t]
  \centering
  \includegraphics[width=\linewidth]{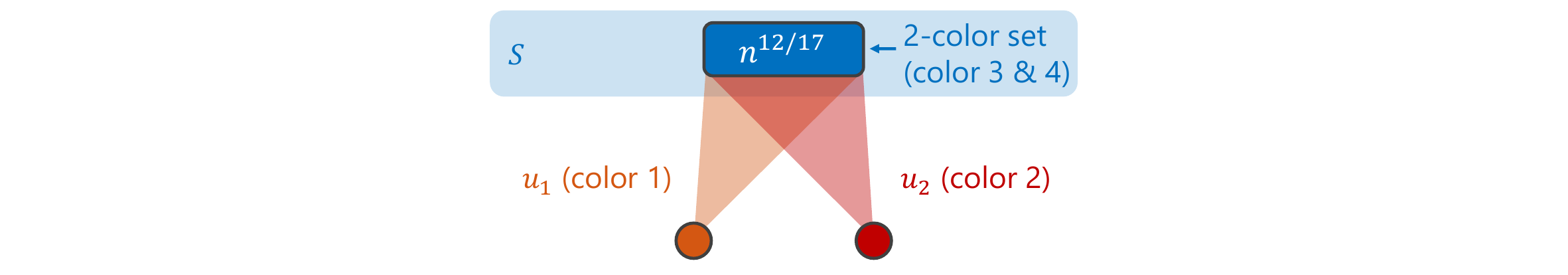}
  \caption{The core idea of the Common \& Simplify technique. If $u_1$ and $u_2$ are colored by color 1 and 2 in some 4-coloring of $G$, then $N_S(u_1) \cap N_S(u_2)$ is forced to be a 2-color set.} \label{fig-520}
\end{figure}

\subsection{Monochromatic Candidates Problem} \label{subsec:k4-cand}

Before solving the no-dense case of $\textsf{ST}^+(4)$, we define the following ``monochromatic candidates problem'', which is a key tool for finding candidate 1-color sets using a small number of colors. We also define the corresponding subproblem. Note that $\textsf{CAND}(k)$ is equivalent to $\textsf{CANDSP}(k, k)$.

\begin{itemize}
    \item Problem $\textsf{CAND}(k)$: The S-T problem where we can terminate the problem by outputting a set $\mathcal{A} = \{A_1, \dots, A_m\} \ (A_i \subseteq S)$ such that: (i) $m = O(1)$, (ii) $|A_i| \geq \frac{|S|}{2^{(k-1)(k-2)/2} \mu^{k-1}}$ for each $i$, and (iii) for \emph{every} $k$-coloring $\varphi: V \to \{1, \dots, k\}$ of $G$, at least one of $A_1, \dots, A_m$ is monochromatic, i.e., there exists an $i$ such that $|\{\varphi(v) : v \in A_i\}| = 1$.\footnote{Note that unlike the definition of ``1-color set'', we define ``monochromatic'' with respect to a fixed $k$-coloring $\varphi$.} We also denote this problem by $\textsf{CAND}(k, \mu)$.
    \item Problem $\textsf{CANDSP}(k, \ell)$: The S-T problem where we can terminate the problem by outputting a set $\mathcal{A} = \{A_1, \dots, A_m\} \ (A_i \subseteq S)$ such that: (i) $m = O(1)$, (ii) $|A_i| \geq \frac{|S|}{2^{(\ell-1)(\ell-2)/2} \mu^{\ell-1}}$ for each $i$, and (iii) for \emph{every} $k$-coloring $\varphi: V \to \{1, \dots, k\}$ of $G$ such that at most $\ell$ distinct colors appear in $S$, at least one of $A_1, \dots, A_m$ is monochromatic. We also denote this problem by $\textsf{CANDSP}(k, \ell, \mu)$.
\end{itemize}

\paragraph{Algorithm.}
First, in $\textsf{CANDSP}(k, 1)$, we can output $\mathcal{A} := \{S\}$ and terminate the problem immediately after its initiation. Next, we can solve $\textsf{CANDSP}(k, \ell)$ (for $\ell \geq 2$) using the same algorithm as \autoref{alg:k5-sp}, which creates instances of $\textsf{CANDSP}(k, \ell-1, 2\mu)$ (instead of $\textsf{SP}(k, \ell-1, 2\mu)$). Note that we color vertices in $T$ with $\textsf{LOC}(k-\ell)$ also in this case. When instances $I_{i,1}, \dots, I_{i,|Q_i|}$ are all terminated, we output $\mathcal{A} := \mathcal{A}_1 \cup \dots \cup \mathcal{A}_{|Q_i|}$, where $\mathcal{A}_j$ is the output of $I_{i,j}$. Especially for the $\ell = 2$ case, when $I^{\textsf{LOC}}_1$ finishes and the instances $I_{1,1}, \dots, I_{1,|Q_1|}$ are created, these instances all terminate immediately, resulting in the termination of $\textsf{CANDSP}(k, 2)$. Therefore, in $\textsf{CANDSP}(k, 2)$, we only create one instance of $\textsf{LOC}(k-2)$.

\begin{lemma}
    In the algorithm for $\textnormal{\textsf{CANDSP}}(k, \ell)$ with $\ell \geq 2$, if instances $I_{i,1}, \dots, I_{i,|Q_i|}$ are all terminated, then the output $\mathcal{A} := \mathcal{A}_1 \cup \dots \cup \mathcal{A}_{|Q_i|}$ meets the requirements.
\end{lemma}

\begin{proof}
    For condition (i), since $|\mathcal{A}_j| = O(1)$ by the assumption of $\textsf{CANDSP}(k, \ell-1)$, we have $|\mathcal{A}| \leq |\mathcal{A}_1| + \dots + |\mathcal{A}_{|Q_i|}| \leq |Q_i| \cdot O(1) = O(1)$. For condition (ii), each element in $\mathcal{A}_j$ has size at least $\frac{|S(I_{i,j})|}{2^{(\ell-2)(\ell-3)/2} (2\mu)^{\ell-2}} \geq \frac{|S|}{2^{(\ell-1)(\ell-2)/2} \mu^{\ell-1}}$ by the assumption of $\textsf{CANDSP}(k, \ell-1, 2\mu)$ and $|S(I_{i,j})| \geq \frac{|S|}{\mu}$.
    
    Next, we consider condition (iii). Let $\varphi: V \to \{1, \dots, k\}$ be a coloring of $G$ such that at most $\ell$ distinct colors appear in $S$. We assume, without loss of generality, that $\varphi(s) \in \{1, \dots, \ell\}$ for all $s \in S$. Since $\chi(G[Q_i]) > k-\ell$, there exists $q_{i,j} \in Q_i$ such that $\varphi(q_{i,j}) \in \{1, \dots, \ell\}$. For such $q_{i,j}$, only $\ell-1$ distinct colors (i.e., colors in $\{1, \dots, \ell\} \setminus \{\varphi(q_{i,j})\}$) appear in $S(I_{i,j})$, due to $S(I_{i,j}) = N_S(q_{i,j})$. Hence, by the assumption of $\textsf{CANDSP}(k, \ell-1)$, at least one set in $\mathcal{A}_j$ is monochromatic. Therefore, at least one set in $\mathcal{A} := \mathcal{A}_1 \cup \dots \cup \mathcal{A}_{|Q_i|}$ is monochromatic.
\end{proof}

Since $\textsf{CAND}(k)$ is equivalent to $\textsf{CANDSP}(k, k)$, this gives an algorithm to solve $\textsf{CAND}(k)$. Next, we analyze the number of colors used.

\begin{lemma}
    We can solve $\textnormal{\textsf{CANDSP}}(k, \ell)$ with $O(|T|^{1-\alpha} \mu^{(\ell-2)\alpha})$ colors, for $\ell = 2, \dots, k$, assuming that $\textnormal{\textsf{LOC}}(0), \dots, \textnormal{\textsf{LOC}}(k-2)$ can all be solved with $\widetilde{O}(n^{1-\alpha})$ colors.
\end{lemma}

\begin{proof}
    We prove it by induction on $\ell$. For $\ell = 2$, we color vertices in $T$ using one instance of $\textsf{LOC}(k-2)$, so we use $O(|T|^{1-\alpha})$ colors. For $\ell \geq 3$, we color vertices in $T$ using at most $2\mu+1$ instances of $\textsf{LOC}(k-\ell)$ and at most $2\mu$ instances of $\textsf{CANDSP}(k, \ell-1)$. Therefore, by a similar argument to the proof of \autoref{lem:k5-sp}, we use at most $O(|T|^{1-\alpha} \mu^{(\ell-2)\alpha})$ colors.
\end{proof}

Therefore, $\textsf{CAND}(4)$ can be solved with $O(|T|^{1/2} \mu)$ colors, which uses fewer colors than $\text{ST}^+(4)$ (with $\widetilde{O}(|T|^{2/3} \mu)$ colors). Each set in the output has size at least $\frac{|S|}{8 \mu^3}$. Also, the size of the output is $m \leq 5 \cdot 2 \cdot 1 = 10$, since the maximum sizes of the output sets for $\textsf{LOC}(2), \textsf{LOC}(1), \textsf{LOC}(0)$ are $5, 2, 1$, respectively; see \autoref{tab:k5-threshold}.

\subsection{No-Dense Case: Algorithm} \label{subsec:k4-nodense}

Next, we show an algorithm to solve the no-dense case of $\textsf{ST}^+(4)$ where $|S| = n, |T| \leq n, \mu = n^{3/17}$.

\paragraph{Overview.}
The strategy is to obtain monochromatic sets in $T$ using $\textsf{CAND}(4)$. We apply First-Fit using $n^{13/17}$ colors (\autoref{fig-518} left; let $T_{\mathrm{FF}}$ be the set of vertices colored by First-Fit), and we use $\textsf{CAND}(4)$ to color the remaining vertices, exploiting their large degrees. Although vertices in $T \setminus T_{\mathrm{FF}}$ may arrive earlier than those of $T_{\mathrm{FF}}$, we can reduce this to the S-T problem setting by the arguments in \autoref{subsec:k5-st-reduction}.

If the instance of $\textsf{CAND}(4)$ returns ``candidate monochromatic sets'' $\mathcal{A} = \{A_1, \dots, A_m\}$, then for each $i$, consider initiating a new instance $I$ of $\textsf{SP}^+(4, 3)$ where $S(I) := N_S(A_i)$. This relies on the assumption that at least one of $A_1, \dots, A_m$ is monochromatic in every $4$-coloring of a subgraph of $G$; then, as long as $G$ is $4$-colorable, at least one of $A_1, \dots, A_m$ is a $1$-color set in some $4$-coloring of $G$, and hence at least one of $N_S(A_1), \dots, N_S(A_m)$ is a $3$-color set in some $4$-coloring of $G$ (\autoref{fig-518} middle). These instances are used to process future vertices: when a new vertex $t \in T$ arrives and $|N_{S(I)}(t)|$ is large enough, we can color $t$ using instance $I$ (\autoref{fig-518} right). We also initiate a new instance of $\textsf{CAND}(4)$ for coloring the rest of the vertices in $T$, using completely different colors, and on some occasions, we also refresh First-Fit and start using the next $n^{13/17}$ colors.

\begin{figure}[t]
  \centering
  \includegraphics[width=0.9\linewidth]{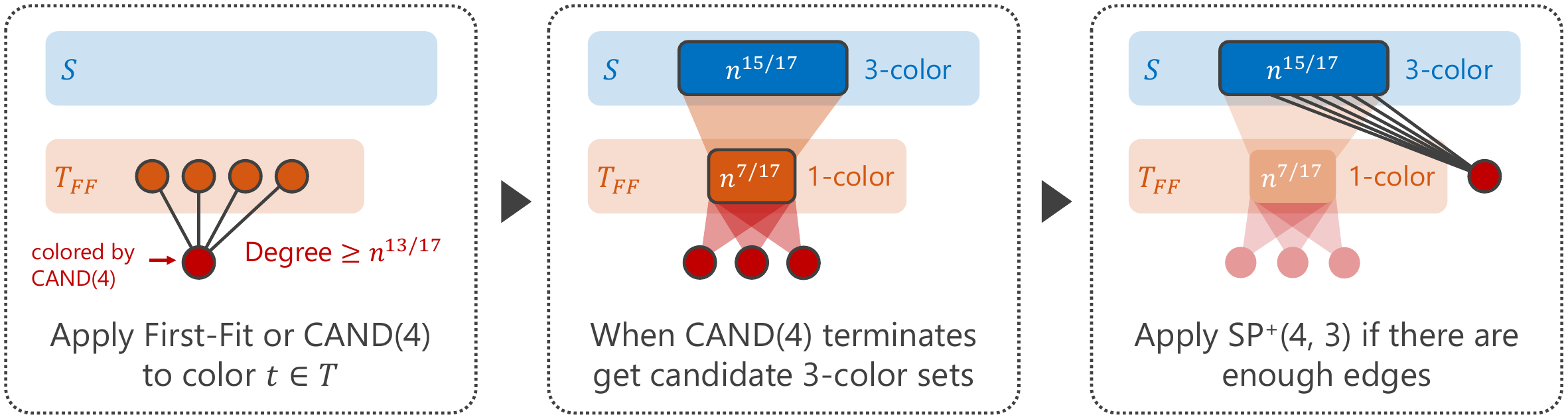}
  \caption{The sketch of our algorithm for the no-dense case.} \label{fig-518}
\end{figure}

\paragraph{Parameters.} Next, we describe the parameter settings for $\textsf{CAND}(4)$ and $\textsf{SP}^+(4, 3)$:
\begin{itemize}
    \item When we create an instance $J$ of $\textsf{CAND}(4)$, we always set $|S(J)| = n^{16/17}$ and $\mu(J) = 2n^{3/17}$. $|N_{S(J)}(t)| \geq \frac{1}{2} n^{13/17}$ holds for each $t \in T(J)$. The output $\mathcal{A} = \{A_1, \dots, A_m\}$ satisfies $|A_i| \geq \frac{|S(J)|}{8\mu(J)^3} = \frac{1}{64} n^{7/17}$. This uses $O(|T|^{1/2} n^{3/17})$ colors.
    \item When we create an instance $I$ of $\textsf{SP}^+(4, 3)$, we always set $|S(I)| = n^{15/17}$ and $\mu(I) = 6400 n^{3/17}$. $|N_{S(I)}(t)| \geq \frac{1}{6400} n^{12/17}$ holds for each $t \in T(I)$. This uses $\widetilde{O}(|T|^{2/3} n^{2/17})$ colors by \autoref{lem:k5-col}.
\end{itemize}

The parameter settings for $\textsf{CAND}(4)$ imply that we can continue coloring while $|T_{\mathrm{FF}}| < n^{16/17}$. When $|T_{\mathrm{FF}}|$ reaches $n^{16/17}$, we refresh First-Fit and start using completely new colors.

\paragraph{Implementation.}

We denote the created instances of $\textsf{CAND}(4)$ by $J_1, J_2, \dots$ in chronological order. When the $r$-th instance $J_r$ terminates, we let $\mathcal{A}_r := \{A_{r,1}, \dots, A_{r,|\mathcal{A}_r|}\}$ for its output, and $I_{r, 1}, \dots, I_{r,|\mathcal{A}_r|}$ for the instances of $\textsf{SP}^+(4, 3)$ created at that time. We use completely disjoint colors for all of these instances as well as First-Fit.

When initiating an instance $J_r$ of $\textsf{CAND}(4)$, we set $S(J_r)$ to have size of $n^{16/17}$, where vertices of $S(J_r)$ correspond to vertices in $T_{\mathrm{FF}} \setminus T_{\mathrm{NG}}$. We establish a correspondence in the same manner as in \autoref{subsec:k5-st-reduction}. Here, $T_{\mathrm{FF}}$ is the set of vertices colored by the current session of First-Fit, and $T_{\mathrm{NG}}$ is defined as follows, where $S_{\mathrm{used}} := \bigcup \{S(I_{i,j}): \text{$I_{i,j}$ is active}\}$:
\begin{equation}
    \label{eq:k4-t-ng}
    T_{\mathrm{NG}} := \left\{t \in T_{\mathrm{FF}} : |N_{S \setminus S_{\mathrm{used}}} (t)| < \frac{1}{2} n^{14/17}\right\}
\end{equation}
Note that some vertices in $S(J_r)$ may not be mapped to any vertices in $T_{\mathrm{FF}} \setminus T_{\mathrm{NG}}$, but they are isolated vertices, so they will not be included in the output $\mathcal{A}_r$ (due to the algorithm in \autoref{subsec:k4-cand}).

The algorithm starts by initiating the instance $J_1$. Whenever a vertex $t \in T$ arrives, we do the following procedure:

\begin{enumerate}
    \item If there exists an active instance $I_{i,j}$ such that $|N_{S(I_{i,j})}(t)| \geq \frac{1}{6400} n^{12/17}$, we choose one such instance $I_{i,j}$ arbitrarily, and color $t$ using this instance $I_{i,j}$ (line 6). If $I_{i,1}, \dots, I_{i,|\mathcal{A}_i|}$ are all terminated as a result, then we declare that $G$ is not $4$-colorable and terminate the original problem $\textsf{ST}^+(4)$ (line 8).
    \item Otherwise, if $|N_{T_{\mathrm{FF}}}(t)| < n^{13/17}$, we color $t$ with First-Fit (lines 9--11).
    \item Otherwise, we color $t$ using instance $J_r$, where $J_r$ is the currently running instance of $\textsf{CAND}(4)$ (line 13). If $J_r$ terminates as a result, let $\mathcal{A}_r = \{A_{r,1}, \dots, A_{r,|\mathcal{A}_r|}\}$ be its output. Then, for each $j = 1, \dots, |\mathcal{A}_r|$, we initiate an instance $I_{r,j}$ of $\textsf{SP}^+(4, 3)$ so that $S(I_{r,j}) := [N_{S \setminus S_{\mathrm{used}}}(A_{r,j})]$, where $[X]$ is defined to be the first $\frac{1}{64} n^{15/17}$ elements of $X \subseteq S$ (line 15).

    For each $t \in A_{r,j}$, $|N_{S \setminus S_{\mathrm{used}}}(t)| \geq \frac{1}{2} n^{14/17}$ holds due to $A_{r, j} \subseteq T_{\mathrm{FF}} \setminus T_{\mathrm{NG}}$. Hence, given that $|A_{r,j}| \geq \frac{1}{64} n^{7/17}$, we can apply the no-dense property with $T_D := (\text{the first $\frac{1}{64} n^{7/17}$ vertices of $A_{r,j}$})$ and $S_D := N_{S \setminus S_{\mathrm{used}}}(A_{r,j})$, and we obtain $|N_{S \setminus S_{\mathrm{used}}}(A_{r,j})| \geq \frac{1}{64} n^{15/17}$.
    \item If $|T_{\mathrm{FF}}|$ reaches $n^{16/17}$, or at least $n^{1/17}$ instances of $\textsf{CAND}(4)$ have been terminated after the last refresh of First-Fit, we refresh First-Fit; $T_{\mathrm{FF}}$ is reset to empty, and we use new $n^{13/17}$ colors in the future First-Fit (lines 17--18).
    \item If $J_r$ terminates in step 3 or First-Fit is refreshed in step 4, we initiate a new instance $J_{r+1}$ of $\textsf{CAND}(4)$, which increments $r$ by $1$ (lines 19--20). This means that, in case we refresh First-Fit by $|T_{\mathrm{FF}}|$ reaching $n^{16/17}$, the instance $J_r$ becomes ``inactive'' without being terminated.
\end{enumerate}

Note that, during the run of an instance of $\textsf{CAND}(4)$, no element in $T_{\mathrm{FF}} \setminus T_{\mathrm{NG}}$ gets deleted, as $S_{\mathrm{used}}$ changes only when $\textsf{CAND}(4)$ terminates. The pseudocode for this algorithm is given in \autoref{alg:no-dense-case}.

\begin{algorithm}[htbp]
\caption{The solution for $\textsf{ST}^+(4)$ with $|S| = n, |T| \leq n, \mu = n^{3/17}$ when the no-dense property is met. For the parameters used in $\textsf{CAND}(4)$ and $\textsf{SP}^+(4, 3)$, see \textbf{Parameters} paragraph. We set the $S$ for $\textsf{CAND}(4)$ to be $T_{\mathrm{FF}} \setminus T_{\mathrm{NG}}$ (via the reduction in \autoref{subsec:k5-st-reduction}), where $T_{\mathrm{NG}}$ is defined in \autoref{eq:k4-t-ng}.}
\begin{algorithmic}[1]
    \State $J_1 \gets$ a new instance of $\textsf{CAND}(4)$, and $r \gets 1, c \gets 0$
    \State $T_{\mathrm{FF}} \gets \emptyset$
    \For {each arrival of $t \in T$}
        \State $S_{\mathrm{used}} \gets \bigcup \{S(I_{i,j}): \text{$I_{i,j}$ is active}\}$
        \If {there exists an active instance $I_{i,j}$ such that $|N_{S(I_{i,j})}(t)| \geq \frac{1}{6400} n^{12/17}$}
            \State color $t$ using $I_{i,j}$
            \If {$I_{i,1}, \dots, I_{i,|\mathcal{A}_i|}$ are all terminated}
                \State \Return that $G$ is not $4$-colorable and terminate $\textsf{ST}^+(4)$
            \EndIf
        \ElsIf {$|N_{T_{\mathrm{FF}}}(t)| < n^{13/17}$}
            \State color $t$ with First-Fit
            \State $T_{\mathrm{FF}} \gets T_{\mathrm{FF}} \cup \{t\}$ \Comment{$T_{\mathrm{NG}}$ is also updated}
        \Else
            \State color $t$ using $J_r$
            \If {$J_r$ terminates by outputting $\mathcal{A}_r = \{A_{r,1}, \dots, A_{r,|\mathcal{A}_r|}\}$}
                \State $I_{r,j} \gets$ a new instance of $\textsf{SP}^+(4, 3)$ where $S(I_{r,j}) := [N_{S \setminus S_{\mathrm{used}}}(A_{r,j})]$, for $j = 1, \dots, |\mathcal{A}_r|$
                \State $c \gets c + 1$ \Comment{For $X \subseteq S$, we define $[X]$ to be the first $\frac{1}{64} n^{15/17}$ vertices of $X$}
            \EndIf
        \EndIf
        \If {$|T_{\mathrm{FF}}| \geq n^{16/17}$ or $c \geq n^{1/17}$}
            \State refresh First-Fit, and $T_{\mathrm{FF}} \gets \emptyset, c \gets 0$
        \EndIf
        \If {$J_r$ is terminated or First-Fit is refreshed in this iteration}
            \State $J_{r+1} \gets$ a new instance of $\textsf{CAND}(4)$, and $r \gets r+1$
        \EndIf
    \EndFor
\end{algorithmic}
\label{alg:no-dense-case}
\end{algorithm}

\subsection{No-Dense Case: Correctness \& Analysis} \label{subsec:k4-nodense-proof}

Next, we prove that the algorithm works correctly. The missing piece is to show that, when we color $t$ using $J_r$ (line 13), it meets the degree requirement (\autoref{lem:k4-nodense-degree}). After that, we analyze the number of colors used, and show that \autoref{alg:no-dense-case} achieves the target of $\widetilde{O}(n^{14/17})$ colors (\autoref{lem:k4-nodense}).

\begin{lemma}
    \label{lem:k4-nodense-num}
    Let $X := \{i \in \{1, \dots, r\} : \text{$J_i$ outputted the set $\mathcal{A}_i$ and terminated}\}$. Then, $|X| \leq 64n^{2/17}$.
\end{lemma}

\begin{proof}
    For each $i \in X$, there exists an active instance $I_{i,t_i}$ for some $t_i$. Let $R_i := S(I_{i,t_i})$. Then, for each $i, j \in X \ (i < j)$, $R_i \cap R_j = \emptyset$, because when the instance $I_{j,t_j}$ is created, $R_i = S(I_{i,t_i})$ has already been contained in $S_{\mathrm{used}}$. Also, we have $|R_i| = \frac{1}{64} n^{15/17}$. Therefore, $|X| \leq 64 n^{2/17}$.
\end{proof}

\begin{lemma}
    \label{lem:k4-nodense-ng}
    $|T_{\mathrm{NG}}| < n^{8/17}$ holds under the no-dense property.
\end{lemma}

\begin{proof}
    Let $J_{r'}$ be the instance of $\textsf{CAND}(4)$ running right before the last refresh of First-Fit. Then, each $t \in T_{\mathrm{FF}}$ arrives after $J_{r'}$ becomes inactive. Let $S'_{\mathrm{used}} := \bigcup \{S(I_{i,j}) : \text{$i \leq r'$ and $I_{i,j}$ is active}\}$. Since (i) $|X| \leq 64 n^{2/17}$ (by \autoref{lem:k4-nodense-num}), (ii) $|\mathcal{A}_i| \leq 10$, and (iii) when $I_{i,j} \ (i \leq r')$ is active, $|N_{S(I_{i,j})}(t)| < \frac{1}{6400} n^{12/17}$ for each $t \in T_{\mathrm{FF}}$ (because otherwise it should be colored by $I_{i,j}$), we have for each $t \in T_{\mathrm{FF}}$:
    \begin{equation*}
        |N_{S'_{\mathrm{used}}}(t)| \leq 64 n^{2/17} \cdot 10 \cdot \frac{1}{6400} n^{12/17} = \frac{1}{10} n^{14/17}
    \end{equation*}
    Therefore, by \autoref{eq:k4-t-ng}, we have for each $t \in T_{\mathrm{NG}}$:
    \begin{equation*}
        |N_{S_{\mathrm{used}} \setminus S'_{\mathrm{used}}}(t)| = |N_S(t)| - |N_{S \setminus S_{\mathrm{used}}}(t)| - |N_{S'_{\mathrm{used}}}(t)| \geq n^{14/17} - \frac{1}{2} n^{14/17} - \frac{1}{10} n^{14/17} = \frac{2}{5} n^{14/17}
    \end{equation*}
    This fits the degree requirement for the no-dense property. Suppose that $|T_{\mathrm{NG}}| \geq n^{8/17}$, and let $T_D :=$ (the first $n^{8/17}$ vertices of $T_{\mathrm{NG}}$) and $S_D := S_{\mathrm{used}} \setminus S'_{\mathrm{used}}$. Since $c \leq n^{1/17}$ and $|S(I_{i,j})| = \frac{1}{64} n^{15/17}$:
    \begin{equation*}
        |S_D| \leq n^{1/17} \cdot 10 \cdot \frac{1}{64} n^{15/17} = \frac{5}{32} n^{16/17}
    \end{equation*}
    which contradicts the no-dense property between $T_D$ and $S_D$. Therefore, $|T_{\mathrm{NG}}| < n^{8/17}$.
\end{proof}

\begin{lemma}
    \label{lem:k4-nodense-degree}
    When we color $t$ using instance $J_r$, $|N_{S(J_r)}(t)| \geq \frac{1}{2} n^{13/17}$ holds.
\end{lemma}

\begin{proof}
    Since we did not color $t$ with First-Fit, $|N_{T_{\mathrm{FF}}}(t)| \geq n^{13/17}$ holds. Also, $|T_{\mathrm{NG}}| < n^{8/17}$ by \autoref{lem:k4-nodense-ng}. Therefore, $|N_{S(J_r)}(t)| = |N_{T_{\mathrm{FF}} \setminus T_{\mathrm{NG}}}(t)| \geq n^{13/17} - n^{8/17} \geq \frac{1}{2} n^{13/17}$.
\end{proof}

\begin{lemma}
    \label{lem:k4-nodense}
    \autoref{alg:no-dense-case} uses $\widetilde{O}(n^{14/17})$ colors.
\end{lemma}

\begin{proof}
    First, we analyze the number of colors used in First-Fit. We refresh First-Fit when $|T_{\mathrm{FF}}|$ becomes $n^{16/17}$ (which can happen at most $n^{1/17}$ times), or $n^{1/17}$ instances of $\textsf{CAND}(4)$ have terminated (which can happen at most $64 n^{1/17}$ times due to \autoref{lem:k4-nodense-num}), so the refresh happens at most $65 n^{1/17}$ times. Since each First-Fit uses $n^{13/17}$ colors, we use $O(n^{14/17})$ colors in total here.

    Next, we analyze for $\textsf{CAND}(4)$. By \autoref{lem:k4-nodense-num}, and additionally $|T_{\mathrm{FF}}|$ reaches $n^{16/17}$ for only $n^{1/17}$ times, we have $r = O(n^{2/17})$. Also, each instance $J_i$ of $\textsf{CAND}(4)$ uses $O(|T(J_i)|^{1/2} n^{3/17})$ colors. Thus, we use $O(n^{25/34})$ colors in total here, by \autoref{lem:prelim-ineq}.

    Finally, we analyze for $\textsf{SP}^+(4, 3)$. Since $|X| \leq 64 n^{2/17}$ (\autoref{lem:k4-nodense-num}) and $|\mathcal{A}_i| \leq 10$, we create at most $640n^{2/17}$ instances of $\textsf{SP}^+(4, 3)$. Also, each instance $I_{i,j}$ of $\textsf{SP}^+(4, 3)$ uses $\widetilde{O}(|T(I_{i,j})|^{2/3} n^{2/17})$ colors. Thus, we use $\widetilde{O}(n^{14/17})$ colors in total here, by \autoref{lem:prelim-ineq}.
\end{proof}

\subsection{Avoiding the Dense Case}\label{subsec:k4-dense}

The last step is to reduce the original coloring problem to the case with the no-dense property. Here, we consider applying First-Fit with $2n^{14/17}$ colors (instead of $n^{14/17}$) to solve online coloring of $4$-colorable graphs. Thus, we need to reduce $\textsf{ST}^+(4, \frac{1}{2} n^{3/17})$ to ``the problem $\textsf{ST}^+(4, n^{3/17})$ with the no-dense property''. We realize this based on the following branch framework (\autoref{alg:branch}).

\begin{algorithm}[htbp]
\caption{The branch framework for $\textsf{ST}^+(4)$}
\begin{algorithmic}[1]
    \For {each arrival of $t \in T$}
        \State select either $T_A$ or $T_B$ ``wisely''
        \If {$T_A$ is selected}
            \State color $t$ with the algorithm for the no-dense case (\autoref{alg:no-dense-case})
        \Else
            \State color $t$ with a ``special procedure''
        \EndIf
    \EndFor
\end{algorithmic}
\label{alg:branch}
\end{algorithm}

In the branch framework, the no-dense property must hold between $S$ and $T_A$, and there must exist a deterministic online algorithm that colors the vertices in $T_B$ (line 6) using $\widetilde{O}(n^{14/17})$ colors. Formally, we must solve the following problem:

\paragraph{Division problem.}
Consider the problem $\textsf{ST}^+(4)$ where $|S| = n, |T| \leq n, \mu = \frac{1}{2} n^{3/17}$ (then $|N_S(t)| \geq 2n^{14/17}$ for each $t \in T$). Our task is to divide $T$ into $T_A$ and $T_B$ online, that is, once we receive a vertex $t \in T$, we must immediately decide whether it belongs to $T_A$ or $T_B$ (see \autoref{fig-510}). If we select $T_B$, we must color $t$. Our objective is to ensure that the no-dense property holds between $S$ and $T_A$, and that the vertices in $T_B$ are colored with $\widetilde{O}(n^{14/17})$ colors.

Also, to achieve the no-dense property, immediately after a vertex $t \in T$ arrives (i.e., before selecting $T_A$ or $T_B$), we are allowed to delete some edges from $t$ into $S$. $|N_S(t)| \geq n^{14/17}$ must still hold after the deletion (otherwise this would violate the requirements for the no-dense case).

\begin{figure}[t]
  \centering
  \includegraphics[width=\linewidth]{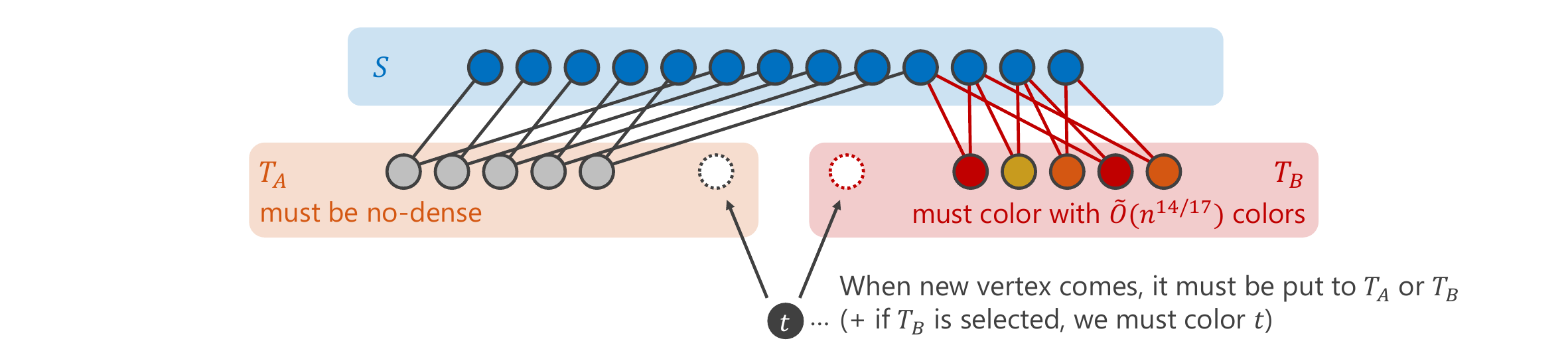}
  \caption{A sketch of the division problem.} \label{fig-510}
\end{figure}

If the division problem is solved, we obtain a deterministic algorithm to solve $\textsf{COL}(4)$ with $\widetilde{O}(n^{14/17})$ colors. To solve the division problem, we first show an important definition and lemma.

\begin{definition}\label{def:k4-beta-common}
    We say that two vertices $u_1, u_2 \in T$ are $\beta$-common if $|N_S(u_1) \cap N_S(u_2)| \geq n^{\beta}$. We say that $T_A$ is $(\alpha, \beta)$-free if there is no vertex set $\{t_1, \dots, t_{n^\alpha+1}\} \subseteq T_A$ such that $t_1, \dots, t_{n^{\alpha}+1}$ arrive in this order, and for all $i$, $t_i$ and $t_{n^{\alpha}+1}$ are $\beta$-common.
\end{definition}

\begin{lemma}\label{lem:k4-ab-free}
    Suppose that the no-dense property does not hold between $S$ and $T_A$. Then, $T_A$ is not $(\alpha_i, \beta_i)$-free for at least one $i$, where $\Delta = \log n, K = \frac{2 \log \log n}{\log n}$, and:
    \begin{equation}
        \label{eq:k4-common-ab}
        \left(\alpha_i, \beta_i\right) = \left(\frac{8}{17} - \left(\frac{2}{17} + K\right) \cdot \frac{\Delta - i}{\Delta}, \frac{14}{17} - \left(\frac{2}{17} + K\right) \cdot \frac{i}{\Delta}\right) \quad (i = 0, \dots, \Delta)
    \end{equation}
\end{lemma}

\begin{proof}
    Since the no-dense property does not hold, there exist subsets $S_D \subseteq S, T_D \subseteq T_A$ such that $|T_D| = n^a$ and $|S_D| \leq n^{a+8/17}$ hold for some $a \leq \frac{8}{17}$, and $|N_{S_D}(t)| \geq \frac{1}{3} n^{14/17}$ for each $t \in T_D$. It suffices to consider the ``hardest'' case that $|N_{S_D}(t)| = \frac{1}{3} n^{14/17}$, because $T_A$ would not \emph{newly} become non-$(\alpha_i, \beta_i)$-free by removing the excess edges.
    
    Suppose for a contradiction that $T_A$ is $(\alpha_i, \beta_i)$-free for all $i$.

    We consider the sum of $c(t_1, t_2) = |N_{S_D}(t_1) \cap N_{S_D}(t_2)|$ over all $\{t_1, t_2\} \subseteq T_D$ (with $t_1$ arriving earlier than $t_2$) in two ways. First, the sum is equal to the number of tuples $(t_1, t_2, s) \ (\{t_1, t_2\} \in T_D, s \in S_D$ such that $s$ is adjacent to both $t_1$ and $t_2$). For each $s$, there are $\frac{1}{2} |N_{T_D}(s)| (|N_{T_D}(s)| - 1)$ ways to choose $\{t_1, t_2\}$ to meet the condition. Hence, the following equation holds:
    \begin{equation*}
        \sum_{\{t_1, t_2\} \subseteq T_D} c(t_1, t_2) = \sum_{s \in S_D} \frac{1}{2} |N_{T_D}(s)| (|N_{T_D}(s)| - 1).
    \end{equation*}
    
    There are $\frac{1}{3} n^{a+14/17}$ edges between $S_D$ and $T_D$. Therefore, $\sum_{s \in S_D} |N_{T_D}(s)| = \frac{1}{3} n^{a+14/17}$. Combining this with $|S_D| \leq n^{a+8/17}$, we obtain $\sum_{s \in S_D} |N_{T_D}(s)|^2 \geq \frac{1}{9} n^{a+20/17}$.\footnote{For any real numbers $x_1, \dots, x_k$, the inequality $x_1^2 + \dots + x_k^2 \geq \frac{1}{k} (x_1 + \dots + x_k)^2$ holds.} Therefore, 
    \begin{equation*}
        \sum_{\{t_1, t_2\} \subseteq T_D} c(t_1, t_2) = \Omega(n^{a+20/17}).
    \end{equation*}

    Second, given that $T_A$ is $(\alpha_i, \beta_i)$-free, for each $t_2$, the number of $t_1$'s that $c(t_1, t_2) \geq n^{\beta_i}$ is at most $n^{\alpha_i}$. This means that the number of $\{t_1, t_2\}$'s that $c(t_1, t_2) \geq n^{\beta_i}$ is at most $n^{a + \alpha_i}$. Let $[c_1, \dots, c_m]$ denote the list of $c(t_1, t_2)$'s for all $\{t_1, t_2\}$'s, sorted in descending order. Then, $c_j \leq n^{\beta_i}$ must hold for $j = n^{a + \alpha_i}+1, \dots, n^{a+\alpha_{i+1}}$. For $j = 1, 2, \dots, n^{a+\alpha_0}$, $c_j \leq n^{\beta_0}$ must hold, by the assumption $|N_{S_D}(t)| = \frac{1}{3} n^{14/17} \ (\leq n^{\beta_0})$. Therefore, the following holds (note that $m \leq n^{a+\alpha_\Delta}$, and $n^{\alpha_i+\beta_i} = n^{20/17-K} = \frac{n^{20/17}}{(\log n)^2}$ for each $i$):
    \begin{align*}
        \sum_{i=1}^m c_i & \leq n^{a+\alpha_0} \cdot n^{\beta_0} + \sum_{i=0}^{\Delta-1} (n^{a+\alpha_{i+1}} - n^{a+\alpha_i}) \cdot n^{\beta_i} \\
        & = n^{a+20/17-K} + \sum_{i=0}^{\Delta-1} (n^{(2/17+K)/\Delta}-1) \cdot n^{a+20/17-K} \\
        & = O(n^{a+20/17}/\log n).
    \end{align*}
    
    Thus, the sum of $c(t_1, t_2)$'s is both $\Omega(n^{a+20/17})$ and $O(\frac{n^{a+20/17}}{\log n})$, which is a contradiction. Therefore, \autoref{lem:k4-ab-free} follows.
\end{proof}

By \autoref{lem:k4-ab-free}, the division problem is solved if $T_A$ is $(\alpha_i, \beta_i)$-free for all $i = 0, \dots, \Delta$. Now, we aim to achieve $(\alpha_i, \beta_i)$-free, instead of the no-dense property. We call this strategy the \emph{Common \& Simplify technique}.

\subsection{Common \& Simplify Technique: Algorithm}

Since handling multiple pairs at once is quite challenging, we consider an algorithm that achieves $(\alpha, \beta)$-free for a single pair $(\alpha, \beta)$.

\paragraph{Overview.}
Recall that $\textsf{SP}^+(4, 2)$ can be solved using $\widetilde{O}(|T|^{2/3} \mu^{1/3})$ colors (\autoref{lem:k5-col}). The strategy is to create $2$-color sets using the dense structure. The key intuition is as follows:

\begin{quote}
    Using the vertices in $T$ that have many common neighbors, we aim to obtain large 2-color sets. Indeed, if two $\beta$-common vertices $u_1, u_2 \in T$ have different colors in some 4-coloring of $G$, then $N_S(u_1) \cap N_S(u_2)$ is a 2-color set in some 4-coloring of $G$, with $n^{\beta}$ vertices, which is sufficiently large.
\end{quote}

When a vertex $t \in T$ arrives, we normally put $t$ into $T_A$, but in case that putting it would make $T_A$ not $(\alpha, \beta)$-free, we select $T_B$ instead. In such a case, we create the set $D := \{u \in T_A: \text{$u$ and $t$ are $\beta$-common}\}$, which we call a \emph{dense group}, and it satisfies $|D| \geq n^{\alpha}$ (\autoref{fig-513} left).

This dense group is used for the future vertices $t \in T$ that are $\beta$-common to some vertices in $D$. Let $F$ be the set of such vertices. As long as there is no edge inside $F$, we can color each vertex in $F$ using the same color (\autoref{fig-513} middle). If an edge $\{u_1, u_2\}$ appears inside $F$, we can create candidate 2-color sets. Indeed, when $u_j$ is $\beta$-common with $w_j \in D$ (for $j = 1, 2$) and each vertex in $F$ is $\beta$-common with $x$, we can prove that at least one of
\begin{equation}
    \label{eq:k4-common-2color}
    [N_S(u_1) \cap N_S(w_1)], [N_S(w_1) \cap N_S(x)], [N_S(x) \cap N_S(w_2)], [N_S(w_2) \cap N_S(u_2)]
\end{equation}
must form a 2-color set (\autoref{fig-513} right), where $[X]$ is the first $n^{\beta}$ vertices of $X \subseteq V$. This is because $u_1$ and $u_2$ always have different colors in any 4-coloring of $G$, and therefore at least one of $(u_1, w_1), (w_1, x), (x, w_2), (w_2, u_2)$ has different colors in some 4-coloring of $G$. We initiate instances of $\textsf{SP}^+(4, 2)$ based on these sets, and use them to color the future vertices effectively.

\begin{figure}[t]
  \centering
  \includegraphics[width=\linewidth]{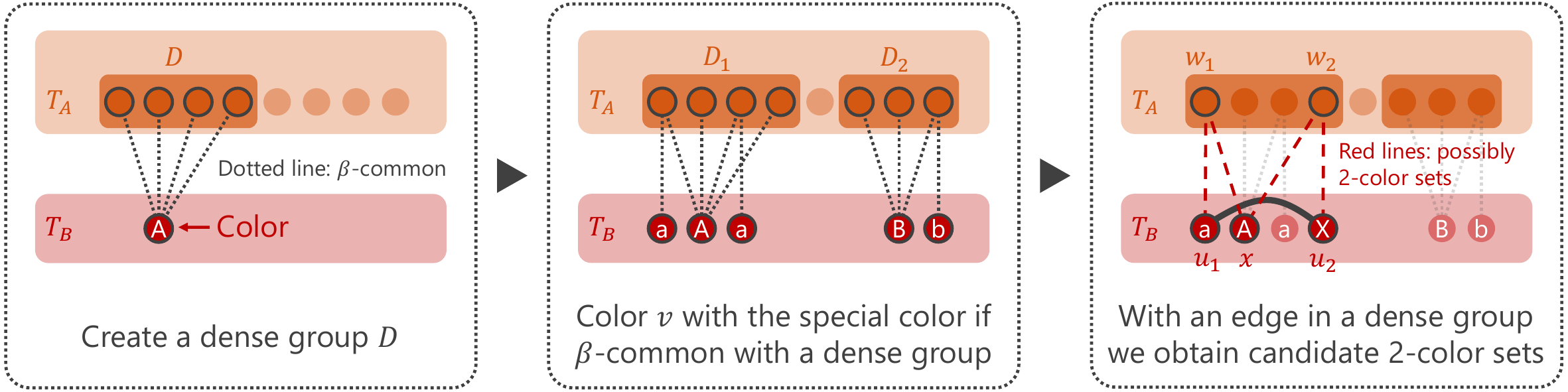}
  \caption{The sketch of the algorithm between $T_A$ and $T_B$. The dotted line indicates that two vertices are $\beta$-common (it does not always mean they are adjacent). Colors \texttt{A}, \texttt{a}, \texttt{B}, \texttt{b}, and \texttt{X} are different.} \label{fig-513}
\end{figure}

\paragraph{Parameters.}
When we create an instance $I$ of $\textsf{SP}^+(4, 2)$, we always set $|S(I)| = n^{\beta}$ and $\mu(I) = 8 n^{3/17} \log n$. $|N_{S(I)}(t)| \geq \frac{n^{\beta-3/17}}{8 \log n}$ holds for each $t \in T(I)$. This uses $\widetilde{O}(|T|^{2/3} n^{1/17})$ colors by \autoref{lem:k5-col}.

\paragraph{Implementation.}
We denote the created instances of $\textsf{SP}^+(4, 2)$ by $I_{i,j}$, where $i$ follows chronological order, and $I_{i,1}, \dots, I_{i,4}$ are created at the same time. We also denote the vertices that caused the creation of dense groups by $x_1, x_2, \dots$. We denote by $D_i$ the dense group created by $x_i$, and $F_i$ for the set of vertices which are $\beta$-common to some vertices in $D_i$ (and colored with the same color).

We also maintain a counter $r$ for the number of times instances of $\textsf{SP}^+(4, 2)$ are created (initially zero), and a counter $c$ for the number of times a dense group is created (initially zero). Whenever a vertex $t \in T$ arrives, we do the following procedure:
\begin{enumerate}
    \item If there exists a non-terminated instance $I_{i,j}$ such that $|N_{S(I_{i,j})}(t)| \geq \frac{n^{\beta-3/17}}{8 \log n}$, we choose one such instance $I_{i,j}$ arbitrarily, and color $t$ using this instance $I_{i,j}$. If $I_{i,1}, \dots, I_{i,4}$ are all terminated as a result, then we declare that $G$ is not $4$-colorable and terminate the original problem $\textsf{ST}^+(4)$.
    \item Otherwise, to ensure the no-dense property between $S$ and $T_A$, we delete the edges between $t$ and $S_{\mathrm{used}}$, where $S_{\mathrm{used}} := \bigcup \{S(I_{i,j}): \text{$I_{i,j}$ is active}\}$. Then:
    \begin{enumerate}
        \item If there exists $i$ such that $t$ is $\beta$-common to some vertices in $D_i$, we put $t$ into $F_i$ and color $t$; we can color all vertices in $F_i$ with the same color, as long as there is no edge inside $F_i$. If there is an edge $\{u_1, u_2\}$ inside $F_i$, we create four instances $I_{r+1,1}, \dots, I_{r+1,4}$ of $\textsf{SP}^+(4, 2)$ as explained in the overview (\autoref{eq:k4-common-2color}), which increments $r$ by $1$. We then ``reset'' all dense groups; we delete all $x_i, D_i, F_i$'s and reset $c$ to $0$.
        \item Otherwise, if $D' := \{u \in T_A : \text{$u$ and $t$ are $\beta$-common}\}$ has size $n^{\alpha}$ or more, we color $t$ using a new color, and create a dense group: $x_{c+1} \gets t, D_{c+1} \gets D', F_{c+1} \gets \emptyset$ and increment $c$ by $1$ (this process is the key to keep $(\alpha, \beta)$-free). If not, we put $t$ into $T_A$.
    \end{enumerate}
\end{enumerate}
The pseudocode of this algorithm is given in \autoref{alg:division}.

\begin{algorithm}[!t]
\caption{The algorithm for the division problem to make it $(\alpha, \beta)$-free}
\begin{algorithmic}[1]
    \State $r \gets 0, c \gets 0$
    \For {each arrival of $t \in T$}
        \If {there exists an active instance $I_{i,j}$ such that $|N_{S(I_{i,j})}(t)| \geq \frac{n^{\beta-3/17}}{8 \log n}$}
            \State color $t$ using $I_{i,j}$
            \If {$I_{i,1}, \dots, I_{i,4}$ are all terminated}
                \State \Return that $G$ is not $4$-colorable and terminate $\textsf{ST}^+(4)$
            \EndIf
        \Else
            \State delete all edges between $t$ and $S_{\mathrm{used}} := \bigcup \{S(I_{i,j}) : \text{$I_{i,j}$ is active}\}$
            \If {there exists $i$ such that $t$ is $\beta$-common to some vertices in $D_i$}
                \State $F_i \leftarrow F_i \cup \{t\}$
                \If {there exist $u_1, u_2 \in F_i$ that are adjacent}
                    \State color $t$ with a new color
                    \State $w_j \gets$ a vertex in $D_i$ that is $\beta$-common with $u_j$, for $j = 1, 2$
                    \State initiate instances $I_{r+1,1}, \dots, I_{r+1,4}$ of $\textsf{SP}^+(4, 2)$ where $S(I_{r+1,j})$ is set to each of $[N_S(u_1) \cap N_S(w_1)], [N_S(w_1) \cap N_S(x_i)], [N_S(x_i) \cap N_S(w_2)], [N_S(w_2) \cap N_S(u_2)]$ \Comment{For $X \subseteq S$, we define $[X]$ to be the first $n^{\beta}$ vertices of $X$}
                    \State $r \gets r+1$, delete all $x_i, D_i, F_i$'s, and then reset $c \gets 0$
                \Else
                    \State color $t$ with the color reserved for $F_i$
                \EndIf
            \Else
                \State $D' := \{u \in T_A : \text{$u$ and $t$ are $\beta$-common}\}$
                \If {$|D'| \geq n^{\alpha}$}
                    \State color $t$ with a new color
                    \State $x_{c+1} \gets t, D_{c+1} \gets D', F_{c+1} \gets \emptyset$, and then $c \gets c+1$
                \Else
                    \State put $t$ into $T_A$
                \EndIf
            \EndIf
        \EndIf
    \EndFor
\end{algorithmic}
\label{alg:division}
\end{algorithm}

\subsection{Common \& Simplify Technique: Correctness \& Analysis}

Next, we prove that the algorithm works correctly. The missing piece is to show that at least one of the four created instances actually admit 2-color sets (\autoref{lem:k4-common-path}), and at least $n^{14/17}$ edges remain when we delete the edges between $t$ and $S_{\mathrm{used}}$ (\autoref{lem:k4-common-edges}). After that, we analyze the number of colors used, and show that \autoref{alg:division} achieves the target of $\widetilde{O}(n^{14/17})$ colors (\autoref{lem:k4-common-colors}).

\begin{lemma}
    \label{lem:k4-common-path}
    If instances $I_{i,1}, \dots, I_{i,4}$ are all terminated, then $G$ is not $4$-colorable.
\end{lemma}

\begin{proof}
    Let $\varphi: V \to \{1, 2, 3, 4\}$ be a $4$-coloring of $G$. Since $I_{i,1}, \dots, I_{i,4}$ are all terminated, each of the four sets in \autoref{eq:k4-common-2color} must have at least $3$ distinct colors under $\varphi$. Let $x, u_1, u_2, w_1, w_2$ be the vertices that created the instances $I_{i,1}, \dots, I_{i,4}$ (the variable names are according to \autoref{eq:k4-common-2color}). Since $u_1$ and $u_2$ are adjacent, $\varphi(u_1) \neq \varphi(u_2)$. Thus, at least one of $\varphi(u_1) \neq \varphi(w_1), \varphi(w_1) \neq \varphi(x), \varphi(x) \neq \varphi(w_2), \varphi(w_2) \neq \varphi(u_2)$ holds. This means that at least one of the four sets in \autoref{eq:k4-common-2color} has at most $2$ distinct colors under $\varphi$, which is a contradiction. Therefore, $G$ is not $4$-colorable.
\end{proof}

\begin{lemma}
    \label{lem:k4-common-num}
    $r \leq n^{1-\beta}$.
\end{lemma}

\begin{proof}
    For each $i = 1, \dots, r$, there exists an active instance $I_{i,t_i}$ for some $t_i$. Let $R_i := S(I_{i,t_i})$. Then, for each $i, j \ (i < j)$, $R_i \cap R_j = \emptyset$. This is because the reset of dense groups happen when $I_{i,t_i}$ is initiated, and hence the vertices $x, u_1, u_2, w_1, w_2$ (from \autoref{eq:k4-common-2color}) that created $I_{j,1}, \dots, I_{j,4}$ all arrive later than initiating $I_{i,t_i}$. Thus, the edges from $\{x, u_1, u_2, w_1, w_2\}$ to $S(I_{i,t_i})$ are all deleted, so by \autoref{eq:k4-common-2color}, $R_j = S(I_{j,t_j})$ does not overlap with $R_i = S(I_{i,t_i})$. Since $|R_i| = n^{\beta}$, we have $r \leq n^{1-\beta}$.
\end{proof}

\begin{lemma}
    \label{lem:k4-common-edges}
    At most $\frac{n^{14/17}}{2 \log n}$ edges are deleted for each $t \in T$.
\end{lemma}

\begin{proof}
    We delete the edges between $t$ and $S_{\mathrm{used}}$. For each active instance $I_{i,j}$, we have $|N_{S(I_{i,j})}(t)| < \frac{n^{\beta-3/17}}{8 \log n}$ (because otherwise $t$ should be colored by $I_{i,j}$). Also, the number such instances $I_{i,j}$ is at most $4n^{1-\beta}$ by \autoref{lem:k4-common-num}. Therefore, at most $\frac{n^{\beta-3/17}}{8 \log n} \cdot 4n^{1-\beta} = \frac{n^{14/17}}{2 \log n}$ edges are deleted.
\end{proof}

\begin{lemma}
    \label{lem:k4-common-colors}
    \autoref{alg:division} uses $\widetilde{O}(n^{14/17})$ colors, if $\alpha + \beta \geq \frac{20}{17} - K$ and $\beta \geq \frac{12}{17} - K$.
\end{lemma}

\begin{proof}
    First, we analyze the number of colors used in $\textsf{SP}^+(4, 2)$. Since we create at most $4n^{1-\beta}$ instances of $\textsf{SP}^+(4, 2)$ by \autoref{lem:k4-common-num}, and each instance $I_{i,j}$ of $\textsf{SP}^+(4, 2)$ uses $\widetilde{O}(|T(I_{i,j})|^{2/3} n^{1/17})$ colors, we use $\widetilde{O}(n^{37/51+(1-\beta)/3})$ colors in total by \autoref{lem:prelim-ineq}, which is $\widetilde{O}(n^{14/17})$.

    Next, we analyze the number of colors used to handle dense groups. For each dense group, three colors will be used: one is for $x_i$, one is for the last vertex in $F_i$, and one is for the other vertices in $F_i$. Also, $D_i \cap D_j = \emptyset$ for each $i, j \ (i < j)$, because otherwise, $x_j$ would have been put into $F_i$ (instead of creating a dense group). Since $|D_i| \geq n^{\alpha}$ for each $i$, we have $c \leq n^{1-\alpha}$. Also, these dense groups are reset at most $n^{1-\beta}$ times by \autoref{lem:k4-common-num}. Therefore, the number of colors used here is $3 \cdot n^{1-\alpha} \cdot n^{1-\beta} = \widetilde{O}(n^{2-(\alpha+\beta)})$, which is $\widetilde{O}(n^{14/17})$.
\end{proof}

\subsection{The Final Build-up}

Finally, we put everything together to show that we can color $4$-colorable graphs online with $\widetilde{O}(n^{14/17})$ colors. We begin by presenting the algorithm for the original division problem.

\begin{lemma}
    \label{lem:k4-division}
    The division problem can be solved with $\widetilde{O}(n^{14/17})$ colors.
\end{lemma}

\begin{proof}
    To solve the division problem, we run instances $I_0, \dots, I_{\Delta}$ of \autoref{alg:division} (with the same $S$) in parallel, where $I_i$ runs the procedure to achieve $(\alpha_i, \beta_i)$-free, so that each instance uses disjoint colors. We put each arriving vertex $t \in T$ for the original division problem into $T(I_0)$. For each $i = 0, \dots, \Delta-1$, if $t$ is put into $T_A(I_i)$, then we put $t$ into $T(I_{i+1})$. If $t$ is put into $T_A(I_{\Delta})$, we put $t$ into $T_A$ in the original division problem.
    
    For each $i$, if $T(I_i)$ is $(\alpha_0, \beta_0), \dots, (\alpha_{i-1}, \beta_{i-1})$-free, then $T(I_i) = T_A(I_{i-1})$ becomes $(\alpha_0, \beta_0), \dots, (\alpha_i, \beta_i)$-free. Thus, by induction on $i$, $T_A = T_A(I_{\Delta})$ is $(\alpha_i, \beta_i)$-free for all $i = 0, \dots, \Delta$. Thus, the no-dense property holds between $S$ and $T_A$ by \autoref{lem:k4-ab-free}, meeting the requirement for the division problem.
    
    For each $t \in T$, since at most $\frac{n^{14/17}}{2 \log n}$ edges are deleted for each run of \autoref{alg:division} (\autoref{lem:k4-common-edges}), at most $\frac{n^{14/17}}{2 \log n} \cdot (\Delta+1) \leq n^{14/17}$ edges are deleted in total. Thus, at least $2n^{14/17} - n^{14/17} = n^{14/17}$ edges remain, which does not violate the input condition for the no-dense case. Moreover, we use $\widetilde{O}(n^{14/17}) \cdot (\Delta+1) = \widetilde{O}(n^{14/17})$ colors due to $\Delta = \log n$ and \autoref{lem:k4-common-colors}.
\end{proof}

\begin{theorem}
    \label{thm:k4-col}
    $\textsf{COL}(4)$ can be solved with $\widetilde{O}(n^{14/17})$ colors.
\end{theorem}

\begin{proof}
    The division problem can be solved with $\widetilde{O}(n^{14/17})$ colors by \autoref{lem:k4-division}. The vertices in $T_A$ can be colored with \autoref{alg:no-dense-case} for the no-dense case, which uses $\widetilde{O}(n^{14/17})$ colors by \autoref{lem:k4-nodense}. We also use $2n^{14/17}$ colors for First-Fit (to reduce $\textsf{COL}'(4)$ to $\textsf{ST}^+(4)$). Thus, $\textsf{COL}'(4)$ can be solved with $\widetilde{O}(n^{14/17})$ colors. By \autoref{lem:k5-n-is-known}, $\textsf{COL}(4)$ can also be solved with $\widetilde{O}(n^{14/17})$ colors.
\end{proof}

In addition, by combining with our results in \autoref{sec:k5}, we can also improve the results for $k \geq 5$. The following \autoref{col:k4-improvement} slightly improves the results for $k \geq 5$ over \autoref{thm:k5-col}. Our best results, compared to Kierstead's algorithm \cite{Kie98}, are shown in \autoref{tab:intro} in \autoref{subsec:intro-contributions}.

\begin{corollary}
    \label{col:k4-improvement}
    For all $k \geq 4$, $\textsf{COL}(k)$ can be solved with $\widetilde{O}(n^{1-\frac{1}{k(k-1)/2-1/3}})$ colors.
\end{corollary}

\begin{proof}
    First, we can solve $\textsf{COL}(4)$ with $n^{14/17}$ colors by \autoref{thm:k4-col}, which proves the statement for $k = 4$. Next, we consider the case $k \geq 5$. We prove this by induction on $k$. \autoref{lem:k5-col} shows that $\textsf{COL}(k)$ can be solved with $\widetilde{O}(n^{1-\frac{\alpha}{1+(k-1)\alpha}})$ colors, where we can substitute $\alpha = \frac{1}{(k-1)(k-2)/2-1/3}$ by the induction hypothesis. Thus, $\textsf{COL}(k)$ can be solved with $\widetilde{O}(n^{1-\frac{1}{k(k-1)/2-1/3}})$ colors.
\end{proof}

\section{Randomized Algorithm for Bipartite Graphs}\label{sec:k2-upper}

In this section, we show a randomized online coloring algorithm for bipartite graphs, against an oblivious adversary (the case where the graph is determined before the coloring begins).

\subsection{The Algorithm by Lov\'{a}sz, Saks, and Trotter}

Firstly, we review a previously-known deterministic online coloring algorithm for bipartite graphs by Lov\'{a}sz, Saks, and Trotter (1989) \cite{LST89}. This algorithm uses $2 \log (n+1)$ colors in the worst case, and is an optimal deterministic algorithm up to a constant number of colors \cite{GKM+14}. The idea of this algorithm is to always try to achieve the ``bipartite coloring'', and once it becomes impossible, we start using a new color. 
Note that the hard case is when there are many components of bipartite graphs, and some of them start merging when a new vertex $v$ arrives. In this case, we may need a new color for $v$. 

Formally, we assign a \emph{level} to each connected component, so that a level-$\ell$ component uses colors $1, 2, \dots, 2\ell$. The meaning of the level is how many times the algorithm fails to achieve the ``bipartite coloring'' and is forced to use a new color. Therefore, we would like to estimate the maximum level.

We call a level-$\ell$ component \emph{matched} if the colors $2\ell-1$ and $2\ell$ are based on the ``correct'' $2$-coloring; that is, distances of any two vertices with color $2\ell-1$ are even, distances of any two vertices with color $2\ell$ are even, and distances of any vertex with color $2\ell-1$ and any vertex with color $2\ell$ are odd. At any time, every component is supposed to be matched. So, when a new vertex $v$ arrives and some component becomes no longer matched, we increase the component's level by $1$ and use a new color for vertex $v$. Otherwise, we color $v$ with either color $2\ell-1$ or $2\ell$, whichever is matched. Examples are shown in \autoref{fig-522}.

\begin{figure}[htbp]
    \centering
    \includegraphics[width=\linewidth]{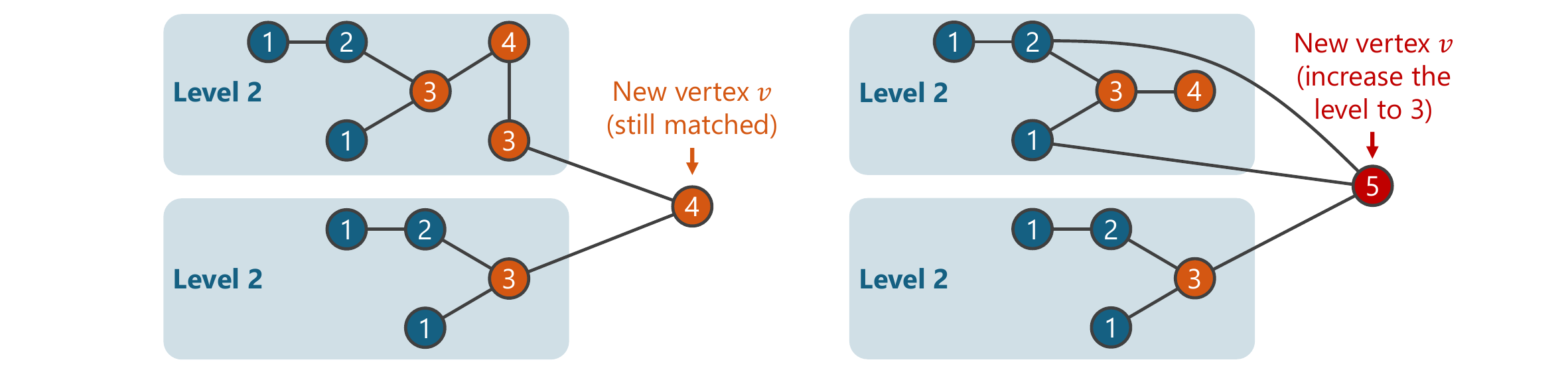}
    \caption{Two examples of the algorithm by Lov\'{a}sz, Saks, and Trotter \cite{LST89}. Left figure shows an example when the component is still matched, and the right figure shows an example when we must increase the level from 2 to 3. The number inside each vertex is the color used.}
    \label{fig-522}
\end{figure}

Lov\'{a}sz, Saks, and Trotter's algorithm is described in \autoref{alg:lst89} (\textsc{LST89}). We suppose that $V = \{v_1, \dots, v_n\}$, arriving in the order $v_1, \dots, v_n$. When vertex $v_i$ arrives, the resulting component including $v_i$ is denoted as $C_i$, and its level, after coloring $v_i$, is denoted as $\ell_i$.

\begin{algorithm}[htbp]
	\caption{$\textsc{LST89}(G = (V, E))$}
    \label{alg:lst89}
    \begin{algorithmic}[1]
        \State $S \gets \emptyset$ \Comment{the set of indices of connected components of the current graph}
        \For{$i = 1, \dots, n$}
            \State $x_{i,1}, \dots, x_{i,k_i} \gets \{j \in S : \exists u \in C_j, uv_i \in E\}$ \Comment{the indices of components that are adjacent to $v_i$}
            \State $C_i \gets C_{x_{i,1}} \cup \dots \cup C_{x_{i,k_i}} \cup \{v_i\}$
            \State $S \gets (S \setminus \{x_{i,1}, \dots, x_{i,k_i}\}) \cup \{i\}$
            \If{$k_i = 0$}
                \State Color $v_i$ by color $1$
                \State $\ell_i \gets 1$
            \Else
                \State $\ell^*_i \gets \max(\ell_{x_{i,1}}, \dots, \ell_{x_{i,k_i}})$
                \If{$C_i$ is matched (as a level-$\ell^*_i$ component)}
                    \State Color $v_i$ by either color $2\ell^*_i-1$ or $2\ell^*_i$ so that $C_i$ remains matched
                    \State $\ell_i \gets \ell^*_i$
                \Else
                    \State Color $v_i$ by color $2\ell^*_i+1$
                    \State $\ell_i \gets \ell^*_i+1$
                \EndIf
            \EndIf
        \EndFor
    \end{algorithmic}
\end{algorithm}

\begin{theorem}[\cite{LST89}] \label{thm:lst89}
    \autoref{alg:lst89} (\textsc{LST89}) uses at most $2 \log (n+1)$ colors for any graph.
\end{theorem}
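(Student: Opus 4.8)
The plan is to track the \emph{level} $\ell_i$ that the algorithm assigns to the connected component $C_i$ containing each arriving vertex $v_i$, and to show that no component ever exceeds level $\log_2(n+1)$. Since a component of level $\ell$ only ever receives colors from $\{1,\dots,2\ell\}$ (on a bump the culprit vertex gets $2\ell-1$, and later vertices joining while it stays at level $\ell$ get $2\ell-1$ or $2\ell$), bounding the maximum level by $\log_2(n+1)$ immediately bounds the total number of colors used by $2\log_2(n+1)$. Properness of the coloring is not an issue: in each branch of \autoref{alg:lst89} the color handed to $v_i$ is explicitly chosen to avoid its already-colored neighbors.

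The core of the argument is the claim that, at every moment of the execution, a connected component of level $\ell$ contains at least $2^{\ell}-1$ vertices, which I would prove by induction on the arrival step. The base case $\ell=1$ is immediate (a level-$1$ component has a vertex). For the step, look at how $C_i$ got its level. If $\ell_i=\ell^{*}_i$ (the ``matched'' branch), then $C_i$ contains one of the merged components, which already had level $\ell_i$ and hence, by induction, at least $2^{\ell_i}-1$ vertices. The delicate case is a level increase, $\ell_i=\ell^{*}_i+1$, and here I would show that \emph{at least two} of the merged components $C_{x_{i,1}},\dots,C_{x_{i,k_i}}$ already had level exactly $\ell^{*}_i$. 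Granting that, those two components are vertex-disjoint and each has at least $2^{\ell^{*}_i}-1$ vertices by induction, so
$|C_i|\ge 2\bigl(2^{\ell^{*}_i}-1\bigr)+1 = 2^{\ell^{*}_i+1}-1 = 2^{\ell_i}-1$,
which closes the induction. It then follows that the final maximum level $\ell_{\max}$ satisfies $2^{\ell_{\max}}-1\le n$, i.e.\ $\ell_{\max}\le\log_2(n+1)$, and the algorithm uses at most $2\ell_{\max}\le 2\log_2(n+1)$ colors.

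The main obstacle is precisely the ``two components of level $\ell^{*}_i$'' claim, and this is where the bipartite (odd-cycle-free) structure of $G$ is used. Note first that only components of level exactly $\ell^{*}_i$ carry vertices colored $2\ell^{*}_i-1$ or $2\ell^{*}_i$, since components of smaller level use only colors $\le 2\ell^{*}_i-2$. Suppose for contradiction that just one merged component, say $C_{x_{i,1}}$, has level $\ell^{*}_i$, so that every vertex of $C_i$ carrying one of the two ``top'' colors lies in $C_{x_{i,1}}$ (except possibly $v_i$ itself). Because $C_{x_{i,1}}$ was matched as a level-$\ell^{*}_i$ component and $G$ is bipartite, the neighbors of $v_i$ inside $C_{x_{i,1}}$ all lie on one side of the distance-parity bipartition of $C_{x_{i,1}}$ (otherwise $C_{x_{i,1}}\cup\{v_i\}$ contains an odd cycle). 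Attaching $v_i$ together with the remaining, strictly lower-level components --- which are connected to the rest only through $v_i$ --- can only introduce shortcuts, but cannot change the \emph{parity} of the distance between any two top-colored vertices of $C_{x_{i,1}}$, since the enlarged component is still bipartite. Hence $v_i$ can be colored with the appropriate top color and $C_i$ remains matched as a level-$\ell^{*}_i$ component, contradicting the fact that $v_i$ forced a level increase. I expect the careful bookkeeping here --- tracking how distances and their parities behave as $v_i$ and the lower-level components are glued on, and checking that the ``matched'' predicate genuinely survives --- to be the only subtle point; everything else is routine arithmetic.
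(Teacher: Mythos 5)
Your proposal follows essentially the same route as the paper's proof: both rest on the key claim that a level-$\ell$ component can only arise by merging at least two level-$(\ell-1)$ components, which gives the doubling recurrence (minimum size of a level-$\ell$ component $\geq 2\cdot(\text{minimum size of a level-}(\ell-1)\text{ component})+1$), hence at least $2^{\ell}-1$ vertices per level-$\ell$ component and at most $2\log_2(n+1)$ colors. The difference is that the paper asserts this claim in a single sentence, while you actually prove it, via the observation that in a bipartite graph distance parities are determined by the global bipartition, so attaching $v_i$ together with strictly lower-level components to a \emph{single} matched top-level component cannot destroy matchedness; this is a genuine (if small) addition, and your bookkeeping there is sound. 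One caveat on scope: your argument explicitly uses that $G$ is odd-cycle-free, whereas the theorem is phrased ``for any graph''; but the paper's own one-line claim also needs bipartiteness (already for a triangle a level-2 component is created from a single level-1 component, and iterating such odd-cycle merges pushes the level, and hence the largest color index, up linearly in $n$), so the statement should be read, as in \cite{LST89} and as it is used in the rest of the section, for bipartite inputs --- and for those your proof is exactly what is needed.
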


\begin{proof}
    Let $a_{\ell}$ be the minimum $n$ that level-$\ell$ components can appear. For $\ell \geq 2$, a level-$\ell$ component can be formed only when two or more level-$(\ell-1)$ components are merged into one component. Therefore, $a_{\ell} \geq 2a_{\ell-1} + 1$. Then, $a_{\ell} \geq 2^\ell-1$ follows, so the algorithm uses at most $2 \log (n+1)$ colors.
\end{proof}

\subsection{The Randomized Algorithm and Its Probabilistic Model}

In this subsection, we present a randomized algorithm for online coloring of bipartite graphs. This algorithm is essentially the randomized version of \textsc{LST89}, which makes the following modifications to \autoref{alg:lst89}:
\begin{itemize}
    \item Change line 7 to ``Color $v_i$ by color $1$ or $2$ with probability $\frac{1}{2}$ each''
    \item Change line 15 to ``Color $v_i$ by color $2\ell^*_i+1$ or $2\ell^*_i+2$ with probability $\frac{1}{2}$ each''
\end{itemize}
The new algorithm is denoted by \textsc{RandomizedLST}. The number of colors will be improved from \textsc{LST89}, because the adversarial case, which requires a new color every time, can be avoided in expectation. The goal of this section is to prove that \textsc{RandomizedLST} uses at most $1.034 \log n + O(1)$ colors in expectation.

\paragraph{Probabilistic model of the performance.}
We analyze the performance of \textsc{RandomizedLST} using a rooted forest $T$. When a vertex $v_i$ arrives, the component $C_i$ is formed by merging $v_i$ and some existing connected components, say $C_{x_{i,1}}, \dots, C_{x_{i,k_i}}$. We represent in $T$ this relation of how the components are merged. This is defined to be an $n$-vertex forest where the vertices are labeled $1, \dots, n$, and the children of vertex $i$ are $x_{i,1}, ..., x_{i,k_i}$.\footnote{The constructed $T$ is indeed a rooted forest. It has no cycles due to $x_{i,j} < i$. No vertex is a child of multiple vertices because once a connected component is merged, it is no longer a connected component of the graph.} Then, the probability distribution of $(\ell_1, \dots, \ell_n)$ in \textsc{RandomizedLST} can be simulated by \autoref{alg:leveling_alternative}, which is shown in the following lemma.

\begin{lemma}
    The probability distributions of $(\ell_1, \dots, \ell_n)$ generated by \textsc{RandomizedLST} and by \autoref{alg:leveling_alternative} are the same.
    \label{lem:leveling_lemma}
\end{lemma}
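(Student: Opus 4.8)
The plan is to prove the lemma by a coupling argument together with an induction on the number of vertices processed. A first reduction: since the algorithm's treatment of one connected component never looks at the vertices or colors of another, and the trees of the forest $T$ are exactly the merge histories of the final connected components of $G$, the level vectors attached to distinct trees of $T$ are independent, so it suffices to prove the statement when $T$ is a single rooted tree. Next I would isolate the randomness of \textsc{RandomizedLST}: its only coin flips occur in the modified lines~7 and~15, and each happens exactly once — at the moment a color pair $\{2\ell-1,2\ell\}$ is first introduced into a component, i.e. when a new component is born at level $1$ or when a component is promoted to level $\ell$. Call this flip the \emph{orientation bit} $o$ of the corresponding level-$\ell$ block. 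The key structural observation is that $o$, once chosen, deterministically fixes which of $2\ell-1,2\ell$ every later vertex of that block receives (each such vertex is colored to keep the block matched, which leaves no choice once $o$ is fixed and the graph revealed so far is known). Hence the full state of \textsc{RandomizedLST} after step $i$ is captured by $(\ell_1,\dots,\ell_i)$ together with the orientation bits of the blocks currently present, and I will carry this enlarged state through the induction.

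The heart of the proof is a claim about a single step $i$ (with children $x_{i,1},\dots,x_{i,k_i}$ in $T$): the event ``$C_i$ stays matched as a level-$\ell^*_i$ component'' (so $\ell_i=\ell^*_i$, versus $\ell_i=\ell^*_i+1$) is a deterministic function of the merge pattern recorded at node $i$ of $T$, of which vertices of the merged components $v_i$ is adjacent to, and of the orientation bits of those merged components whose level equals $\ell^*_i$; and, conditioned on everything that happened before step $i$, these orientation bits are mutually independent and uniform on $\{0,1\}$. Granting the claim, ``$C_i$ matched'' amounts to all the relevant $\pm$ signs agreeing after the fixed offsets induced by $v_i$'s attachment points, an event of conditional probability $2^{-(r-1)}$ where $r$ is the number of level-$\ell^*_i$ blocks that $v_i$ links through a top-colored vertex; and a promotion simply samples a fresh uniform orientation bit for the new level-$(\ell^*_i+1)$ block $\{v_i\}$. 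One then checks that this is precisely the transition that \autoref{alg:leveling_alternative} performs at node $i$, so the induction step closes, and projecting the joint law onto $(\ell_1,\dots,\ell_n)$ yields the lemma.

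The hard part will be proving the claim of the previous paragraph rigorously. There are two pieces. First, the combinatorial characterization of ``$C_i$ is matched as a level-$\ell^*_i$ component'' purely in terms of orientation bits and attachment points: one must go through the definition of \emph{matched} (the even/odd distance conditions on the colors $2\ell^*_i-1,2\ell^*_i$) and verify it reduces to an agreement condition on a few $\pm$ signs, handling the awkward cases where two level-$\ell^*_i$ blocks are joined only through paths passing through lower-colored vertices, and where three or more blocks are merged at once. Here bipartiteness of $G$ is essential: $v_i$'s neighbors inside any one already-formed component all lie on a single side of that component's bipartition, so each block contributes at most one parity constraint and the sign-agreement event has the clean probability $2^{-(r-1)}$. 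Second, the conditional independence and uniformity of the orientation bits given the past: this follows because each orientation bit was drawn by its own fresh coin when its block was created and because every subsequent decision of \textsc{RandomizedLST} only ever compares already-assigned colors for conflicts — it never reads the underlying bit in a way that could bias it — but making this precise requires a careful specification of what ``the past'' is (in particular, that the levels and the merge structure revealed so far are measurable with respect to the earlier coins only) and a clean induction hypothesis that states exactly this independence as part of the invariant.
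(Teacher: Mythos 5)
Your overall strategy (the only randomness is one fair ``orientation bit'' per top-level block; staying matched is an agreement event among the orientation bits of the merged maximum-level components, with exactly $2$ good configurations) is the same symmetry that the paper exploits, but your quantitative claim is stated incorrectly at the decisive point. You assert the conditional probability of ``$C_i$ stays matched'' is $2^{-(r-1)}$ where $r$ counts the level-$\ell^*_i$ blocks that $v_i$ links \emph{through a top-colored vertex}. To match \autoref{alg:leveling_alternative} you need the exponent to be $c_i$, the number of \emph{all} children of level $\ell^*_i$, regardless of where $v_i$ attaches to them. If $v_i$ is adjacent to a level-$\ell^*_i$ child only at vertices carrying lower colors, that child still imposes exactly one parity constraint: it contains vertices colored $2\ell^*_i-1,2\ell^*_i$ (that is what having level $\ell^*_i$ means), and in a bipartite graph the parity of every path from those vertices to $v_i$ through the attachment points is fixed, so the child's internal orientation must still agree with the global bipartition of $C_i$. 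This is precisely the ``awkward case'' you flag and defer; as written, your formula would give the wrong transition probability there, so the coupling with \autoref{alg:leveling_alternative} would not close. The fix is easy (the constraint count is always $c_i$), but it is the heart of the lemma and cannot be left open.

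A second, smaller issue is the conditioning in your independence claim: ``conditioned on everything that happened before step $i$'' cannot be literally right, since given the realized coloring the orientation bits are deterministic. What is needed (and what the induction must carry) is uniformity and independence of the surviving blocks' orientations conditioned on the \emph{level history} $(\ell_1,\dots,\ell_{i-1})$ and the merge structure only. The paper gets both points in one stroke without any induction: for an arbitrary subset $\mathcal{C}'$ of the level-$\ell^*_i$ children it flips colors $2\ell-1\leftrightarrow 2\ell$ throughout the components of $\mathcal{C}'$; this is a measure-preserving involution on runs of \textsc{RandomizedLST} (it corresponds to inverting the coin flips inside $\mathcal{C}'$), it preserves $(\ell_1,\dots,\ell_{i-1})$, and exactly $2$ of the $2^{c_i}$ resulting colorings leave $C_i$ matched, giving $2^{-(c_i-1)}$ directly. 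Your orientation-bit induction can be made to work, but you would be re-deriving this bijection in pieces; if you keep your route, state the agreement event with exponent $c_i$ and prove the conditional uniformity with respect to the level history, not the full past.
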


\begin{algorithm}[htbp]
    \caption{Alternative algorithm to generate levels $\ell_1, \dots, \ell_n$}
    \label{alg:leveling_alternative}
    \begin{algorithmic}[1]
        \For{$i = 1, \dots, n$}
            \If{vertex $v_i$ is a leaf in $T$}
                \State $\ell_i \gets 1$
            \Else
                \State $\ell^*_i \gets \max(\ell_{x_{i,1}}, \dots, \ell_{x_{i,k_i}})$
                \State $c_i \gets$ (number of $j$'s that $\ell_{x_{i,j}} = \ell^*_i$)
                \State $\ell_i \gets$ ($\ell^*_i$ with probability $2^{-(c_i-1)}$, and $\ell^*_i+1$ with probability $1-2^{-(c_i-1)}$)
            \EndIf
        \EndFor
    \end{algorithmic}
\end{algorithm}

\begin{proof}
    It suffices to show that in \textsc{RandomizedLST}, the conditional probability that $\ell_i = \ell^*_i$ given $(\ell_1, \dots, \ell_{i-1})$ is always $2^{-(c_i-1)}$, where $\ell^*_i := \max(\ell_{x_{i,1}}, \dots, \ell_{x_{i,k_i}})$ and $c_i :=$ (the number of $j$'s that $\ell_{x_{i, j}} = \ell^*_i$). We relate a coloring of $v_1, \dots, v_{i-1}$ to the colorings obtained by ``flipping'' the color of all the vertices in an arbitrary $\mathcal{C}' \subseteq \{C_{x_{i, j}} : \ell_{x_{i, j}} = \ell^*_i\}$. Formally, flipping the color means that odd-numbered color $2\ell-1$ becomes color $2\ell$, and even-numbered color $2\ell$ becomes color $2\ell-1$. Then, we obtain $2^{c_i}$ colorings. These colorings have the same $(\ell_1, \dots, \ell_{i-1})$, and all of them appear with the same probability because the related coloring appears when all of the probabilistic decisions in $\mathcal{C}'$ are inverted. However, $C_i$ is matched for only two of them. Therefore, the probability that $\ell_i = \ell^*_i$ is $\frac{2}{2^{c_i}} = 2^{-(c_i-1)}$, and otherwise $\ell_i$ becomes $\ell^*_i+1$.
\end{proof}

The expected number of colors in \textsc{RandomizedLST} is, obviously, between $2 \mathbb{E}[\max(\ell_1, \dots, \ell_n)] - 1$ and $2 \mathbb{E}[\max(\ell_1, \dots, \ell_n)]$. Especially when $T$ is a tree, $\ell_n = \max(\ell_1, \dots, \ell_n)$ because vertex $n$ is the root of $T$; in this case, it is crucial to estimate $\mathbb{E}[\ell_n]$. It turns out that, when we search for the graphs with the worst expected number of colors, we only have to consider the case when $G$ is connected (i.e., $T$ is a tree). This is because, when $G$ is not connected, we can modify the graph by adding edges between $v_n$ and all the other connected components, and the levels will not decrease.

\subsection{Preliminaries for the Analysis}

In this subsection, we prove the following lemma, which shows that it is sufficient to consider when $T$ is a binary tree. Hereafter, we call $\ell_v$ the level of vertex $v$ (of $T$) and denote the root of $T$ as $\mathrm{root}(T)$.

\begin{lemma}
    There is a binary tree $T$ that maximizes $\mathbb{E}[\ell_{\mathrm{root}(T)}]$ among all trees with $m$ leaves.
    \label{lem:binarytree}
\end{lemma}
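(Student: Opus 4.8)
The plan is to show that every tree $T$ with $m$ leaves can be turned, without decreasing $\mathbb{E}[\ell_{\mathrm{root}(T)}]$, into a binary tree with the same number of leaves; since there are only finitely many binary trees on $m$ leaves, the best of these attains the maximum and the lemma follows. Everything rests on a monotonicity principle for the leveling rule of \autoref{alg:leveling_alternative}. Write $g(L_1,\dots,L_k)$ for the random level produced from children levels $L_1,\dots,L_k$: it equals $M:=\max_i L_i$ with probability $2^{-(c-1)}$ and $M+1$ otherwise, where $c$ is the multiplicity of $M$. One first checks that $g$ is monotone: if $L'_i\ge L_i$ for all $i$, then $g(L'_1,\dots,L'_k)$ can be coupled pointwise above $g(L_1,\dots,L_k)$, via a short case split on whether the two maxima coincide (if they do, the new maximum's multiplicity only grows, so the probability of the larger value only grows; if $\max L'_i>\max L_i$ the claim is immediate). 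Propagating this coupling up the tree shows that replacing any subtree by one whose root level stochastically dominates it only weakly increases the distribution of $\ell_{\mathrm{root}(T)}$.

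Given the monotonicity lemma I would carry out two reductions. First, contract every internal vertex with a single child: since $g$ on a one-element input is the identity, this changes no level distribution, removes all degree-$1$ internal vertices, and preserves the leaf count. Second — the heart of the proof — replace each remaining vertex $v$ with $k\ge 3$ children $c_1,\dots,c_k$ (carrying their hanging subtrees) by a \emph{caterpillar} of binary combines: form $g(c_1,c_2)$, combine the result with $c_3$, and so on. The key claim is that for every deterministic input $(L_1,\dots,L_k)$ this caterpillar output stochastically dominates the ``star'' output $g(L_1,\dots,L_k)$, \emph{for any} order of combining. I would prove this by induction on $k$ with a three-way case analysis comparing $L_k$ to $M':=\max(L_1,\dots,L_{k-1})$: if $L_k<M'$ the two distributions coincide (up to a deterministic final pass-through, using that the prefix caterpillar lies in $\{M',M'+1\}$); if $L_k>M'$ the caterpillar only risks gaining an extra $+1$; and if $L_k=M'$ one uses the inductive bound $\Pr[\text{prefix caterpillar}=M']\le 2^{-(c'-1)}$ (where $c'$ is the multiplicity of $M'$ among $L_1,\dots,L_{k-1}$) to check $\Pr[\cdot=M']$ on the caterpillar side is at most $2^{-c'}$, matching the star. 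Conditioning on the (independent across subtrees) random children levels and averaging preserves the domination, and the monotonicity lemma lets all these caterpillar replacements be performed simultaneously.

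After these two steps $T$ has become a binary tree $T'$ with exactly $m$ leaves and $\mathbb{E}[\ell_{\mathrm{root}(T')}]\ge \mathbb{E}[\ell_{\mathrm{root}(T)}]$. Since the level distribution depends only on the shape of $T'$, and there are finitely many binary trees on $m$ leaves, a binary tree maximizing $\mathbb{E}[\ell_{\mathrm{root}}]$ over all $m$-leaf trees exists, which proves \autoref{lem:binarytree}.

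I expect the main obstacle to be the caterpillar-versus-star claim. It is tempting to think the combining order matters — and indeed a lucky order can strictly help — so the real content is that the \emph{worst} order is still at least as good as the star; the three-case induction above is what delivers this. A secondary point that must be handled carefully is that ``replacing a subtree'' has to be carried out through an explicit coupling of the internal coin flips rather than by comparing expectations, since we need stochastic domination in order to chain the replacements through all ancestors of the modified vertex.
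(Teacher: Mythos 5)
Your proposal is correct and takes essentially the same route as the paper: a subtree-substitution monotonicity lemma (the paper's \autoref{lem:monotone}), contraction of unary vertices, and replacement of each $k$-ary vertex by a caterpillar of binary combines, with the key bound that the caterpillar stays at the maximum child level with probability at most $2^{-(c-1)}$ while otherwise exceeding it. Your induction on $k$ and explicit coupling are just more detailed renderings of the paper's direct counting of ``stay'' events, and your ordering (binarize any tree first, then maximize over the finitely many binary trees) merely makes explicit the existence of a maximizer that the paper assumes at the outset.
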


First, we define the following preorder $\preceq$ on the set of rooted trees $\mathcal{T}$:

\begin{definition}
    For $T_1, T_2 \in \mathcal{T}$, $T_1 \preceq T_2$ if $\mathrm{Pr}_{T=T_1}[\ell_{\mathrm{root}(T_1)} \geq t] \geq \mathrm{Pr}_{T=T_2}[\ell_{\mathrm{root}(T_2)} \geq t]$ for all $t$.
\end{definition}

It is easy to see that the relation $\preceq$ satisfies reflexivity and transitivity. Also, if $T_1 \preceq T_2$, then $\mathbb{E}[\ell_{\mathrm{root}(T_1)}] \geq \mathbb{E}[\ell_{\mathrm{root}(T_2)}]$ holds because $\mathbb{E}[X] = \sum_{t=1}^\infty \mathrm{Pr}[X \geq t]$ for any random variable $X$ that takes a positive integer.

\begin{lemma}
    Let $T_1$ be a tree, and let $T_2$ be a tree created by replacing a subtree $T'_1$ (of $T_1$) with a tree $T'_2$. If $T'_1 \preceq T'_2$, then $T_1 \preceq T_2$ holds.
    \label{lem:monotone}
\end{lemma}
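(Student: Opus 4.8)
The plan is to argue by a coupling/induction along the path in $T_1$ from the root of the replaced subtree $T_1'$ up to $\mathrm{root}(T_1)$. Write $T_1'$ for the subtree of $T_1$ rooted at some vertex $r$, and let $r = u_0, u_1, \dots, u_p = \mathrm{root}(T_1)$ be the path from $r$ to the root in $T_1$. The level $\ell_{\mathrm{root}(T_1)}$ is a function of the levels of all leaves propagated up the tree via the rule in \autoref{alg:leveling_alternative}; the only place where $T_1'$ enters this computation is through the single random variable $\ell_r$. So the key structural fact is: \emph{the distribution of $\ell_{\mathrm{root}(T_1)}$ depends on $T_1'$ only through the distribution of $\ell_r$, and it does so monotonically} --- if $\ell_r$ is replaced by a stochastically smaller (in the sense that $\Pr[\ell_r \ge t]$ is larger for all $t$; note the inequality direction matches $\preceq$) random variable drawn independently, then $\Pr[\ell_{\mathrm{root}(T_1)} \ge t]$ does not decrease for any $t$.

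To make this precise I would first prove a one-step statement: fix a vertex $w$ of a tree with children $z_1, \dots, z_k$, and suppose the random variables $\ell_{z_1}, \dots, \ell_{z_k}$ are mutually independent. If we replace $\ell_{z_1}$ by an independent $\ell'_{z_1}$ with $\Pr[\ell'_{z_1} \ge t] \ge \Pr[\ell_{z_1} \ge t]$ for all $t$ (i.e. $\ell'_{z_1} \succeq \ell_{z_1}$ in the usual stochastic order), then the resulting $\ell_w$ satisfies $\Pr[\ell'_w \ge t] \ge \Pr[\ell_w \ge t]$ for all $t$. This follows from a standard coupling: couple $\ell_{z_1}$ and $\ell'_{z_1}$ so that $\ell'_{z_1} \ge \ell_{z_1}$ pointwise, keep all other $\ell_{z_j}$ and the internal coin flip $c_w$-randomness shared, and check monotonicity of the update rule $\ell_w = \ell^*_w$ or $\ell^*_w + 1$ in its inputs. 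The subtle point is that the update depends not just on $\ell^*_w = \max_j \ell_{z_j}$ but also on $c_w$, the multiplicity of the max; one must verify that increasing one argument $\ell_{z_1}$ can only (weakly) increase the output distribution even when this changes $c_w$ --- e.g. raising $\ell_{z_1}$ from below the max to equal the max \emph{decreases} $2^{-(c_w-1)}$, which only helps, and raising it past the old max strictly increases $\ell^*_w$. A short case analysis on whether $\ell_{z_1}$ is below, equal to, or above the (other children's) maximum handles this.

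Given the one-step lemma, I would induct on the length $p$ of the path from $r$ to the root. The base case $p=0$ is exactly the hypothesis $T_1' \preceq T_2'$, which says precisely that $\ell_r$ (computed within $T_1'$) is $\succeq$ (in stochastic order) to $\ell_{\mathrm{root}(T_2')}$ --- wait, more carefully: $T_1' \preceq T_2'$ means $\Pr[\ell_{\mathrm{root}(T_1')} \ge t] \ge \Pr[\ell_{\mathrm{root}(T_2')} \ge t]$, so after replacement the new subtree's root has a stochastically \emph{smaller} level. Applying the one-step lemma at $u_1$ (with $z_1 = r$) shows $\ell_{u_1}$ can only become stochastically smaller, i.e. $\Pr[\ell_{u_1} \ge t]$ does not increase; iterating up the path, $\Pr[\ell_{u_i} \ge t]$ is non-increasing at every step, and at $i = p$ we get $\Pr[\ell_{\mathrm{root}(T_2)} \ge t] \le \Pr[\ell_{\mathrm{root}(T_1)} \ge t]$ for all $t$, which is exactly $T_1 \preceq T_2$. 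Independence of sibling subtrees' leaf-levels (needed to invoke the one-step lemma) holds because the leaves of distinct subtrees are disjoint and \autoref{alg:leveling_alternative} uses fresh independent randomness at each internal vertex.

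The main obstacle I expect is the one-step monotonicity check, specifically handling the non-monotone-looking dependence of the update rule on $c_w$: one needs to confirm that the coupled output $\ell'_w$ dominates $\ell_w$ in the stochastic order \emph{simultaneously} for all threshold values $t$, not merely in expectation, and that this survives the case where bumping $\ell_{z_1}$ up to the current maximum changes the "win probability" $2^{-(c_w-1)}$. Everything else --- the induction up the path, the reduction to the root's level, the use of $\mathbb{E}[X] = \sum_t \Pr[X \ge t]$ --- is routine once that lemma is in hand.
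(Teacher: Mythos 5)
Your proposal is correct and follows essentially the same route as the paper: reduce to a one-step monotonicity statement at the parent of the replaced subtree (with the sibling subtrees' levels independent and unchanged) and then induct along the path up to the root. The only difference is in how the one-step claim is verified — the paper conditions on the other children's levels and writes the explicit expression for $\mathrm{Pr}[\ell_r \geq t]$, observing it is nondecreasing in the tail probabilities of the replaced child's level, while you use a monotone coupling with a case analysis on whether the replaced child's level is below, equal to, or above the other children's maximum; both verifications are sound, and your treatment of the multiplicity issue (the change in $2^{-(c-1)}$) is handled correctly.
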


\begin{proof}
    Let $r = \mathrm{root}(T_1) \ (= \mathrm{root}(T_2))$. We first consider the case that $\mathrm{root}(T'_1)$ (also $\mathrm{root}(T'_2)$) is a child of $r$. Let $x_1, \dots, x_k \ (x_1 = \mathrm{root}(T'_1))$ be the children of $r$. If $k = 1$, then $\ell_r = \ell_{x_1}$, so it is obvious that $T_1 \preceq T_2$ holds given that $T'_1 \preceq T'_2$. So suppose $k \geq 2$. If the levels of $x_2, \dots, x_k$ are fixed, the level of $r$ is decided in the following way, where $\ell' = \max(\ell_{x_2}, \dots, \ell_{x_k})$, and $c' = (\text{number of $j$'s $(j \geq 2)$ that $\ell_{x_j} = \ell'$})$:
    \begin{equation*}
        \ell_r = \begin{cases}
            \text{$\ell'$ with probability $2^{-(c'-1)}$, and $\ell'+1$ with probability $1-2^{-(c'-1)}$} & (\ell_{x_1} < \ell') \\
            \text{$\ell'$ with probability $2^{-c'}$, and $\ell'+1$ with probability $1-2^{-c'}$} & (\ell_{x_1} = \ell') \\
            \ell_{x_1} & (\ell_{x_1} > \ell')
        \end{cases}
    \end{equation*}
    Therefore:
    \begin{equation*}
        \mathrm{Pr}[\ell_r \geq t] =
        \begin{cases}
            1 & (t \leq \ell') \\
            (1 - 2^{-(c'-1)}) + 2^{-c'} \cdot \mathrm{Pr}[\ell_{x_1} \geq \ell'] + 2^{-c'} \cdot \mathrm{Pr}[\ell_{x_1} \geq \ell'+1] & (t = \ell'+1) \\
            \mathrm{Pr}[\ell_{x_1} \geq t] & (t \geq \ell'+2)
        \end{cases}
    \end{equation*}
    When $T'_1$ is replaced by $T'_2$, $\mathrm{Pr}[\ell_{x_1} \geq t]$ does not decrease for all $t$, so $\mathrm{Pr}[\ell_r \geq t]$ does not decrease for all $t$, which means that $T_1 \preceq T_2$.

    Next, we prove the general case. Let $v_0, v_1, \dots, v_k \ (v_k = r)$ be a path from $\mathrm{root}(T'_1)$ to $r$, and let $T_{i, j}$ be the subtree of $v_j$ in $T_i$. The assumption $T'_1 \preceq T'_2$ means $T_{1, v_0} \preceq T_{2, v_0}$. Also, if $T_{1, v_j} \preceq T_{2, v_j}$, then $T_{1, v_{j+1}} \preceq T_{2, v_{j+1}}$, as shown above. Therefore, $(T_1 =) \ T_{1, v_k} \preceq T_{2, v_k} \ (= T_2)$ is shown by induction.
\end{proof}

\begin{proof}[Proof of \autoref{lem:binarytree}]
    Let $T_0$ be a tree that $\mathbb{E}[\ell_{\mathrm{root}(T_0)}]$ takes the maximum value among all trees with $m$ leaves. If $T_0$ has some vertex $v$ that has one child or three or more children, we perform the following operation to make a new tree $T_1$. Let $p$ be the parent of $v$ (if it exists), and let $x_1, \dots, x_k$ be the children of $v$.
    \begin{itemize}
        \item Case $k = 1$: Delete vertex $v$, and set the parent of $x_1$ to $p$ (if it exists). We define this operation as \emph{contraction}. See the left of \autoref{fig:binarytree}. 
        \item Case $k \geq 3$: Delete vertex $v$, and instead create vertices $w_1, \dots, w_{k-1}$. Set the parents of $x_1$ and $x_2$ to $w_1$, and for $j = 1, \dots, k-2$, set the parents of $w_j$ and $x_{j+2}$ to $w_{j+1}$, and finally set the parent of $w_{k-1}$ to $p$ (if it exists).  See the right of \autoref{fig:binarytree}. 
    \end{itemize}

    \begin{figure}[htbp]
        \centering
        \includegraphics[width=0.8\linewidth]{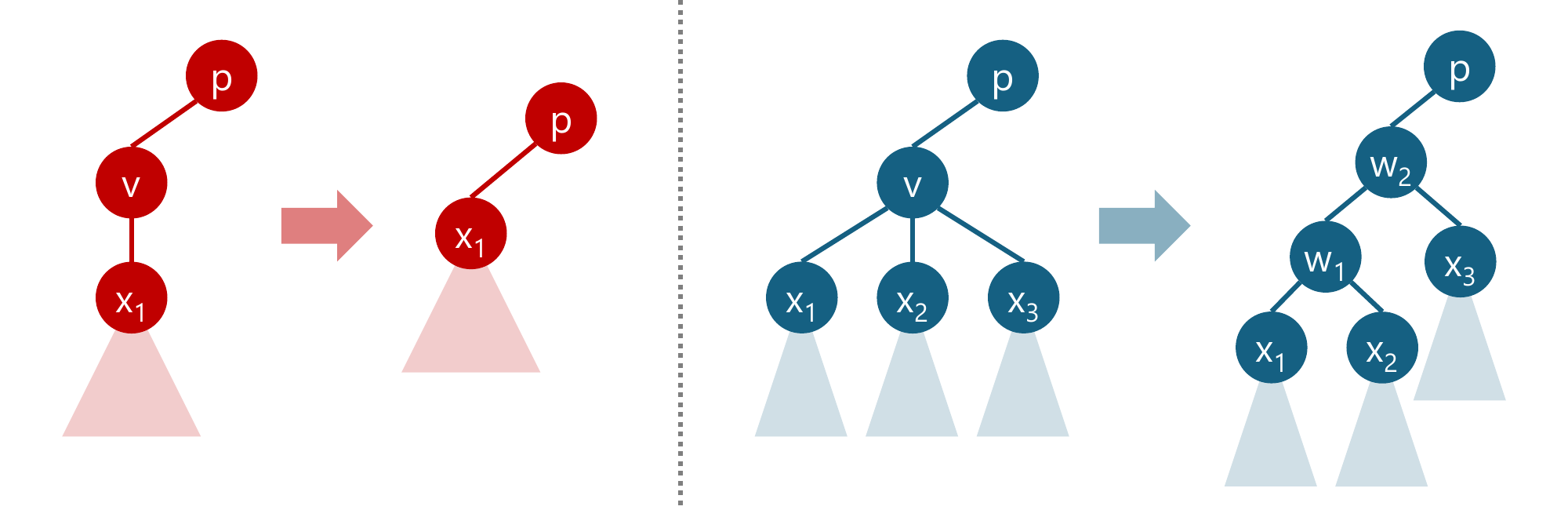}
        \caption{The operations to modify a tree, for $k = 1$ case (left) and $k \geq 3$ case (right)}
        \label{fig:binarytree}
    \end{figure}
    
    We prove that $T_0 \preceq T_1$ holds.
    \begin{itemize}
        \item Case $k = 1$: The operation is to replace $T'_0$ with $T'_1$, where $T'_0$ is the subtree of $v$ in $T_0$, and $T'_1$ is the subtree of $x_1$ in $T_1$. Since $v$ has one child, $\ell_v = \ell_{x_1}$ holds; therefore, $T'_0 \preceq T'_1$. By  \autoref{lem:monotone}, $T_0 \preceq T_1$.
        \item Case $k \geq 3$: The operation is to replace $T'_0$ with $T'_1$, where $T'_0$ is the subtree of $v$ in $T_0$, and $T'_1$ is the subtree of $w_{k-1}$ in $T_1$. Let $\ell^* = \max(\ell_{x_1}, \dots, \ell_{x_k})$, and $c = (\text{number of $j$'s that $\ell_{x_j} = \ell^*$})$. Then, in $T_0$, the probability that $\ell_v = \ell^*$ is $2^{-(c-1)}$, and otherwise $\ell_v = \ell^*+1$. However, in $T_1$, to become $\ell_{w_{k-1}} = \ell^*$, at least $c-1$ events that ``a vertex is leveled $\ell^*$ from two level-$\ell^*$ children'' must happen inside $w_1, \dots, w_{k-1}$. The scenario happens with probability $2^{-(c-1)}$ or less. Otherwise, $\ell_{w_{k-1}}$ becomes $\ell^*+1$ or more. Therefore, $T'_0 \preceq T'_1$. By \autoref{lem:monotone}, $T_0 \preceq T_1$.
    \end{itemize}
    
    We repeatedly perform the operations, starting from $T_0$, until the tree becomes a binary tree. Note that the number of leaves remains unchanged by the operations. Let $T$ be the resulting tree. By the transitivity of $(\mathcal{T}, \preceq)$, $T_0 \preceq T$ holds, which means that $\mathbb{E}[\ell_{\mathrm{root}(T_0)}] \leq \mathbb{E}[\ell_{\mathrm{root}(T)}]$. By the maximality of $T_0$, the binary tree $T$ is another tree that $\mathbb{E}[\ell_{\mathrm{root}(T)}]$ takes the maximum value among all trees with $m$ leaves.
\end{proof}

Now, it is crucial to estimate the maximum value of $\mathbb{E}[\ell_{\mathrm{root}(T)}]$ among all binary trees $T$ with at most $m$ leaves; let this value be $f(m)$. Then, we know that the expected performance of \textsc{RandomizedLST} is upper-bounded by $2f(n)$ colors because $n \geq m$.

\subsection{Analysis 1: Considering the Level-2 Terminals}

We start analyzing the worst-case performance of \textsc{RandomizedLST}. Instead of directly analyzing the maximum value of $\mathbb{E}[\ell_{\mathrm{root}(T)}]$, we examine how quickly the level-$2$ vertices appear in the worst case.

\begin{definition}
    For $i \geq 1$, a vertex is a \emph{level-$i$ terminal} if it has level $i$ and both of its two children have level $i-1$ (or is a leaf if $i = 1$).
\end{definition}

\begin{figure}[htbp]
    \centering
    \includegraphics[width=0.7\linewidth]{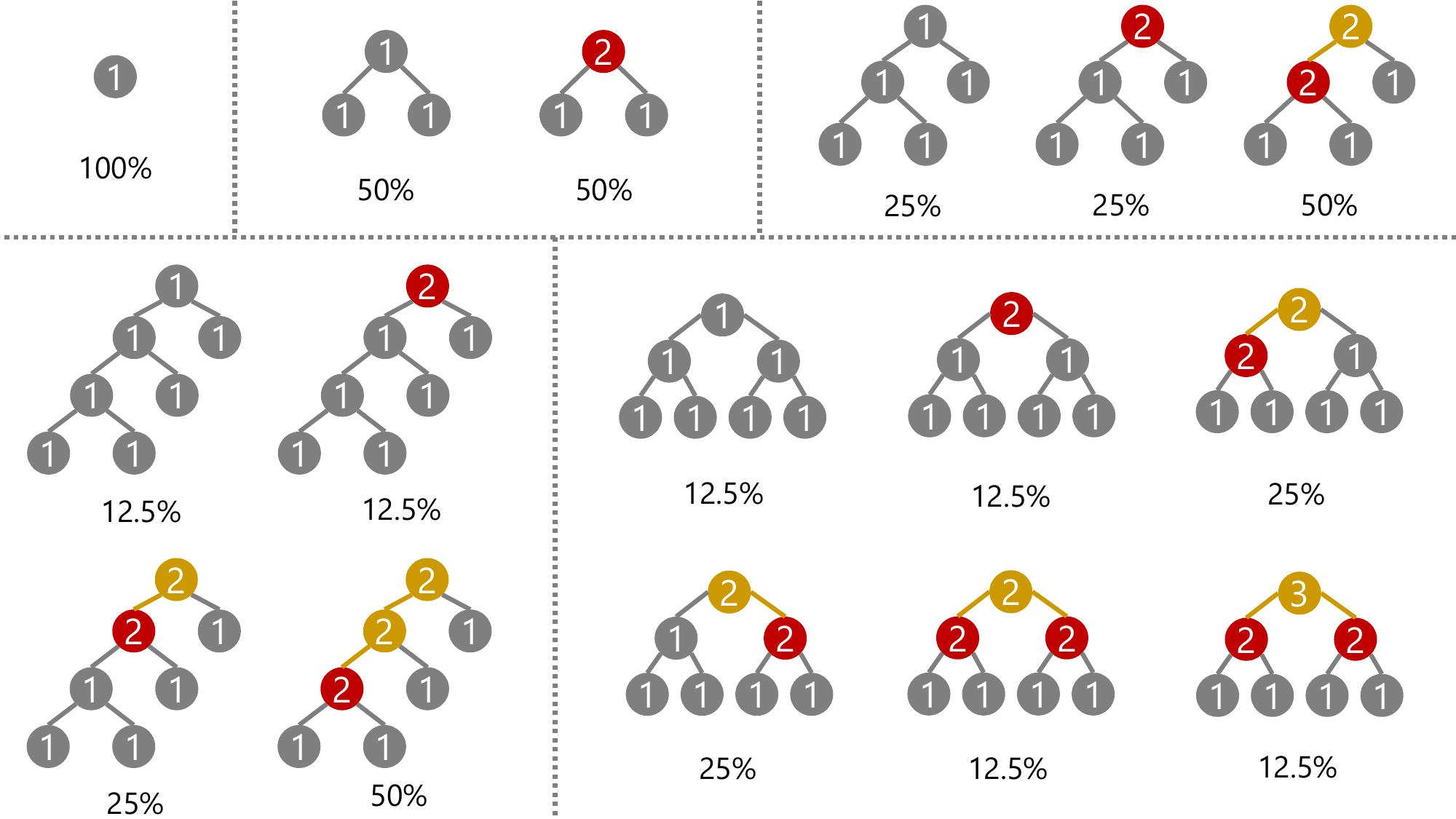}
    \caption{The outcomes of levels for all possible trees with four or fewer leaves (percentage = probability of the corresponding outcome, red vertices = level-$2$ terminals, yellow vertices = other level $2$+ vertices). There are two binary trees with $m = 4$ leaves, and the expected number of level-$2$ terminals are $\frac{7}{8}$ and $\frac{9}{8}$ for the bottom-left and the bottom-right cases, respectively.}
    \label{fig:level2terminal}
\end{figure}

Let $a_m$ be the maximum possible expected number of level-$2$ terminals for a tree with exactly $m$ leaves. For example, $a_1 = 0, a_2 = \frac{1}{2}, a_3 = \frac{3}{4}$, and $a_4 = \frac{9}{8}$, as in \autoref{fig:level2terminal}. The following lemma shows that these values also navigate the number of level-$i$ terminals for $i \geq 3$.

\begin{lemma}
    Let $\gamma$ be a real number that satisfies $a_{m'} \leq \gamma \cdot m'$ for all $m' \geq 1$. Then, the expected number of level-$i$ terminals for any tree $T$ with $m$ leaves is at most $\gamma^{i-1} \cdot m$.
    \label{lem:level2a}
\end{lemma}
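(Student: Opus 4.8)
The plan is to induct on $i$, using the quantities $a_m$ as recursive ``transfer constants''. Write $N_i(T)$ for the expected number of level-$i$ terminals of $T$. The base cases are immediate: for $i=1$ the level-$1$ terminals are exactly the $m$ leaves of $T$, so $N_1(T)=m=\gamma^0 m$; for $i=2$ we have $N_2(T)\le a_m\le\gamma m$ directly from the definition of $a_m$ and the hypothesis on $\gamma$. So fix $i\ge 3$ and assume $N_{i-1}(T')\le\gamma^{i-2}m'$ for every tree $T'$ with $m'$ leaves.

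For the inductive step I would run \autoref{alg:leveling_alternative} and reveal its randomness in two stages. Stage $1$ reveals everything that determines all levels up to $i-2$: for each vertex we learn whether its level is $\le i-2$ (and then its exact value) or $\ge i-1$, and for a vertex of level $\ge i-1$ whether it was ``pushed up'' out of level $i-2$. Stage $2$ reveals the remaining (fresh, fair) coins, which decide, for each vertex of level $\ge i-1$, whether it stays at level $i-1$ or climbs higher. After Stage $1$ the set $F=\{v:\ell_v\ge i-1\}$ is already determined; since $\ell^*_{\mathrm{parent}}\ge\ell_v$, $F$ is closed under taking parents, hence (if nonempty) a subtree of $T$ containing $\mathrm{root}(T)$, and its leaves are exactly the level-$(i-1)$ terminals: for $v\in F$ with no child in $F$ we have $\ell^*_v\le i-2$ yet $\ell_v\ge i-1$, which forces $\ell_v=\ell^*_v+1=i-1$ and forces both children of $v$ to sit at level exactly $i-2$; conversely a level-$(i-1)$ terminal has both children at level $i-2\notin F$. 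Let $K$ be the number of leaves of $F$, a Stage-$1$-measurable random variable equal to the number of level-$(i-1)$ terminals of $T$.

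The key observation is that, conditioned on Stage $1$, the levels of the vertices of $F$ evolve under the Stage-$2$ coins exactly as an independent run of \autoref{alg:leveling_alternative} on the tree $F$, shifted up by $i-2$: a leaf of $F$ plays the role of a level-$1$ vertex, a vertex of $F$ with one child in $F$ copies that child's level, and a vertex of $F$ with two children in $F$ at equal level flips a fresh coin --- precisely the rule of \autoref{alg:leveling_alternative}. Under this identification a level-$i$ terminal of $T$ is exactly a level-$2$ terminal of the induced process, and vice versa. Hence, conditioned on Stage $1$ (which fixes the shape of $F$ and the value of $K$), the conditional expected number of level-$i$ terminals of $T$ is at most $a_K\le\gamma K$. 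Taking expectations and invoking the induction hypothesis, $N_i(T)\le\gamma\,\mathbb{E}[K]=\gamma\,N_{i-1}(T)\le\gamma\cdot\gamma^{i-2}m=\gamma^{i-1}m$ (and $N_i(T)=0$ when $F=\emptyset$), completing the induction.

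The main obstacle is turning the two-stage decomposition and the ``$F$-process is a fresh independent copy'' claim into a rigorous argument: one must fix a concrete coin representation of \autoref{alg:leveling_alternative} in which ``all information up to level $i-2$'' is a genuine sub-$\sigma$-algebra, check that $F$ (hence $K$) is measurable with respect to it, and verify that the level-propagation rules on $F$ match \autoref{alg:leveling_alternative} coin-for-coin --- in particular that the degree-$1$ internal vertices of $F$ behave as in the contraction step of \autoref{lem:binarytree} and so neither create nor destroy level-$2$ terminals. Everything else --- the two base cases and the final chain of inequalities --- is routine.
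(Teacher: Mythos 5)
Your proposal is correct and takes essentially the same route as the paper: the paper also inducts by looking at the subgraph of vertices with level at least $i$ (your $F$, shifted by one index), notes its leaves are exactly the level-$i$ terminals, contracts the single-child vertices (whose level just copies the child's), and observes that the induced process is the original leveling process restarted on a smaller tree, giving $\mathbb{E}[X_{i+1}] \leq \mathbb{E}[a_{X_i}] \leq \gamma\,\mathbb{E}[X_i]$. Your two-stage conditioning is simply a more explicit formalization of the independence the paper asserts implicitly, so there is no gap relative to the paper's own argument.
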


\begin{proof}
    Let $X_i$ be the number of level-$i$ terminals. By the assumption, $\mathbb{E}[X_1] = m$. Let $T_i$ be a subgraph induced by the set of vertices with level $i$ or higher.\footnote{In \autoref{fig:level2terminal}, $T_2$ corresponds to the subgraph with red/yellow vertices and yellow edges.}
    
    We show a relation between the number of leaves in $T_i$ (which is equal to $X_i$) and the number of level-$(i+1)$ terminals (which is equal to $X_{i+1}$). For a vertex $v$ that has one child in $T_i$, the level of $v$ is the same as the level of the only child because the level of ``another child'' in $T$ is $i-1$ or less; therefore, contracting such a vertex $v$ does not affect the analysis. After repeating the contraction, $T_i$ becomes a binary tree with $X_i$ leaves, and the way that levels are assigned in this binary tree is identical to that of the normal binary tree case, except that levels start from $i$. Therefore, the expected number of level-$(i+1)$ terminals is at most $a_{X_i}$. It follows that $\mathbb{E}[X_{i+1}] \leq \mathbb{E}[a_{X_i}] \leq \gamma \cdot \mathbb{E}[X_i]$. By induction, $\mathbb{E}[X_i] \leq \gamma^{i-1} \cdot m$ is shown.
\end{proof}

By this lemma, we can upper-bound the expected level of the root vertex using $\gamma$.

\begin{lemma}
    $\mathbb{E}[\ell_{\mathrm{root}(T)}] \leq \frac{1}{\log (1/\gamma)} \cdot \log m + O(1)$.
    \label{lem:level2b}
\end{lemma}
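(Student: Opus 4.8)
The plan is to combine the bound on the expected number of level-$i$ terminals from Lemma \ref{lem:level2a} with the simple observation that if the root has level $L$, then there must exist at least one level-$L$ terminal somewhere in the tree. First I would record this structural fact: for any tree $T$, if $\ell_{\mathrm{root}(T)} = L$, then $T$ contains a level-$L$ terminal. Indeed, follow a path down from the root, always moving to a child whose level equals the parent's level (such a child exists whenever the parent is not itself a terminal, since a non-terminal vertex of level $L$ has at least one child of level $L$); this path must eventually hit a vertex both of whose children have level $L-1$ (or a leaf, in the case $L=1$), which is by definition a level-$L$ terminal. Hence $\{\ell_{\mathrm{root}(T)} \geq L\}$ implies $X_L \geq 1$, where $X_L$ is the number of level-$L$ terminals.

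Next I would invoke Lemma \ref{lem:level2a}: fixing any $\gamma$ with $a_{m'} \leq \gamma m'$ for all $m' \geq 1$, we have $\mathbb{E}[X_L] \leq \gamma^{L-1} m$ for a tree with $m$ leaves. By Markov's inequality applied to the nonnegative integer-valued random variable $X_L$,
\begin{equation*}
    \mathrm{Pr}[\ell_{\mathrm{root}(T)} \geq L] \leq \mathrm{Pr}[X_L \geq 1] \leq \mathbb{E}[X_L] \leq \gamma^{L-1} m .
\end{equation*}
Then I would sum the tail bounds using $\mathbb{E}[\ell_{\mathrm{root}(T)}] = \sum_{L \geq 1} \mathrm{Pr}[\ell_{\mathrm{root}(T)} \geq L]$. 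Let $L_0 = \lceil \log_{1/\gamma} m \rceil + 1$ or so; for $L \leq L_0$ bound each tail probability by $1$, contributing $L_0 = \frac{1}{\log(1/\gamma)}\log m + O(1)$, and for $L > L_0$ bound by $\gamma^{L-1} m$, whose geometric sum is $O(1)$ since $\gamma^{L_0} m = O(1)$ by choice of $L_0$ (using $\gamma < 1$, which holds because $a_m/m \to$ something strictly below $1$, e.g.\ $a_4/4 = 9/32 < 1$). Adding the two parts gives $\mathbb{E}[\ell_{\mathrm{root}(T)}] \leq \frac{1}{\log(1/\gamma)}\log m + O(1)$, as claimed.

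I do not expect a serious obstacle here; the one point requiring a little care is the structural claim that the root having level $L$ forces a level-$L$ terminal to exist — this needs the fact that a vertex of level $L$ which is not a terminal must have a child of level exactly $L$ (its other child having level $< L$, since otherwise two level-$L$ children would already produce level $\geq L$ and, more importantly, the vertex would be a terminal if \emph{both} children had level $L-1$; the definition of terminal is precisely "both children level $L-1$", so non-terminal of level $L$ has at least one child of level $\geq L$, hence exactly $L$ by the leveling rule). Threading the descent argument through this, and being slightly careful that $\gamma$ can be taken $<1$ so the tail sum converges, completes the proof.
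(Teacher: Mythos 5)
Your proposal is correct and follows essentially the same route as the paper: bound $\mathrm{Pr}[\ell_{\mathrm{root}(T)} \geq i]$ by $\min(\gamma^{i-1}m,1)$ via the existence of a level-$i$ terminal together with \autoref{lem:level2a}, then sum the tail probabilities, splitting at $i \approx \log_{1/\gamma} m$. Your explicit descent argument for why a root of level $\geq L$ forces a level-$L$ terminal is a detail the paper simply asserts, and your implicit use of $\gamma < 1$ matches the paper's (which likewise assumes it via the $\frac{1}{1-\gamma}$ term).
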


\begin{proof}
    If the level of the root is $i$ or higher, there exists at least one level-$i$ terminal, so $\mathrm{Pr}[\ell_{\mathrm{root}(T)} \geq i] \leq \min(\mathbb{E}[X_i], 1) \leq \min(\gamma^{i-1} \cdot m, 1)$ (the last inequality is from \autoref{lem:level2a}). Therefore:
    \begin{align*}
        \mathbb{E}[\ell_{\mathrm{root}(T)}] & = \sum_{i=1}^\infty \mathrm{Pr}[\ell_{\mathrm{root}(T)} \geq i] \\
        & \leq \sum_{i=1}^\infty \min(\gamma^{i-1} \cdot m, 1) \\
        & \leq \frac{1}{\log (1/\gamma)} \cdot \log m + \left(1 + \frac{1}{1-\gamma}\right)
    \end{align*}
    which shows that $\mathbb{E}[\ell_{\mathrm{root}(T)}] \leq \frac{1}{\log (1/\gamma)} \cdot \log m + O(1)$.
\end{proof}

The remaining work for this subsection is to estimate the value of $\gamma$.

\begin{theorem} \label{thm:level2c}
    The minimum possible value of $\gamma$ is given by the following:
    \begin{equation*}
        \gamma = \sum_{i=1}^\infty 2^{-(2^i-1+i)} = \frac{1}{2^2} + \frac{1}{2^5} + \frac{1}{2^{10}} + \frac{1}{2^{19}} + \cdots < 0.282229
    \end{equation*}
\end{theorem}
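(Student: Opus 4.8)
The plan is to turn $a_m$ into a clean recursion on binary trees and then bound it with a tailored potential. For a binary tree $T$ and a vertex $v$, let $q(v)=\Pr[\ell_v=1]$ under the model of \autoref{alg:leveling_alternative}. Since the level of a vertex is a function of the random bits in its own subtree, the levels of the two children $x,y$ of an internal vertex are independent, and when $\ell_x=\ell_y=1$ we have $\ell^*_v=1$, $c_v=2$, so $\ell_v=1$ with probability $\tfrac12$; hence $q(\mathrm{leaf})=1$ and $q(v)=\tfrac12\,q(x)\,q(y)$. The key observation is that an internal vertex $v$ is a level-$2$ terminal exactly when $\ell_x=\ell_y=1$ and $\ell_v=2$, an event of probability $q(x)q(y)\cdot\tfrac12=q(v)$, while a leaf is never a level-$2$ terminal. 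By linearity of expectation the expected number of level-$2$ terminals of $T$ equals $W(T):=\sum_{v\ \mathrm{internal}}q(v)$, so $a_m=\max\{W(T): T\ \mathrm{binary},\ T\ \mathrm{has}\ m\ \mathrm{leaves}\}$.

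\textbf{Upper bound.} Put $c_i=2^{-(2^i+i-1)}$ and $\phi(t)=\sum_{i\ge 1}c_i\,t^{2^i}$, a nonnegative increasing function on $[0,1]$ with $\phi(1)=\gamma$. I will prove by structural induction that $W(T_v)\le\gamma\,m_v-\phi(q(v))$ for every vertex $v$, where $T_v$ is the subtree rooted at $v$ and $m_v$ its number of leaves. The base case $v=\mathrm{leaf}$ reads $0\le\gamma-\phi(1)=0$. For an internal $v$ with children $x,y$, summing the two hypotheses and using $q(v)=\tfrac12 q(x)q(y)$ reduces the step to the functional inequality $\tfrac12 ab+\phi(\tfrac12 ab)\le\phi(a)+\phi(b)$ for all $a,b\in(0,1]$. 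To see this, write $\tfrac12 ab+\phi(\tfrac12 ab)=\sum_{i\ge0}g_i(\tfrac12 ab)^{2^i}$ with $g_0=1$ and $g_i=c_i$ for $i\ge1$, bound each term by $g_i 2^{-2^i}a^{2^i}b^{2^i}\le g_i 2^{-2^i-1}(a^{2^{i+1}}+b^{2^{i+1}})$ via AM--GM, and use the elementary identity $g_i\,2^{-2^i-1}=c_{i+1}$; the resulting sum rearranges exactly to $\phi(a)+\phi(b)$. Evaluating the induction at the root gives $a_m\le\gamma m-\phi(q(\mathrm{root}))\le\gamma m$, so $\gamma$ is admissible.

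\textbf{Tightness.} On the complete binary tree $T_k$ with $2^k$ leaves, a short induction gives that a vertex at distance $k-j$ from the root has $q=2^{-(2^j-1)}$; since there are $2^{k-j}$ internal vertices at distance $k-j$ for $j=1,\dots,k$, we obtain $W(T_k)=2^k\sum_{j=1}^k 2^{-(2^j+j-1)}$, hence $a_{2^k}/2^k\ge\sum_{j=1}^k 2^{-(2^j+j-1)}\to\gamma$ as $k\to\infty$. Thus no $\gamma'<\gamma$ can satisfy $a_{m'}\le\gamma' m'$ for all $m'$, and together with the upper bound the set of admissible values is $[\gamma,\infty)$, whose minimum is $\gamma=\sum_{i\ge1}2^{-(2^i+i-1)}$; the bound $\gamma<0.282229$ follows by summing the first terms ($\tfrac14+\tfrac1{32}+\tfrac1{1024}+\tfrac1{2^{19}}$) and bounding the (doubly-exponentially small) tail.

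\textbf{Main obstacle.} The substantive point is identifying the potential $\phi$. It is found by reverse-engineering the exact slack $\gamma m_v-W(T_v)$ on complete subtrees and noticing, crucially, that this slack depends on $v$ only through $q(v)$ --- which is precisely the content of the identity $\Pr[v\text{ is a level-}2\text{ terminal}]=q(v)$ that collapses the analysis onto the single product recursion $q(v)=\tfrac12 q(x)q(y)$. Once $\phi(t)=\sum_{i\ge1}c_i t^{2^i}$ is guessed, the functional inequality is routine: a termwise AM--GM plus the self-similar coefficient identity $g_i 2^{-2^i-1}=c_{i+1}$ closes it, and the complete-binary-tree computation supplies matching examples for minimality.
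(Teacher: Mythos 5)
Your proposal is correct, and its main direction differs genuinely from the paper's. The shared core is the closed-form count: your identity $\Pr[v\text{ is a level-2 terminal}]=\tfrac12 q(x)q(y)=q(v)=2^{-(s_v-1)}$ is exactly the formula $a(T)=\sum_v 2^{-(s_v-1)}$ the paper starts from, and your tightness argument (complete binary trees with $2^k$ leaves giving $a_{2^k}/2^k\to\gamma$) is the same computation the paper performs. The difference is in proving admissibility, $a_m\le\gamma m$ for every $m$: the paper runs an extremal exchange argument (unpairing/repairing leaves) to show the complete binary tree is the exact maximizer when $m=2^k$, evaluates it there, and then transfers the bound to general $m$ by packing $\lfloor 2^k/m\rfloor$ copies of an $m$-leaf tree into a $2^k$-leaf tree and deriving a contradiction; you instead prove the stronger per-subtree inequality $W(T_v)\le\gamma m_v-\phi(q(v))$ by induction, with $\phi(t)=\sum_{i\ge1}c_i t^{2^i}$, closing the step via the functional inequality $\tfrac12 ab+\phi(\tfrac12 ab)\le\phi(a)+\phi(b)$, which your termwise AM--GM plus the coefficient identity $g_i2^{-2^i-1}=c_{i+1}$ does verify (I checked both the $i=0$ and $i\ge1$ cases). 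Your route buys a shorter proof that treats all $m$ uniformly, avoids both the exchange argument and the power-of-two packing step, and yields a quantitative slack $\gamma m-\phi(q(\mathrm{root}))$; the paper's route buys more structural information, namely the exact maximizer and the exact values $a_{2^k}$. One shared caveat: like the paper's own proof, you take the maximum defining $a_m$ over binary trees, which is the reading justified by \autoref{lem:binarytree} and the contraction step in \autoref{lem:level2a}; it would be worth one sentence making that restriction explicit.
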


\begin{proof}
    For a binary tree $T$, let $a(T)$ be the expected number of level-$2$ terminals. This can be calculated by:
    \begin{equation*}
        a(T) = \sum_{v \in V(T)} 2^{-(s_v-1)}
    \end{equation*}
    where $s_v$ is the number of leaves in the subtree rooted at vertex $v$. This is because the probability for $v$ to become a level-$2$ terminal is $2^{-(s_v-1)}$; all non-leaf vertices of the subtree except $v$ (there are $s_v-2$ such vertices) should be leveled $1$ from two level-$1$ children, and $v$ should be leveled $2$ from two level-$1$ children.

    Let $T$ be a tree that maximizes $a(T)$ among all trees with $m = 2^k$ leaves $(k \geq 1)$. We say that two leaves are \emph{paired} if they share the parents. Suppose that there exist at least two \emph{unpaired} leaves, and let $u_1$ and $u_2$ be two of them (since $2^k$ is an even number, the number of unpaired leaves is always even.) Then, we perform the following operations on $T$:
    \begin{enumerate}
        \item Delete $u_1$ and contract the parent $p$ of $u_1$. This operation decreases $a(T)$ by at most $2^{-(3-1)} = \frac{1}{4}$ because this deletes $p$, and since $u_1$ is unpaired, $s_p \geq 3$ holds. (\autoref{fig-523} left to middle)
        \item Create two new vertices and set their parents to $u_2$. This operation increases $a(T)$ by at least $\frac{1}{2} - (\frac{1}{8} + \frac{1}{16} + \frac{1}{32} + \cdots + \frac{1}{2^m}) = \frac{1}{4} + \frac{1}{2^m}$ because by this operation, $s_{u_2}$ becomes $2$, and $s_v$ for all ancestors of $u_2$ increases by $1$. Note that, since $u_2$ is unpaired, $s_v \geq 3$ holds for the parent of $u_2$. (\autoref{fig-523} middle to right)
    \end{enumerate}
    Overall, $a(T)$ increases by at least $\frac{1}{2^m}$, contradicting the maximality of $a(T)$. Therefore, every leaf is paired to make a subtree with two leaves.

    \begin{figure}[htbp]
        \centering
        \includegraphics[width=\linewidth]{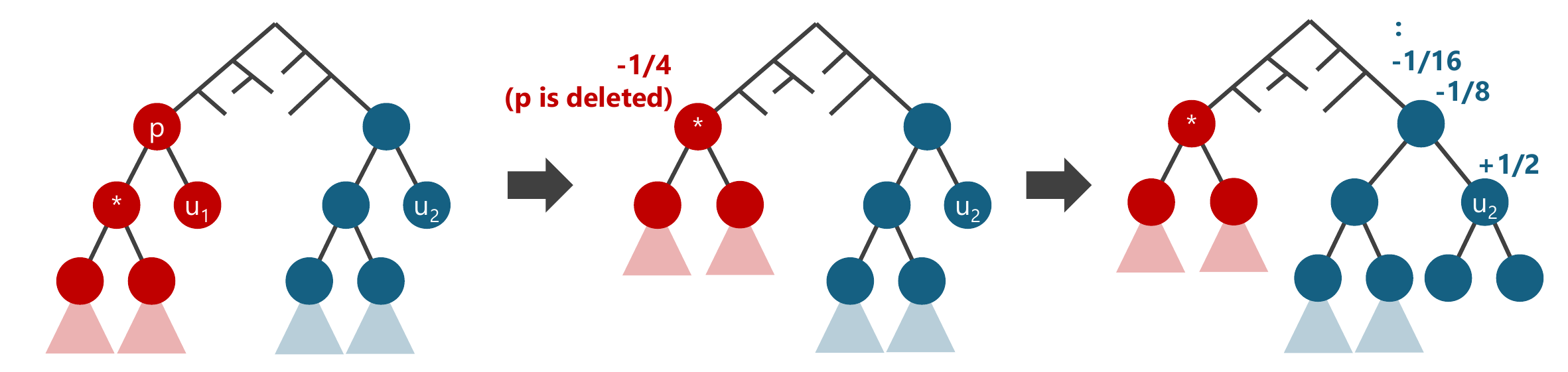}
        \caption{The sketch of operations.}
        \label{fig-523}
    \end{figure}

    Using the same argument, we can show inductively that every subtree with $2^i$ leaves must be paired, for $i = 1, 2, \dots, k-1$. Therefore, $T$ is a complete binary tree; a binary tree that all $2^k$ leaves are at depth $k$. Calculating $a(T)$ for the complete binary tree gives:
    \begin{equation*}
        a_{2^k} = \sum_{i=1}^k 2^{-(2^i-1)} \cdot 2^{k-i} = 2^k \cdot \sum_{i=1}^k 2^{-(2^i-1+i)}
    \end{equation*}
    because there are $2^{k-i}$ vertices such that $s_v = 2^i$. It follows that $a_{2^k} \leq \gamma \cdot 2^k$ for $\gamma$ given in the statement, and the given $\gamma$ is the minimum possible value.

    Suppose $a_m > \gamma \cdot m$ for some $m$. Then, for a large enough $k$, $a_m \cdot \lfloor \frac{2^k}{m} \rfloor > \gamma \cdot 2^k$ holds. For such $k$, we can make a binary tree $T$ with $2^k$ leaves, which contains $\lfloor \frac{2^k}{m} \rfloor$ subtrees with $m$ leaves. This tree can satisfy $a(T) > \gamma \cdot 2^k$, which is a contradiction to $a_{2^k} \leq \gamma \cdot 2^k$. Therefore, $a_m \leq \gamma \cdot m$ for all $m$.
\end{proof}

By assigning this $\gamma$ to \autoref{lem:level2b}, we make partial progress to the main problem of this section.

\begin{corollary}
    For any bipartite graph $G$, \textsc{RandomizedLST} uses at most $1.096 \log n + O(1)$ colors in expectation.
    \label{col:upperbound_partial}
\end{corollary}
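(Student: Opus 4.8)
The plan is to assemble the pieces established in this subsection with the tree reduction made earlier in this section. Fix an $n$-vertex input graph $G$. As observed after \autoref{lem:leveling_lemma}, when searching for the worst expected number of colors we may assume $G$ is connected — adding edges from $v_n$ to every other component never decreases any level — so the associated forest $T$ is a tree whose root is vertex $n$, and then $\ell_n=\ell_{\mathrm{root}(T)}=\max(\ell_1,\dots,\ell_n)$. Such a tree has $m\le n$ leaves, so applying \autoref{lem:level2b} together with $m\le n$ gives
\begin{equation*}
  \mathbb{E}[\ell_{\mathrm{root}(T)}]\ \le\ \frac{1}{\log(1/\gamma)}\,\log n+O(1),
\end{equation*}
and since \textsc{RandomizedLST} uses in expectation at most $2\,\mathbb{E}[\max(\ell_1,\dots,\ell_n)]$ colors, the expected number of colors is at most $\frac{2}{\log(1/\gamma)}\,\log n+O(1)$.

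It then remains to substitute the value of $\gamma$ from \autoref{thm:level2c}, namely $\gamma=\sum_{i\ge 1}2^{-(2^i-1+i)}<0.282229$. This yields $1/\gamma>3.543$, hence $\log(1/\gamma)=\log_2(1/\gamma)>1.825$, and therefore $\frac{2}{\log(1/\gamma)}<1.096$, which gives the stated bound of $1.096\log n+O(1)$ colors. I note that for the sharper bound of \autoref{thm:main3} one later replaces the one-step estimate behind \autoref{lem:level2a} and \autoref{lem:level2b} by a multi-step (dynamic-programming) analysis of how quickly the $(2L{+}1)$-st color is forced; but for this corollary the single-step bound already suffices.

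There is no real obstacle in this proof: all of the substantive work sits in \autoref{thm:level2c}, where the complete binary tree is shown to be extremal and the constant $\gamma$ is evaluated, and in \autoref{lem:level2b}, which converts the growth rate $\gamma$ of the level-$i$ terminal counts into a logarithmic bound on the root level. The only point requiring any care is the closing numeric chain — one should retain enough terms of the rapidly converging series for $\gamma$ (the first three or four terms already pin it down to within the needed precision, since the remaining tail is below $2^{-18}$) so as to be certain that $2/\log_2(1/\gamma)$ is \emph{strictly} below $1.096$ rather than merely rounding to it.
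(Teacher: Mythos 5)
Your proposal is correct and follows essentially the same route as the paper: reduce to a connected graph so that $\ell_n=\max_i \ell_i$ for the associated tree with $m\le n$ leaves, apply \autoref{lem:level2b} with the value of $\gamma$ from \autoref{thm:level2c}, and double the expected root level to bound the number of colors, with $2/\log_2(1/\gamma)<1.096$ checked numerically. The paper states this corollary by simply "assigning $\gamma$ to \autoref{lem:level2b}"; your write-up just makes those same steps explicit.
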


\subsection{Analysis 2: Increasing the Layers}

In this subsection, we further improve the upper bound. To this end, we analyze how quickly the level-($L+1$) vertices appear in the worst case for a constant $L \geq 2$. Similar to the $L = 1$ case, we consider the maximum expected number of level-$(L+1)$ terminals for a tree with $m$ leaves, $a^{(L)}_m$. Let $\gamma^{(L)}$ be the minimum $\gamma$ that $a^{(L)}_m \leq \gamma \cdot m$ for every $m \geq 1$; for example, we have shown $\gamma^{(1)} \approx 0.282228$ (see \autoref{thm:level2c}). We can show the following generalization of \autoref{lem:level2b}.

\begin{lemma}
    $\mathbb{E}[\ell_{\mathrm{root}}(T)] \leq \frac{L}{\log_2(1/\gamma^{(L)})} \cdot \log m + O(1)$.
    \label{lem:levelk}
\end{lemma}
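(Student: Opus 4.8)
The plan is to mimic the two-step argument used for the $L=1$ case (Lemmas~\ref{lem:level2a} and~\ref{lem:level2b}), but now counting level-$(L+1)$ terminals instead of level-$2$ terminals, and ``walking up'' the tree in blocks of $L$ levels at a time. First I would establish the analogue of Lemma~\ref{lem:level2a}: if $a^{(L)}_{m'} \leq \gamma^{(L)} \cdot m'$ for all $m' \geq 1$, then for every tree $T$ with $m$ leaves and every $j \geq 0$, the expected number of level-$(jL+1)$ terminals is at most $(\gamma^{(L)})^{j} \cdot m$. The base case $j=0$ is immediate since every leaf is a level-$1$ terminal, so the expected count is exactly $m$. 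For the inductive step, let $Y_j$ denote the number of level-$(jL+1)$ terminals, and consider the subgraph $T_{jL+1}$ induced by the vertices of level $\geq jL+1$. Exactly as in the proof of Lemma~\ref{lem:level2a}, contracting every vertex of $T_{jL+1}$ that has only one child in $T_{jL+1}$ turns it into a binary tree whose $Y_j$ leaves are precisely the level-$(jL+1)$ terminals, and on this contracted binary tree the conditional level-assignment process is identical to the standard process except that levels are shifted up by $jL$. Hence, conditioned on $Y_j$, the expected number of level-$(jL+L+1) = ((j+1)L+1)$ terminals is at most $a^{(L)}_{Y_j} \leq \gamma^{(L)} \cdot Y_j$, and taking expectations gives $\mathbb{E}[Y_{j+1}] \leq \gamma^{(L)} \cdot \mathbb{E}[Y_j] \leq (\gamma^{(L)})^{j+1} \cdot m$.

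Next I would convert this into a bound on $\mathbb{E}[\ell_{\mathrm{root}(T)}]$. If the root has level $\geq jL+1$, then in particular there must exist at least one level-$(jL+1)$ terminal somewhere in $T$ (the highest-level terminals always exist whenever that level is attained), so $\mathrm{Pr}[\ell_{\mathrm{root}(T)} \geq jL+1] \leq \min(\mathbb{E}[Y_j], 1) \leq \min((\gamma^{(L)})^{j} \cdot m, 1)$. Since the level is a positive integer and increases by at most $L$ as we pass from the ``$jL+1$ threshold'' to the ``$(j+1)L+1$ threshold'', we can bound
\begin{align*}
    \mathbb{E}[\ell_{\mathrm{root}(T)}] &= \sum_{i=1}^{\infty} \mathrm{Pr}[\ell_{\mathrm{root}(T)} \geq i] \\
    &\leq L \cdot \sum_{j=0}^{\infty} \mathrm{Pr}[\ell_{\mathrm{root}(T)} \geq jL+1] \\
    &\leq L \cdot \sum_{j=0}^{\infty} \min\big((\gamma^{(L)})^{j} \cdot m, 1\big) \\
    &\leq \frac{L}{\log_2(1/\gamma^{(L)})} \cdot \log m + O(1),
\end{align*}
where the first inequality groups the terms into blocks of $L$ consecutive thresholds (each term in a block is at most the probability at the block's starting threshold, by monotonicity of $\mathrm{Pr}[\ell_{\mathrm{root}(T)} \geq i]$ in $i$), and the last inequality is the same geometric-sum estimate as in Lemma~\ref{lem:level2b}: the number of indices $j$ with $(\gamma^{(L)})^{j} m \geq 1$ is at most $\log m / \log(1/\gamma^{(L)}) + 1$, and the tail $\sum_j (\gamma^{(L)})^{j}$ over the remaining indices is bounded by $1/(1-\gamma^{(L)})$, so the whole correction term is $O(1)$ for fixed $L$.

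The only real subtlety — and the step I would be most careful about — is the blocking inequality $\sum_{i \geq 1}\mathrm{Pr}[\ell_{\mathrm{root}} \geq i] \leq L \sum_{j \geq 0}\mathrm{Pr}[\ell_{\mathrm{root}} \geq jL+1]$: it uses that $\mathrm{Pr}[\ell_{\mathrm{root}(T)} \geq i]$ is nonincreasing in $i$, so each of the $L$ indices $i \in \{jL+1, \dots, jL+L\}$ contributes at most $\mathrm{Pr}[\ell_{\mathrm{root}(T)} \geq jL+1]$. This monotonicity is clear since $\{\ell_{\mathrm{root}(T)} \geq i+1\} \subseteq \{\ell_{\mathrm{root}(T)} \geq i\}$. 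I would also want to double-check the reduction in the first step that the contracted subgraph $T_{jL+1}$ really does inherit the exact level-process distribution shifted by $jL$ — this is the same observation already used (for $jL = 1$) in Lemma~\ref{lem:level2a}, and it holds because a vertex with a single child in $T_{jL+1}$ has its ``other'' subtree entirely below level $jL+1$, hence never influences the level, so contracting it is harmless; and a vertex with two children in $T_{jL+1}$ gets leveled from those two children by exactly the rule of Algorithm~\ref{alg:leveling_alternative}. Everything else is a routine geometric-series computation, so no further difficulty is expected.
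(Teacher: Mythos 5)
Your proposal is correct and matches the paper's intended argument: the paper proves \autoref{lem:levelk} simply by saying it follows ``similarly to'' \autoref{lem:level2a} and \autoref{lem:level2b}, and your write-up is exactly that generalization — bounding the expected number of level-$(jL+1)$ terminals by $(\gamma^{(L)})^{j}m$ via the same contraction-of-$T_{jL+1}$ argument, then converting to $\mathbb{E}[\ell_{\mathrm{root}(T)}]$ with the factor-$L$ blocking of the tail sum. No gaps; the subtleties you flag (monotonicity of the tail probabilities, the shifted level process on the contracted tree) are handled at the same level of rigor as the paper's $L=1$ proofs.
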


\begin{proof}
    We can prove this fact similarly to the proofs of \autoref{lem:level2a} and \autoref{lem:level2b}.
\end{proof}

Estimating the value of $\gamma^{(L)}$ is difficult for $L \geq 2$, so we try to obtain a good upper bound for $\gamma^{(L)}$ by computer check. Obviously, simple brute force is impossible because trees can be infinitely large. Instead, our idea is to sum up the performances of small enough subtrees. The following lemma demonstrates that the results of small trees can be used to upper-bound the value of $\gamma^{(L)}$:

\begin{lemma}
    Let $b^{(L)}_m$ be the maximum value of (expected number of level-$(L+1)$ terminals) + $\mathrm{Pr}[\ell_{\mathrm{root}(T)} \leq L]$ among all binary trees $T$ with $m$ leaves. Then,
    
    $$
    \gamma^{(L)} \leq \max\left(\frac{b^{(L)}_B}{B}, \dots, \frac{b^{(L)}_{2B-1}}{2B-1}\right)
    $$
    
    \noindent
    holds for any $B \geq 1$.
    \label{lem:small_upperbound}
\end{lemma}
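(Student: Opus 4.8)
The plan is to prove $\gamma^{(L)}\le\beta$, where $\beta:=\max\bigl(b^{(L)}_B/B,\dots,b^{(L)}_{2B-1}/(2B-1)\bigr)$. First I would record two reductions. One: running the leveling process with every level capped at $L+1$ changes neither the set of level-$(L+1)$ terminals (a vertex pushed above $L+1$ can never be a level-$(L+1)$ terminal, and can never be a level-$L$ child of one) nor the event $\{\ell_{\mathrm{root}}\le L\}$, so I work with the capped process throughout. Two: since a tree with $p$ leaves and a tree with $q$ leaves can be joined under a common new root to a $(p+q)$-leaf tree whose terminal count dominates the sum, the sequence $a^{(L)}_m$ is superadditive, hence $\gamma^{(L)}=\sup_m a^{(L)}_m/m=\lim_{m\to\infty}a^{(L)}_m/m$; in particular it suffices to bound $a^{(L)}_m/m$ for arbitrarily large $m$, and once $\gamma^{(L)}\le\beta$ is established, $a^{(L)}_m\le\beta m$ follows automatically for all small $m$ too. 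Since $\mathrm{Pr}[\ell_{\mathrm{root}(T)}\le L]\ge 0$ we also have $a^{(L)}_m\le b^{(L)}_m$.

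Next, the structural decomposition. Given a binary tree $T$ with $m$ leaves ($m$ large), follow from the root the path that always descends into the child with more leaves below it. Along this path the subtree sizes strictly decrease from $m$ to $1$ with consecutive sizes differing by a factor of at most $2$, so the path contains a vertex $v$ whose subtree $T_v$ has some $j\in[B,2B-1]$ leaves. Contract $T_v$ to a single leaf $\sigma$, obtaining a tree $T'$ with $m'=m-j+1$ leaves (one more than ideal — the discrepancy the $\mathrm{Pr}[\ell_{\mathrm{root}}\le L]$ term in $b^{(L)}$ is meant to pay for). Coupling the process on $T$ with the process on $T'$ in which $\sigma$ is pre-assigned the coupled value $Y:=\widetilde\ell_{\mathrm{root}(T_v)}$, all levels outside $T_v$ agree, giving the exact identities $X_{L+1}(T)=X_{L+1}(T_v)+X_{L+1}(T'')$ and $\mathbf{1}[\ell_{\mathrm{root}(T)}\le L]=\mathbf{1}[\ell_{\mathrm{root}(T'')}\le L]$, where $T''$ is $T'$ with $\sigma$ pre-assigned $Y$. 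Writing $\phi(S):=\mathbb{E}[X_{L+1}(S)]+\mathrm{Pr}[\ell_{\mathrm{root}(S)}\le L]$ and using $\mathbb{E}[X_{L+1}(T_v)]=\phi(T_v)-\mathrm{Pr}[Y\le L]$, this rearranges to $\phi(T)=\phi(T_v)-\mathrm{Pr}[Y\le L]+\bigl(\mathbb{E}[X_{L+1}(T'')]+\mathrm{Pr}[\ell_{\mathrm{root}(T'')}\le L]\bigr)$.

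The technical heart — and the main obstacle — is then the ``near-subadditivity'' estimate
\[
\mathbb{E}[X_{L+1}(T'')]+\mathrm{Pr}[\ell_{\mathrm{root}(T'')}\le L]\ \le\ b^{(L)}_{m'}-1+\mathrm{Pr}[Y\le L],
\]
which together with $\phi(T_v)\le b^{(L)}_j$ and the identity above yields $\phi(T)\le b^{(L)}_j+b^{(L)}_{m'}-1$. Heuristically: when $Y=L+1$ the summary leaf forces the whole path from $\sigma$ to $\mathrm{root}(T'')$ to stay strictly above $L$, simultaneously destroying every terminal on that path and the root-indicator — exactly one unit of slack, enough to absorb the extra leaf created by the contraction; when $Y\le L$ one shows $\phi(T'')\le b^{(L)}_{m'}$ by comparing, level by level up the $\sigma$-path, the process that holds $\sigma$ at level $Y$ with the process in which $\sigma$ is a fresh leaf. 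This comparison is delicate precisely because presetting $\sigma$ to an intermediate level $2,\dots,L$ is \emph{not} monotone in the terminal count — raising a leaf can both create and destroy level-$(L+1)$ terminals along its root-path — so one must track the coupling carefully and let the $\mathrm{Pr}[\ell_{\mathrm{root}}\le L]$ bookkeeping carry the discrepancy.

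Finally one closes the induction. Iterating the decomposition (and choosing $v$ on the heavy path so that $m'$ too lies in a controlled regime), with base cases $m'\in[B,2B-1]$, where $b^{(L)}_{m'}\le\beta m'$ by the definition of $\beta$, and $m'<2^{L+1}-1$, where $b^{(L)}_{m'}=1$, the near-subadditivity gives $b^{(L)}_m\le b^{(L)}_j+b^{(L)}_{m'}-1\le\beta j+\beta m'-1=\beta(m+1)-1\le\beta m$, using $\beta\le 1$ (which follows from $b^{(L)}_m\le m$). Hence $a^{(L)}_m\le b^{(L)}_m\le\beta m$ for all sufficiently large $m$, and by the superadditivity reduction $\gamma^{(L)}=\sup_m a^{(L)}_m/m\le\beta$, as claimed.
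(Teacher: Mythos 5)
Your proposal hinges on the ``near-subadditivity'' inequality $\phi(T)\le b^{(L)}_{j}+b^{(L)}_{m-j+1}-1$ (equivalently, $\mathbb{E}[X_{L+1}(T'')]+\Pr[\ell_{\mathrm{root}(T'')}\le L]\le b^{(L)}_{m'}-1+\Pr[Y\le L]$), and this inequality is false. Concrete counterexample: take $L=1$, $B=2$, and let $T$ be the balanced binary tree with $m=4$ leaves. Then $b^{(1)}_2=\tfrac12+\tfrac12=1$, $b^{(1)}_3=\tfrac34+\tfrac14=1$, while $\phi(T)=\tfrac98+\tfrac18=\tfrac54$, so $\phi(T)>b^{(1)}_2+b^{(1)}_3-1=1$ even though here $j=2\in[B,2B-1]$ and $m'=3\in[B,2B-1]$ is exactly your base case; your induction would conclude $b^{(1)}_4\le 1$, which is wrong. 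One can see where the heuristic breaks in this example: conditioned on $Y=L+1$ you need $\phi(T'')\le b^{(L)}_{m'}-1$, but presetting $\sigma$ to level $L+1$ only removes from $\phi(T')$ the on-path terminal contribution plus the root indicator, i.e.\ an amount $\mathbb{E}[\text{on-path terminals}]+\Pr[\ell_{\mathrm{root}(T')}\le L]$, which can be far below the full unit you need (here it is $\tfrac14+\tfrac14=\tfrac12$, and indeed $\phi(T''\mid Y=2)=\tfrac12=b^{(1)}_3-\tfrac12$). Since you yourself flag this step as ``the main obstacle'' and give only this heuristic, the proof has a genuine gap, and the gap is not repairable by being more careful with the coupling: the inequality itself, with the $-1$ bookkeeping, is simply not true.

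The paper's argument avoids any such recursion and any $-1$ correction. It marks every vertex $v$ with $s_v\ge B$ whose two children both have fewer than $B$ leaves (so $B\le s_v\le 2B-1$), and observes that among the ancestors of such a $v$ at most one vertex can be a level-$(L+1)$ terminal, and only on the event $\ell_v\le L$; hence the expected number of terminals in (subtree of $v$) $\cup$ (ancestors of $v$) is at most $b^{(L)}_{s_v}$, which is where the $\Pr[\ell_{\mathrm{root}}\le L]$ term in the definition of $b^{(L)}$ is consumed. Summing these bounds over the (leaf-disjoint) marked subtrees and bounding the leftover vertices, which lie in subtrees with fewer than $B$ leaves, by $a^{(L)}_j/j$, gives the per-leaf bound in one shot. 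If you want to salvage your plan, you would need a decomposition of this charging type rather than a peel-off recursion in which each contraction is paid for by a full unit of slack.
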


\begin{proof}
    Consider any binary tree $T$. For explanation, we color vertex $v$ red if $s_v \geq B$ and $s_x < B$ for all the children $x$ (of $v$). Note that as in the proof of \autoref{thm:level2c}, $s_v$ is the number of leaves in the subtree rooted at vertex $v$. For each red-colored vertex $v$, let $U_v = (\text{the set of vertices in the subtree of $v$}) \cup (\text{the set of ancestors of $v$})$. We claim that the expected number of level-$(L+1)$ terminals in $U_v$ is bounded by $b^{(L)}_{s_v}$. Since there can be at most one level-$(L+1)$ terminals in the ancestors of $v$ (only when $\ell_v \leq L$), the claim follows. Since $B \leq s_v \leq 2B-1$ and the subtrees of red vertices are disjoint, the expected number of level-$(L+1)$ terminals in $\bigcup_{v: \mathrm{red}} U_v$ is at most:
    
    $$
    \max\left(\frac{b^{(L)}_B}{B}, \dots, \frac{b^{(L)}_{2B-1}}{2B-1}\right) \cdot \sum_{v: \mathrm{red}} s_v.
    $$

    \noindent
    Some vertices are in $V(T) \setminus \bigcup_{v: \mathrm{red}} U_v$, but they are composed of subtrees with $B-1$ or fewer leaves. Thus, the expected number of level-$(L+1)$ terminals in these vertices is at most:
    
    $$
    \max\left(\frac{a^{(L)}_1}{1}, \dots, \frac{a^{(L)}_{B-1}}{B-1}\right) \cdot \left(m - \sum_{v: \mathrm{red}} s_v\right).
    $$

    \noindent
    Overall, the expected number of level-$(L+1)$ terminals in $T$ can be upper-bounded by
    
    $$
    \max\left(\frac{a^{(L)}_1}{1}, \dots, \frac{a^{(L)}_{B-1}}{B-1}, \frac{b^{(L)}_B}{B}, \dots, \frac{b^{(L)}_{2B-1}}{2B-1}\right) \cdot m,
    $$

    \noindent
    and since $a^{(L)}_m/m \leq \gamma^{(L)}$, the statement of the lemma holds.
\end{proof}

So, how do we calculate $b^{(L)}_m$? Brute-forcing all binary trees to calculate $b^{(L)}_m$ is realistic only for $m \leq 40$, even using a computer check. Instead, we attempt to obtain a good upper bound for $b^{(L)}_m$. First, we use the following lemma as a tool.

\begin{lemma}
    Let $p_{m, t}$ be the maximum value of $\mathrm{Pr}[\ell_{\mathrm{root}(T)} \geq t]$ among all binary trees $T$ with $m$ leaves. Define $p'_{m, t} \ (m \geq 1, t \geq 1)$ by the recurrence relation that $p'_{m, 1} = 1$ and
    \begin{equation*}
        p'_{m, t} = \max_{m_l + m_r = m} \left\{1 - (1-p'_{m_l,t})(1-p'_{m_r,t}) + \frac{1}{2}(p'_{m_l,t-1}-p'_{m_l,t})(p'_{m_r,t-1}-p'_{m_r,t})\right\}
    \end{equation*}
    for $m \geq 2$. Then, $p_{m, t} \leq p'_{m, t}$ holds for all $m, t$.
\end{lemma}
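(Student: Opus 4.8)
The plan is to prove $p_{m,t}\le p'_{m,t}$ by induction on $m$, simultaneously carrying two auxiliary invariants without which the induction does not close: for every $m,t$ one has $p'_{m,t}\in[0,1]$, and for every $m$ the sequence $(p'_{m,t})_{t\ge 1}$ is nonincreasing. The base case $m=1$ is immediate: a single leaf always receives level $1$ in \autoref{alg:leveling_alternative}, so $p_{1,1}=1=p'_{1,1}$ and $p_{1,t}=0=p'_{1,t}$ for $t\ge 2$, and both invariants hold.

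The first real step is to turn the leveling rule of \autoref{alg:leveling_alternative} into an exact formula for the level of the root of a binary tree $T$ with $m\ge 2$ leaves. If the root has child-subtrees $T_l,T_r$ with $m_l,m_r\ge 1$ leaves and $m_l+m_r=m$, then $\ell_{\mathrm{root}(T_l)}$ and $\ell_{\mathrm{root}(T_r)}$ depend on disjoint blocks of coin flips and are therefore independent, and the root's level equals $\max(\ell_{\mathrm{root}(T_l)},\ell_{\mathrm{root}(T_r)})$ when these two levels differ, while it equals $\ell_{\mathrm{root}(T_l)}+B$ with $B$ an independent $\mathrm{Bernoulli}(1/2)$ when they coincide. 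Writing $a_s=\mathrm{Pr}[\ell_{\mathrm{root}(T_l)}\ge s]$ and $b_s=\mathrm{Pr}[\ell_{\mathrm{root}(T_r)}\ge s]$, and splitting on whether $\max(\ell_{\mathrm{root}(T_l)},\ell_{\mathrm{root}(T_r)})$ is $\ge t$, equals $t-1$, or is $\le t-2$, one obtains
\[
\mathrm{Pr}[\ell_{\mathrm{root}(T)}\ge t]=g(a_t,b_t,a_{t-1},b_{t-1}),\qquad g(a,b,a',b')=1-(1-a)(1-b)+\tfrac{1}{2}(a'-a)(b'-b).
\]
The second ingredient is a monotonicity property of $g$. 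Since $g$ is affine in each of its four variables separately, the partials $\partial g/\partial a=(1-b)-\frac{1}{2}(b'-b)$, $\partial g/\partial a'=\frac{1}{2}(b'-b)$ and their mirror images do not depend on the variable being moved; one checks $\partial g/\partial a\ge 0$ whenever $b,b'\in[0,1]$ (since then $b'\le 1\le 2-b$) and $\partial g/\partial a'\ge 0$ whenever $b'\ge b$. Hence along any path that raises the coordinates one at a time inside $[0,1]^4$ and raises a primed coordinate $a'$ (resp. $b'$) only while $b'\ge b$ (resp. $a'\ge a$), the value of $g$ cannot decrease; the same computation gives $g\in[0,1]$ on the region $0\le a\le a'\le 1$, $0\le b\le b'\le 1$.

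Now the inductive step. Assume the three facts for all indices below $m$, and fix a binary tree $T$ with $m$ leaves attaining $p_{m,t}$, with root split $(m_l,m_r)$. Then $a_s\le p_{m_l,s}\le p'_{m_l,s}$ and $b_s\le p'_{m_r,s}$ for every $s$, and $a_t\le a_{t-1}$, $b_t\le b_{t-1}$ because $\mathrm{Pr}[\ell\ge\cdot]$ is nonincreasing. Raising the four arguments of $g$ from $(a_t,b_t,a_{t-1},b_{t-1})$ to $(p'_{m_l,t},p'_{m_r,t},p'_{m_l,t-1},p'_{m_r,t-1})$ in the order $a_{t-1},b_{t-1},a_t,b_t$ is legal: the first move uses $b_{t-1}\ge b_t$, the second uses $p'_{m_l,t-1}\ge a_{t-1}\ge a_t$, and the last two only use that all current coordinates lie in $[0,1]$ by the invariant at $m_l,m_r$. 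Therefore
\[
p_{m,t}=g(a_t,b_t,a_{t-1},b_{t-1})\le g(p'_{m_l,t},p'_{m_r,t},p'_{m_l,t-1},p'_{m_r,t-1})\le\max_{m_l+m_r=m}g(\cdots)=p'_{m,t}.
\]
Applying the monotonicity of $g$ once more, with the inductive monotonicity of $(p'_{m_l,\cdot})$ and $(p'_{m_r,\cdot})$, yields $p'_{m,t+1}\le p'_{m,t}$, and the bound $g\in[0,1]$ gives $p'_{m,t}\in[0,1]$; so the invariants propagate to $m$ and the induction closes.

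I expect the main obstacle to be the mild self-reference in the last step: the boundedness $p'_{m,t}\le 1$ and the nonincreasingness of $(p'_{m,t})_t$ each rely, through $g$, on the other at the neighbouring index, and the estimate for $p_{m,t}$ itself borrows $p'_{m_r,t-1}\le 1$. Bundling all three statements into one induction on $m$, together with the care needed in ordering the four coordinate increases so that one never leaves the region where $g$ is monotone, removes this circularity. The only other delicate point is the case $\max(\ell_{\mathrm{root}(T_l)},\ell_{\mathrm{root}(T_r)})=t-1$ when deriving the formula: it contributes to $\{\ell_{\mathrm{root}(T)}\ge t\}$ only through the event that \emph{both} children have level exactly $t-1$ and the root's coin says ``increase'', which is precisely the $\frac{1}{2}(a_{t-1}-a_t)(b_{t-1}-b_t)$ term.
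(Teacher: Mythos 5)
Your proposal is correct and follows the same skeleton as the paper's proof: both derive the exact two-child recursion $\mathrm{Pr}[\ell_{\mathrm{root}(T)} \geq t] = 1-(1-a_t)(1-b_t)+\frac{1}{2}(a_{t-1}-a_t)(b_{t-1}-b_t)$ and then replace the children's tail probabilities by the inductive upper bounds $p'$. The one substantive difference is how monotonicity is justified. The paper simply invokes \autoref{lem:monotone} to claim that the right-hand side is nondecreasing in the children's tail probabilities and then substitutes; this is slightly informal, since that lemma speaks of replacing subtrees by actual trees, whereas the values $p'_{m_l,\cdot}, p'_{m_r,\cdot}$ need not be realizable tail-probability sequences of any tree. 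You instead verify the monotonicity of the explicit function $g(a,b,a',b')$ by inspecting its partial derivatives on the region $0\le a\le a'\le 1$, $0\le b\le b'\le 1$, order the four coordinate increases so as to stay in that region, and carry the auxiliary invariants $p'_{m,t}\in[0,1]$ and $p'_{m,t+1}\le p'_{m,t}$ through the induction to certify that the substituted point lies in the monotone region. This buys a fully self-contained and airtight argument at precisely the step the paper treats loosely, at the cost of the extra bookkeeping with the two invariants; conceptually the two proofs are the same.
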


\begin{proof}
    Let $v_l, v_r$ be the children of $\mathrm{root}(T)$, and let $m_l, m_r$ the number of leaves in subtree of $v_l$ and $v_r$, respectively. Here, the following equation holds:
    \begin{equation*}
        \mathrm{Pr}[\ell_{\mathrm{root}(T)} \geq t] = (1 - \mathrm{Pr}[\ell_{v_l} < t] \cdot \mathrm{Pr}[\ell_{v_r} < t]) + \frac{1}{2} \cdot \mathrm{Pr}[\ell_{v_l} = t-1] \cdot \mathrm{Pr}[\ell_{v_r} = t-1]
    \end{equation*}
    This is because, $\ell_{\mathrm{root}(T)} \geq t$ if exactly one of the following happens:
    \begin{itemize}
        \item Either $\ell_{v_l} \geq t$ or $\ell_{v_r} \geq t$ (probability $1 - \mathrm{Pr}[\ell_{v_l} < t] \cdot \mathrm{Pr}[\ell_{v_r} < t]$)
        \item $\ell_{v_l} = \ell_{v_r} = t-1$ and $\ell_{\mathrm{root}(T)} = t$ (probability $\frac{1}{2} \cdot \mathrm{Pr}[\ell_{v_l} = t-1] \cdot \mathrm{Pr}[\ell_{v_r} = t-1]$)
    \end{itemize}
    By \autoref{lem:monotone}, $\mathrm{Pr}[\ell_{\mathrm{root}(T)} \geq t]$ is a (non-strictly) increasing function with respect to $\mathrm{Pr}[\ell_{v_l} \geq i]$ and $\mathrm{Pr}[\ell_{v_r} \geq i]$ for each $i \in \mathbb{N}$. Since $\mathrm{Pr}[\ell_v < t] = 1 - \mathrm{Pr}[\ell_v \geq t]$ and $\mathrm{Pr}[\ell_v = t] = \mathrm{Pr}[\ell_v \geq t-1] - \mathrm{Pr}[\ell_v \geq t]$ hold for any vertex $v$, the following holds:
    \begin{equation*}
        \mathrm{Pr}[\ell_{\mathrm{root}(T)} \geq t] \leq \left\{1 - (1-p'_{m_l,t})(1-p'_{m_r,t})\right\} + \frac{1}{2}(p'_{m_l,t-1}-p'_{m_l,t})(p'_{m_r,t-1}-p'_{m_r,t})
    \end{equation*}
    which gives an upper bound of $p_{m, t}$. We note that $p'_{m, t}$ can be calculated in $O(m^2 t)$ time.
\end{proof}

Next, we show an alternative way to count the expected number of level-$k$ terminals. Let $q_v = \mathrm{Pr}[\ell_v \geq L+1]$. Then, the probability that $v$ is a level-$(L+1)$ terminal is the following, where $\mathrm{left}(v), \mathrm{right}(v)$ are children of $v$:
\begin{equation*}
    q_v - (1 - (1 - q_{\mathrm{left}(v)})(1 - q_{\mathrm{right}(v)}))
\end{equation*}
This is because $v$ is a level-$k$ terminal when $v$ has level $L+1$ or more (probability $q_v$) but not ``either $\mathrm{left}(v)$ or $v_r$ have level $L+1$ or more (probability $1 - (1 - q_{\mathrm{left}(v)})(1 - q_{\mathrm{right}(v)})$)''. The number of level-$(L+1)$ terminals in $T$ is the sum of this value for all $v \in V(T)$.

For $b^{(L)}_m$, we considered (expected number of level-$(L+1)$ terminals) + $\mathrm{Pr}[\ell_{\mathrm{root}(T)} \leq L]$. The value of $\mathrm{Pr}[\ell_{\mathrm{root}(T)} \leq L] \ (= 1 - q_{\mathrm{root}(T)})$ is equal to the sum of
\begin{equation*}
    -q_v + q_{\mathrm{left}(v)} + q_{\mathrm{right}(v)}
\end{equation*}
across all non-leaf vertex $v$, plus $1$. This is because $q_v = 0$ when $v$ is a leaf, and $q_v$ for all non-root, non-leaf vertices are canceled out, only $-q_{\mathrm{root}(T)}$ to remain. Therefore:
\begin{equation*}
    \text{(expected number of level-$(L+1)$ terminals)} + \mathrm{Pr}[\ell_{\mathrm{root}(T)} \leq L] = 1 + \sum_{v: \text{non-leaf}} q_{\mathrm{left}(v)} q_{\mathrm{right}(v)}
\end{equation*}
because $(q_v - (1 - (1 - q_{\mathrm{left}(v)})(1 - q_{\mathrm{right}(v)}))) + (-q_v + q_{\mathrm{left}(v)} + q_{\mathrm{right}(v)}) = q_{\mathrm{left}(v)} q_{\mathrm{right}(v)}$. So, we must estimate the maximum value of $1 + \sum_{v: \text{non-leaf}} q_{\mathrm{left}(v)} q_{\mathrm{right}(v)}$.

Here, $q_v$ can be between $0$ and $p'_{s_v, L+1}$, where $s_v$ is the number of leaves in the subtree of $v$. We consider the relaxed problem to maximize $1 + \sum_{v: \text{non-leaf}} q_{\mathrm{left}(v)} q_{\mathrm{right}(v)}$ under $0 \leq q_v \leq p'_{s_v, L+1}$ for all $v \in V(T)$. Let $b'^{(L)}_m$ be the maximum value for the relaxed problem among all tree $T$.

If the tree $T$ is fixed, it is obvious that $q_v = p'_{s_v, L+1}$ is the optimal solution. Using this fact, the dynamic programming formula for the optimal value $b'^{(L)}_m$ can be easily found:

\begin{lemma}
    $b'^{(L)}_m$ satisfies the following recurrence relation:
    \begin{equation*}
        b'^{(L)}_1 = 1, \quad b'^{(L)}_m = 1 + \max_{m_l + m_r = m} ((b'^{(L)}_{m_l} - 1) + (b'^{(L)}_{m_r} - 1) + p'_{m_l, L+1} \cdot p'_{m_r, L+1})
    \end{equation*}
\end{lemma}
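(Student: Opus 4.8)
The plan is to prove the recurrence by structural induction on the binary tree $T$, exploiting the fact that the relaxed objective $1 + \sum_{v:\,\text{non-leaf}} q_{\mathrm{left}(v)} q_{\mathrm{right}(v)}$ decomposes along the split at the root into three pieces that involve essentially disjoint sets of the variables $q_v$, and that each piece can be optimized independently — the standard ``optimal substructure'' pattern for tree dynamic programming.

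The base case $m=1$ is immediate: a one-leaf binary tree has no non-leaf vertex, so the objective is just $1$, hence $b'^{(L)}_1 = 1$. For $m \ge 2$, fix any binary tree $T$ with $m$ leaves; its root $\rho$ has two children, the roots $r_l, r_r$ of subtrees $T_l, T_r$ with $m_l, m_r \ge 1$ leaves, $m_l + m_r = m$. Splitting the sum over non-leaf vertices of $T$ into the term at $v=\rho$ (which is $q_{r_l} q_{r_r}$), the sum over non-leaf vertices of $T_l$, and the sum over non-leaf vertices of $T_r$, we get
\[
1 + q_{r_l} q_{r_r} + \left(\sum_{v:\,\text{non-leaf in }T_l} q_{\mathrm{left}(v)} q_{\mathrm{right}(v)}\right) + \left(\sum_{v:\,\text{non-leaf in }T_r} q_{\mathrm{left}(v)} q_{\mathrm{right}(v)}\right).
\]
The key observation is that in the internal sum over $T_l$ a variable $q_v$ occurs only when $v$ is a child of some non-leaf vertex of $T_l$, i.e. only for $v \in V(T_l)\setminus\{r_l\}$; in particular $q_{r_l}$ does not occur there, and symmetrically $q_{r_r}$ does not occur in the internal sum over $T_r$. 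Thus the three parenthesized groups depend on pairwise-disjoint variable sets $\{q_{r_l}, q_{r_r}\}$, $\{q_v : v \in V(T_l)\setminus\{r_l\}\}$, $\{q_v : v \in V(T_r)\setminus\{r_r\}\}$. Moreover the constraint $0 \le q_{r_l} \le p'_{m_l, L+1}$ depends only on $m_l$ (the number of leaves below $r_l$), not on the shape of $T_l$, and likewise for $r_r$.

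Hence I maximize the three groups independently. Since $p'_{m_l, L+1}, p'_{m_r, L+1} \ge 0$, the bilinear term $q_{r_l} q_{r_r}$ over the box $[0,p'_{m_l,L+1}]\times[0,p'_{m_r,L+1}]$ is maximized at the corner, giving $p'_{m_l,L+1}\cdot p'_{m_r,L+1}$. Maximizing $1 + \sum_{v:\,\text{non-leaf in }T_l} q_{\mathrm{left}(v)} q_{\mathrm{right}(v)}$ over all admissible $q$-values in $T_l$ and over all shapes of $T_l$ with $m_l$ leaves is, by definition, $b'^{(L)}_{m_l}$, so that group attains $b'^{(L)}_{m_l} - 1$; similarly the $T_r$ group attains $b'^{(L)}_{m_r} - 1$. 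Because these maximizations are over disjoint variable sets and independent shape choices for $T_l$ and $T_r$, they can be realized simultaneously, so for a fixed split the best objective value is $1 + p'_{m_l,L+1} p'_{m_r,L+1} + (b'^{(L)}_{m_l}-1) + (b'^{(L)}_{m_r}-1)$. Taking the maximum over all splits $m_l + m_r = m$ is exactly the outer maximization in the claimed recurrence, which completes the induction.

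There is no serious obstacle; the only point needing care is the variable-disjointness claim — that $q_{\mathrm{root}(T_l)}$ does not appear in the internal sum of $T_l$ and that its upper bound $p'_{m_l,L+1}$ is a function of $m_l$ alone — since that is what legitimizes optimizing the root split and the two subtrees separately. Everything else reduces to the trivial facts that a bilinear form with nonnegative coefficients on a box is maximized at a corner, and that a sum of independent maxima over disjoint domains equals the maximum of the sum.
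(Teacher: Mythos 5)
Your proof is correct and follows essentially the same reasoning the paper leaves implicit (the paper merely notes that for a fixed tree each $q_v$ is optimally set to its upper bound $p'_{s_v,L+1}$ and that the recurrence then "can be easily found"): you fill in the standard tree-DP optimal-substructure argument, splitting the objective at the root and using that the root term and the two internal sums involve disjoint variables with box constraints depending only on the leaf counts. No gaps; the variable-disjointness and corner-maximization points you flag are exactly the facts that make the decomposition legitimate.
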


We note that $b'^{(L)}_m$ can be calculated in $O(m^2)$ time. The calculated $b'^{(L)}_m$ is an upper bound for $b^{(L)}_m$. Hence, assigning $b'^{(L)}_m$ to $b^{(L)}_m$ in \autoref{lem:small_upperbound}, we give an upper bound (or ``upper bound of upper bound'') of $\gamma^{(L)}$, leading to the better analysis for the number of colors in $\textsc{RandomizedLST}$.

\paragraph{Tackling numerical errors.}
In order to upper bound $\gamma^{(L)}$, we must calculate $p'_{m, t}$ and $b'^{(L)}_m$. Conventionally, these values are represented by floating-point numbers, but it may cause numerical errors and create a hole in the proof. It is ideal to calculate everything with integers. So, we \emph{round up} each calculation of $p'_{m, t}$ and $b'^{(L)}_m$ to a rational number of the form $\frac{n}{D} \ (n \in \mathbb{Z})$, where $D$ is a fixed integer parameter. When we run the dynamic programming program, the calculated $p'_{m, t}$ and $b'^{(L)}_m$ will not be lower than the actual $p'_{m, t}$ and $b'^{(L)}_m$. Therefore, we can calculate an upper bound of $\gamma^{(L)}$, and becomes more precise when $D$ is larger.

\paragraph{Results.}
We computed $b'^{(L)}_m$ for $L = 1, \dots, 10$ and $m \leq 2 \cdot 2^{22} + 1$ with $D = 2^{30}$. We note that it is reasonable to set $B = 2^k + 1$ for some integer $k$, because $b'^{(L)}_m/m$ tends to be especially large when $m$ is a power of two. The resulting upper bounds on the expected number of colors in \textsc{RandomizedLST}, is shown in \autoref{tab:upperbound}.

\begin{table}[htbp]
    \caption{Computed $\gamma' = \max(b'^{(L)}_B / B, \dots, b'^{(L)}_{2B-1} / (2B-1))$ and the corresponding upper bounds on the number of colors in \textsc{RandomizedLST} (divided by $\log n$), rounded up to $6$ decimal places}
    \begin{minipage}{0.495\linewidth}
        \centering
        \begin{tabular}{|c|c|c|c|} \hline
            $L$ & $B$ & $\gamma'$ & colors \\ \hline
            $1$ & $2^4 + 1$ & \texttt{2.822285e-1} & \texttt{1.095852} \\ \hline
            $2$ & $2^6 + 1$ & \texttt{7.373281e-2} & \texttt{1.063392} \\ \hline
            $3$ & $2^8 + 1$ & \texttt{1.912694e-2} & \texttt{1.051111} \\ \hline
            $4$ & $2^{10} + 1$ & \texttt{4.957865e-3} & \texttt{1.044924} \\ \hline
            $5$ & $2^{12} + 1$ & \texttt{1.284998e-3} & \texttt{1.041231} \\ \hline
        \end{tabular}
    \end{minipage}
    \begin{minipage}{0.495\linewidth}
        \begin{tabular}{|c|c|c|c|} \hline
            $L$ & $B$ & $\gamma'$ & colors \\ \hline
            $6$ & $2^{14} + 1$ & \texttt{3.330478e-4} & \texttt{1.038783} \\ \hline
            $7$ & $2^{16} + 1$ & \texttt{8.632023e-5} & \texttt{1.037042} \\ \hline
            $8$ & $2^{18} + 1$ & \texttt{2.237284e-5} & \texttt{1.035741} \\ \hline
            $9$ & $2^{20} + 1$ & \texttt{5.798723e-6} & \texttt{1.034731} \\ \hline
            $10$ & $2^{22} + 1$ & \texttt{1.502954e-6} & \texttt{1.033925} \\ \hline
        \end{tabular}
    \end{minipage}
    \label{tab:upperbound}
\end{table}

Now, we obtain $\gamma^{(10)} \leq 1.502954 \times 10^{-6}$. Therefore, by \autoref{lem:levelk}, the following theorem holds:

\begin{theorem}
    For any bipartite graph $G$, \textsc{RandomizedLST} uses at most $1.034 \log n + O(1)$ colors in expectation.
    \label{thm:upperbound}
\end{theorem}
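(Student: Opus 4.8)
The plan is to instantiate the apparatus of this subsection at $L = 10$ and follow the chain of inequalities it provides. First I would fix $L = 10$, the cutoff $B = 2^{22}+1$, and the rounding denominator $D = 2^{30}$, and run two dynamic programs. The first computes, for $m \le 2B-1$ and $t \le L+1$, the quantities $p'_{m,t}$ from their recurrence, storing each value rounded \emph{up} to a multiple of $1/D$; by the lemma relating $p'$ and $p$, together with \autoref{lem:monotone} (which makes the recurrence monotone in its inputs, so that rounding up only inflates the output), the stored numbers dominate the true $p'_{m,t}$ and hence the true $p_{m,t}$. The second dynamic program computes $b'^{(L)}_m$ from its recurrence, again storing values rounded up to multiples of $1/D$; since this recurrence only combines previously stored nonnegative values by sums, products, and maxima, the stored $b'^{(L)}_m$ dominate the true $b'^{(L)}_m$, which in turn dominate $b^{(L)}_m$. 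Reading off $\gamma' := \max_{B \le m \le 2B-1} b'^{(L)}_m / m$ and invoking \autoref{lem:small_upperbound} (with $b^{(L)}_m$ replaced by the larger $b'^{(L)}_m$) certifies $\gamma^{(10)} \le \gamma'$, and the computation summarized in \autoref{tab:upperbound} yields $\gamma^{(10)} \le 1.502954\times 10^{-6}$.

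It then remains to assemble the pieces analytically. Plugging this bound into \autoref{lem:levelk} with $L = 10$ gives $\mathbb{E}[\ell_{\mathrm{root}(T)}] \le \frac{10}{\log_2(1/\gamma^{(10)})}\log m + O(1)$, and since $\log_2(1/(1.502954\times 10^{-6})) > 19.343$ the coefficient is at most $10/19.343 < 0.517$, as recorded in the last row of \autoref{tab:upperbound}. As observed right after \autoref{lem:leveling_lemma}, the expected number of colors used by \textsc{RandomizedLST} is at most $2\,\mathbb{E}[\max(\ell_1,\dots,\ell_n)]$; and for the worst case it suffices to consider connected $G$, for which $T$ is a tree and $\max_i \ell_i = \ell_{\mathrm{root}(T)}$, while \autoref{lem:binarytree} lets us restrict to binary trees and $m \le n$. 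Combining, the expected number of colors is at most $2 \cdot 0.517 \log n + O(1) = 1.034\log n + O(1)$, which is \autoref{thm:upperbound}.

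The main obstacle is not any individual inequality but the rigor of the computer-assisted step: one must be sure that every quantity output by the two dynamic programs is a genuine \emph{upper} bound on the combinatorial quantity it approximates. This needs (a) monotonicity of both recurrences in their arguments — for $p'$ this is exactly \autoref{lem:monotone} applied to $\Pr[\ell_{\mathrm{root}(T)} \ge t]$ as a function of the children's distributions, and for $b'$ it is immediate since the recurrence only takes sums, products of nonnegatives, and maxima — and (b) an induction (on $m$, and on $t$ for $p'$) showing that rounding each stored value up to a multiple of $1/D$ keeps the stored value above the true one at every step, which follows from (a). With $D = 2^{30}$ the cumulative rounding error is far below the precision needed to separate $\gamma'$ from the threshold that would spoil the $1.034$ constant, so the reported $\gamma^{(10)} \le 1.502954\times 10^{-6}$ is rigorous; everything downstream is a direct substitution into \autoref{lem:levelk} plus the already-established reductions to connected graphs and to binary trees.
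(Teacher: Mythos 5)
Your proposal follows essentially the same route as the paper's own argument: compute $b'^{(L)}_m$ (and $p'_{m,t}$) by the two dynamic programs with upward rounding to multiples of $1/D$, read off $\gamma'$ and invoke \autoref{lem:small_upperbound} to certify $\gamma^{(10)} \leq 1.502954\times 10^{-6}$, then substitute into \autoref{lem:levelk} and use the reductions to connected graphs and binary trees together with the factor-$2$ relation between levels and colors. The extra care you devote to monotonicity of the recurrences (so that rounding up preserves the upper-bound property) is exactly the justification the paper sketches in its ``Tackling numerical errors'' paragraph, so there is no substantive difference.
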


\paragraph{Performance of \textsc{RandomizedLST}.} We estimate $\gamma^{(L)}$ for $L \leq 10$, but if we can further increase $L$, it would improve the upper bound on the performance of \textsc{RandomizedLST}. It seems to us that the worst-case input graph for \textsc{RandomizedLST} is a complete binary tree. In this case, the experiment with dynamic programming shows that it uses around $1.027 \log n$ colors. Therefore, we conjecture that \textsc{RandomizedLST} uses at most $1.027 \log n + O(1)$ colors for any bipartite graph $G$ (\autoref{conj:ai-solve-3}).

\paragraph{Computer check and programs.} The programs used for the analysis and their results in this section can be downloaded at \url{https://github.com/square1001/online-bipartite-coloring}.

\section{Lower Bound for Bipartite Graphs}\label{sec:k2-lower}

\subsection{The Lower Bound Instance}

In this section, we show the limit of randomized algorithms (against an oblivious adversary). We prove the following theorem.

\begin{theorem}\label{thm:lowerbound1}
    Any randomized online coloring algorithm for bipartite graphs requires at least $\frac{91}{96} \log n - O(1)$ colors in expectation in the worst case.
\end{theorem}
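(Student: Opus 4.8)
The plan is to establish \autoref{thm:lowerbound1} via Yao's principle. We design an \emph{oblivious but randomized} adversary: a distribution $\mathcal{D}$ over $n$-vertex bipartite graphs whose construction flips its own coins, independently of the algorithm, and we show that \emph{every deterministic} online coloring algorithm uses at least $\frac{91}{96}\log n - O(1)$ colors in expectation over $\mathcal{D}$. Since any randomized algorithm is a distribution over deterministic ones and $\max_G \mathbb{E}[\#\mathrm{colors}] \ge \mathbb{E}_{G\sim\mathcal{D}}[\#\mathrm{colors}]$, this yields the claimed lower bound for every randomized algorithm against an oblivious adversary.

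The graph in $\mathcal{D}$ is the binary-tree-like instance sketched in \autoref{fig-641}. We begin with $2^d$ isolated vertices and run $d$ merging rounds; in each round the current connected components are grouped into pairs, and each pair $(C_1, C_2)$ is combined into one component by adding a bounded number of new ``connector'' vertices with edges to $C_1$ and to $C_2$, according to a joining pattern drawn at random from a small fixed family of gadgets (independently across pairs and rounds). The result remains bipartite, becomes connected after round $d$, and has $n = \Theta(2^d)$ vertices, so $d = \log n - O(1)$. Crucially, a gadget may depend only on the current bipartitions of $C_1$ and $C_2$, never on the colors the algorithm has assigned.

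The core is a \emph{potential function} $\Phi$ on the state of a colored connected bipartite component. Once connected, a component has a fixed bipartition $(L,R)$, and --- up to relabeling colors --- the only data relevant to future arrivals is how the color classes distribute among ``$L$ only'', ``$R$ only'', and ``both sides''; we call this finite datum the \emph{type} $\tau$ of the component. We define $\Phi(\tau) \in \mathbb{R}$ so that (i) $\Phi(\tau)$ never exceeds the number of colors used in the component plus $O(1)$; (ii) a single vertex has $\Phi = O(1)$; and (iii) a \textbf{merge inequality} holds: for a suitable choice of the gadget family, whenever a random gadget merges components of types $\tau_1, \tau_2$ into a component of random type $\tau$, we have $\mathbb{E}[\Phi(\tau) \mid \tau_1, \tau_2] \ge \max(\Phi(\tau_1), \Phi(\tau_2)) + \frac{3}{4}$. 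Then the classical potential argument --- induction on the $d$ rounds, using (iii) at every internal node and (ii) at the leaves --- gives $\mathbb{E}[\Phi] \ge \frac{3}{4}d - O(1)$ for the final component, and with (i) this already yields the weaker bound $\frac{3}{4}\log n - O(1)$.

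To improve the constant to $\frac{91}{96}$ we analyze \emph{two consecutive rounds} at once. The merge inequality is tight only for a restricted set of types; when the one-round gain drops strictly below $\frac{3}{4}$, the resulting type is forced to be ``favorable'' (it carries extra color classes isolated on a single side), and such a type guarantees a gain exceeding $\frac{3}{4}$ in the next round by a compensating amount, so that any two consecutive rounds add at least $2 \cdot \frac{91}{96}$ to the expected potential. Formalizing this means proving a two-step inequality for a slightly adjusted potential, which reduces to verifying finitely many combinations of bounded types against the gadget family --- a finite case analysis performed by computer. Iterating the two-step bound gives $\mathbb{E}[\Phi] \ge \frac{91}{96}d - O(1)$ for the final component, hence $\mathbb{E}[\#\mathrm{colors}] \ge \frac{91}{96}\log n - O(1)$, proving \autoref{thm:lowerbound1}. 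The main obstacle is to engineer the potential $\Phi$ together with the finite gadget family so that the two-round inequality holds with exactly the constant $\frac{91}{96}$ while keeping the type space small enough for an exhaustive machine-checked verification; a secondary technical point is maintaining a consistent account of the colored bipartite state when two components whose color sets overlap are merged.
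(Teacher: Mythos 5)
Your high-level plan coincides with the paper's: apply Yao's principle to the random binary-tree-like instance of \autoref{fig:instance}, track each colored component by the pair $(X,Y)$ of color sets on its two sides, prove a one-round potential gain of $\frac{3}{4}$ (the paper's \autoref{lem:potential_lemma1}, stated there against the \emph{average} $\frac12(\phi(C_1)+\phi(C_2))$ rather than the max you use --- the average form is what the balanced construction needs and is the safer claim), and then sharpen the constant by a two-round analysis verified by computer. So the skeleton is right, and it would deliver $\frac{3}{4}\log n - O(1)$ rigorously.

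However, the step that actually produces the constant $\frac{91}{96}$ is missing, and you in effect acknowledge this by calling the engineering of $\Phi$ ``the main obstacle.'' Three concrete ingredients of the paper are absent. First, a single potential with a two-round inequality of the form ``two consecutive rounds gain $2\cdot\frac{91}{96}$'' is not what the paper proves: with one potential ($\phi_2(C)=|X\cap Y|+\frac{11}{21}|X\oplus Y|$) the two-phase minimax bound of \autoref{lem:potential_lemma3} only reaches $\frac{89}{48}$, i.e.\ $\frac{89}{96}$ per phase; getting to $\frac{91}{96}$ requires the \emph{potential decomposition} $\phi=\phi_A+\phi_B$ with different potentials charged to the first and second phase and combined by a telescoping sum (\autoref{lem:potential_lemma5}). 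Your proposal contains no analogue of this idea, so the claimed constant is asserted rather than derived. Second, your per-component ``type'' (counts of colors on $L$ only, $R$ only, both) is too coarse for the two-round analysis: the expected gain when four components are merged depends on how their color sets overlap \emph{across} components (cf.\ the exceptional configuration in \autoref{lem:potential_lemma1}), which is why the paper works with $4\times m$ state matrices recording, per color, its membership in each $(X_i,Y_i)$. Third, the finiteness of the machine check is itself a nontrivial argument: the number of colors, and hence of joint states, is unbounded, and the paper needs the minimax formulation of \autoref{alg:minimax}, the monotone lower-bounding/pruning of \autoref{alg:lowerbound}, and the standard-form symmetry reduction of \autoref{sec:k2-lower-search} to cut the search to finitely many (e.g.\ $415942$) cases. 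Without these pieces your proof establishes only the $\frac{3}{4}\log n - O(1)$ bound, not the stated $\frac{91}{96}\log n - O(1)$.
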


By Yao's lemma \cite{Yao77}, the goal is to give a distribution of bipartite graphs that any \emph{deterministic} algorithm uses at least $\frac{91}{96} \log n - O(1)$ colors in expectation. We construct such input graphs as in \autoref{fig:instance}, having structures similar to a binary tree. The grade-$h$ instance, which is constructed by merging two disjoint grade-$(h-1)$ instances with two extra vertices, contains $4 \cdot 2^h - 2$ vertices.

\begin{figure}[htbp]
    \centering
    \includegraphics[width=0.8\linewidth]{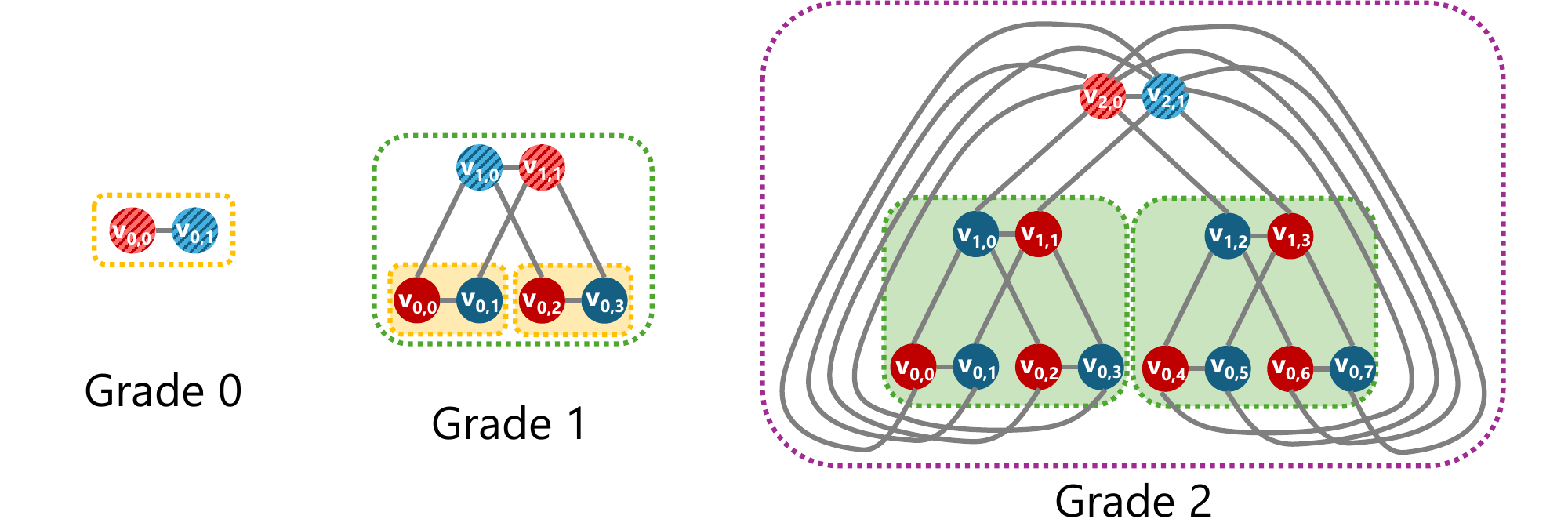}
    \caption{The instances to give a lower bound for $h = 0, 1, 2$. The orange, green, and purple regions correspond to grade-$0$, grade-$1$, and grade-$2$ graphs, respectively. The two extra vertices are in stripe. The labels of vertices can  be changed, depending on the random choice.}
    \label{fig:instance}
\end{figure}

We formally explain how to construct the grade-$k$ instance. The vertices are $v_{i, j} \ (0 \leq i \leq h, 0 \leq j < 2^{h-i+1})$. We refer to the phase that $v_{i, 0}, \dots, v_{i, 2^{h-i+1}-1}$ arrive as ``phase $i$''. At phase $i$, there are $2^{h-i+1}$ components of grade-$(i-1)$ graphs. For $j = 0, 2, \dots, 2^{h-i+1}-2$, we randomly select two of the remaining grade-$(i-1)$ components, and ``merge'' them with two new vertices $v_{i, j}$ and $v_{i, j+1}$. Then, after phase $i$, there are $2^{h-i}$ components of grade-$i$ graphs. 

Formally, when we merge two connected components (say $C_1$ and $C_2$) with two new vertices (say $v_a$ and $v_b$), for each vertex in $C_1$ and $C_2$, we add an edge between it and \emph{either} $v_a$ or $v_b$ in a way that the resulting graph remains bipartite, and we also add an edge between $v_a$ and $v_b$. There are four possible resulting graphs because we can choose which side of bipartition of $C_i$ will be linked to $v_a$ (and to $v_b$), independently for $i = 1, 2$. The four choices will be selected with probability $\frac{1}{4}$ each. We denote this procedure as $\textsc{Merge}(C_1, C_2, v_a, v_b)$, which returns the resulting component.

The goal is to prove the following result, which implies \autoref{thm:lowerbound1}.

\begin{theorem}
    Any deterministic online coloring algorithm requires at least expected $\frac{91}{96} h - O(1)$ colors for the grade-$h$ instance.
    \label{thm:lowerbound2}
\end{theorem}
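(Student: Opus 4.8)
The plan is to prove the equivalent statement \autoref{thm:lowerbound2} by induction on the depth $h$, via a potential‑function argument sharpened by an exhaustive (computer‑assisted) case analysis over two consecutive depths; since the grade‑$h$ instance has $n = 4\cdot 2^h - 2$ vertices, so $h = \log n - O(1)$, and since \autoref{thm:lowerbound1} has already been reduced to \autoref{thm:lowerbound2} above via Yao's lemma \cite{Yao77}, this suffices. Fix an arbitrary deterministic online algorithm. To each connected component $C$ arising during the construction, together with the (random) coloring the algorithm has produced on it, I would attach a \emph{state} $\sigma(C)$: a relabeling‑invariant summary of the coloring recording the number of colors used and how the color classes split across the two sides of the bipartition, coarsened/truncated so that only finitely many states are relevant. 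From $\sigma(C)$ one reads off a potential $\Phi(\sigma(C))$ with $\Phi(\sigma(C)) \le (\text{number of colors used in }C)$ always, and $\Phi = O(1)$ for a grade‑$0$ component (the trivial base case). The target is $\mathbb{E}[\Phi(\sigma(\text{grade-}h))] \ge \tfrac{91}{96}h - O(1)$, which gives the theorem.

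The heart of the matter is the analysis of a single $\textsc{Merge}(C_1,C_2,v_a,v_b)$. Conditioning on the colorings of $C_1$ and $C_2$, the adversary picks one of the four orientations of \autoref{fig:instance} uniformly at random, after which the algorithm must color $v_a$ and then $v_b$. Because $v_a v_b \in E$ and because in at least two of the four orientations the neighborhood of $v_a$ (resp.\ $v_b$) meets color classes that force a color outside the relevant ``interface'' of $C_1,C_2$, one obtains a lower bound on the expected increase of $\Phi$ in terms of $\sigma(C_1),\sigma(C_2)$. A crude version of this lemma, giving expected increase at least $\tfrac34$ per depth, already yields the $\tfrac34 h - O(1)$ bound; here the value $\tfrac34$ rather than $1$ reflects that on a ``cheap'' orientation the algorithm keeps the color count fixed but is pushed into a more fragile state, for which the potential must pay in advance.

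To reach $\tfrac{91}{96}$ I would analyze two consecutive depths together: a cheap step at depth $i$ leaves the merged component in a state from which, at depth $i+1$, a new color is forced with strictly higher probability, so the amortized gain over the two steps exceeds $2\cdot\tfrac34$. Formalizing this means enumerating the finitely many essential (pairs of) states, considering every coloring decision the algorithm could make on $v_a,v_b$, and taking worst cases over the algorithm and expectations over the adversary's orientation — precisely the brute‑force search the paper describes — to certify that $\Phi$ grows by at least $\tfrac{91}{48}$ per two depths on average, hence $\tfrac{91}{96}$ per depth.

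The main obstacle, and where essentially all of the difficulty lies, is twofold. First, one must choose $\sigma(\cdot)$ and $\Phi$ so that they are simultaneously finite, closed in distribution under $\textsc{Merge}$, a genuine lower bound on the color count, and tight enough that the two‑depth averaging actually produces $\tfrac{91}{96}$ and not merely something between $\tfrac34$ and $1$. Second, there is the bookkeeping that a single phase of the construction merges many components using a uniformly random pairing and that the arrival order interleaves the vertices of different components; one must argue — by exchangeability of the components produced at each grade, a convexity/averaging argument over the random pairing, and the fact that $\sigma$ and $\Phi$ depend only on a component's own coloring — that the distribution of $\sigma$ of a grade‑$i$ component obeys the required one‑step recursion regardless of this interleaving. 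Once the state space is pinned down, the per‑depth and per‑two‑depth estimates collapse to a bounded search that a computer can verify exhaustively.
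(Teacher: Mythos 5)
Your overall route is the paper's: the Yao reduction, a relabeling-invariant state $(X,Y)$ recording the color sets on the two sides of a component, a potential that lower-bounds $|X\cup Y|$, a per-merge gain of $\frac34$ (\autoref{lem:potential_lemma1}), and a two-consecutive-phase amortization certified by an exhaustive expected-minimax search (\autoref{alg:minimax}). However, there is a concrete gap at exactly the point you defer: you posit a single potential $\Phi$ whose two-depth increase is at least $\frac{91}{48}$ (i.e., $\frac{91}{96}$ per depth) and assume this can be certified by brute force. In the paper, the best potential of this kind that the two-phase scheme certifies, $\phi_2(C)=|X\cap Y|+\frac{11}{21}|X\oplus Y|$, only yields $\frac{89}{48}$ per two phases (\autoref{lem:potential_lemma3}), i.e., $\frac{89}{96}$ per depth; the constant $\frac{91}{96}$ is reached only via the additional idea of decomposing $\phi=\phi_A+\phi_B$ and, through a telescoping sum, charging the increase of $\phi_A$ to the first of the two amortized phases and that of $\phi_B$ to the second, with a non-linear $\phi_A$ (special values when $(|X\setminus Y|,|Y\setminus X|)\in\{(2,1),(1,2),(3,1),(1,3)\}$) and $\phi_B=\frac12|X\cap Y|+\frac13|X\oplus Y|$ (\autoref{lem:potential_lemma5}). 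Without this phase-dependent decomposition, the amortization as you describe it would plateau below the claimed bound.

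Second, your remark that once the state space is pinned down the estimates ``collapse to a bounded search'' is not automatic: component states involve unboundedly many colors, so the case analysis over quadruples $(C_1,\dots,C_4)$ is a priori infinite. The paper needs a genuine argument to make it finite: the state-matrix encoding (discarding columns that are all-$0$ or all-$3$), a standard-form symmetry reduction, and, crucially, the pruning principle behind \autoref{alg:lowerbound} — the quantity $\Delta_0$ of \autoref{lem:potential_lemma1} remains a valid lower bound on the merge gain no matter which further colors are appended — which is what lets the DFS terminate. A second-phase floor ($\frac{31}{42}$ for $\phi_2$ as in \autoref{lem:potential_lemma4}, and $\frac13$ for $\phi_B$) is also needed to restrict the enumeration to cheap first phases. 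These are missing ingredients rather than routine bookkeeping, so your proposal is a correct roadmap in the paper's spirit but does not yet establish the $\frac{91}{96}$ constant.
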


\subsection{Introducing Potential}

A classic idea to lower-bound the performance of algorithms is to define a value called the \emph{potential} for the current state and say that the potential always increases by a certain amount in each operation.

We define the potential for each connected component $C$. Define the state of $C$ to be $(X, Y)$, where $X$ and $Y$ are the sets of colors used in each bipartition of $C$. Later, we may also use $(X, Y)$ to represent $C$ itself. We consider the following potential $\phi_1$:

\begin{equation*}
    \phi_1(C) := \frac{1}{2} (|X| + |Y|)
\end{equation*}

We will show that, for each phase, the average potential of the components will increase by at least $\frac{3}{4}$ in expectation. Since $|X \cup Y| \geq \phi_1(C)$ always holds, this proves that the expected number of colors used for the grade-$h$ instance is at least $\frac{3}{4} h + 1$. To this end, we first prove that $X$ and $Y$ are not in the inclusion relation.

\begin{lemma}\label{lem:potential_lemma2}
    Let $C_1$ and $C_2$ be two connected components, and let $C := \textsc{Merge}(C_1, C_2, v_a, v_b)$. Then, the state $(X, Y)$ of $C$ neither satisfies $X \subseteq Y$ nor $Y \subseteq X$.
\end{lemma}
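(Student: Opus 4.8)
The plan is to exploit the two freshly inserted vertices $v_a$ and $v_b$: I will show that the color $c(v_a)$ witnesses $X \not\subseteq Y$ and the color $c(v_b)$ witnesses $Y \not\subseteq X$, where $c(\cdot)$ denotes the colors assigned by the deterministic online algorithm.

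First I would pin down the shape of the bipartition of $C$. Since $v_a v_b$ is an edge and $C$ is connected and bipartite, its bipartition $(P, Q)$ is unique; say $v_a \in P$ and $v_b \in Q$. By the definition of $\textsc{Merge}$, every vertex of $C_1 \cup C_2$ is joined to exactly one of $v_a, v_b$ (it cannot be joined to both, since that would force $v_a, v_b$ onto the same side). A vertex joined to $v_a$ must lie in $Q$, and one joined to $v_b$ must lie in $P$; writing $Q'$ and $P'$ for these two sets, disjointness of the parts forces $P = \{v_a\} \cup P'$ and $Q = \{v_b\} \cup Q'$. The consequence I need is: $v_a$ is adjacent to every vertex of $Q \setminus \{v_b\}$, and $v_b$ is adjacent to every vertex of $P \setminus \{v_a\}$.

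With this in hand the argument is immediate. Because $v_a v_b \in E(C)$ we have $c(v_a) \neq c(v_b)$. Now $c(v_a) \in X$ since $v_a \in P$; but $v_a$ is adjacent to all of $Q \setminus \{v_b\}$, so $c(v_a)$ is not the color of any of those vertices, and $c(v_a) \neq c(v_b)$, hence $c(v_a) \notin Y$, giving $X \not\subseteq Y$. Symmetrically $c(v_b) \in Y$, while $c(v_b)$ differs from $c(v_a)$ and from the colors of all of $P \setminus \{v_a\}$, so $c(v_b) \notin X$, giving $Y \not\subseteq X$.

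The only delicate point is the structural step identifying $P$ and $Q$ with $\{v_a\} \cup P'$ and $\{v_b\} \cup Q'$; this rests on $\textsc{Merge}$ being carried out so as to preserve bipartiteness, together with uniqueness of the bipartition of a connected bipartite graph, and I would also double-check the degenerate cases (a component $C_i$ being a single vertex, or $P'$ or $Q'$ empty) — but these cause no trouble, since the final argument only uses $c(v_a) \neq c(v_b)$ and the adjacency of $v_a$ (resp.\ $v_b$) to everything on the opposite side. I do not expect a real obstacle here; the lemma is essentially a one-line observation once the bipartition of $C$ is described correctly.
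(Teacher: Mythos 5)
Your proof is correct and follows essentially the same route as the paper: the paper likewise observes that $v_a$ is adjacent to every already-colored vertex on the opposite side of the bipartition, so $c(v_a)\notin Y_0\cup\{c(v_b)\}=Y$ while $c(v_a)\in X$, and handles $Y\subseteq X$ by the symmetric argument with $v_b$. Your only addition is spelling out the bipartition structure of the merged component and both directions explicitly, which the paper compresses into a ``without loss of generality'' and ``by symmetry''.
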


\begin{proof}
    To this end, we will show that $X \subseteq Y$ is impossible. Let $(X_0, Y_0)$ be the state of $C$ \emph{before} coloring $v_a$ and $v_b$ (they are ``uncolored'' at this moment). Let $c_a$ and $c_b$ be the colors used for $v_a$ and $v_b$, respectively. Without loss of generality, $v_a$ is adjacent to vertices with a color in $Y_0$. Thus, $c_a \notin Y_0$. Since $c_a \neq c_b$, $c_a \notin Y_0 \cup \{c_b\} = Y$. Therefore, $X = X_0 \cup \{c_a\} \not\subseteq Y$. By symmetry, $Y \subseteq X$ is also impossible.
\end{proof}

This lemma implies that, for the grade-$h$ instance, the state $(X, Y)$ of any component at any time neither satisfies $X \subseteq Y$ nor $Y \subseteq X$. Now, we start proving the $\frac{3}{4}$ lower bound. It suffices to prove the following lemma.

\begin{lemma}\label{lem:potential_lemma1}
    Let $C_1$ and $C_2$ be two connected components, and let $C := \textsc{Merge}(C_1, C_2, v_a, v_b)$. Then, $\mathbb{E}[\phi_1(C)] \geq \frac{1}{2} (\phi_1(C_1) + \phi_1(C_2)) + \frac{3}{4}$ when the state $(X, Y)$ for each $C_i$ neither satisfies $X \subseteq Y$ nor $Y \subseteq X$.
\end{lemma}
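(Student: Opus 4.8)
Here is the plan. Unwind the definitions and reduce to a counting inequality over the four equiprobable orientations of $\textsc{Merge}(C_1,C_2,v_a,v_b)$. Write $a_i=|X_i|$, $b_i=|Y_i|$, so $\phi_1(C_i)=\tfrac12(a_i+b_i)$ and, since $\phi_1(C)=\tfrac12(|X|+|Y|)$, the goal is $\mathbb{E}[\,|X|+|Y|\,]\ge\tfrac12(a_1+b_1+a_2+b_2)+\tfrac32$. Fix an orientation and let $P_1,P_2$ be the color sets of the two sides glued to $v_a$ and $\bar P_1,\bar P_2$ the other two; then $v_a$ is adjacent exactly to the colors $\bar P_1\cup\bar P_2$ (plus $c_b$), $v_b$ to $P_1\cup P_2$ (plus $c_a$), and after coloring the two color classes of $C$ are $(P_1\cup P_2)\cup\{c_a\}$ and $(\bar P_1\cup\bar P_2)\cup\{c_b\}$. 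A deterministic, color‑minimizing algorithm reuses a color on $v_a$ exactly when $(P_1\cup P_2)\not\subseteq(\bar P_1\cup\bar P_2)$, and similarly for $v_b$ (here one only needs that $\mathbb N$ is infinite, so a fresh color $\ne c_b$ always exists); hence in this orientation the minimum of $|X|+|Y|$ is $G+H$ with $G:=|P_1\cup P_2|+|\bar P_1\cup\bar P_2|$ and $H:=\mathbf{1}[P_1\cup P_2\subseteq\bar P_1\cup\bar P_2]+\mathbf{1}[\bar P_1\cup\bar P_2\subseteq P_1\cup P_2]\in\{0,1,2\}$. The two ``parallel'' orientations both realize the pair $\{X_1\cup X_2,\,Y_1\cup Y_2\}$ (a common value $G_{\mathrm{par}}+H_{\mathrm{par}}$), the two ``cross'' orientations the pair $\{X_1\cup Y_2,\,Y_1\cup X_2\}$ (a common value $G_{\mathrm{cr}}+H_{\mathrm{cr}}$), so
\[
\mathbb{E}[\,|X|+|Y|\,]=\tfrac12\big(G_{\mathrm{par}}+H_{\mathrm{par}}+G_{\mathrm{cr}}+H_{\mathrm{cr}}\big),
\]
and it suffices to prove $G_{\mathrm{par}}+G_{\mathrm{cr}}+H_{\mathrm{par}}+H_{\mathrm{cr}}\ge (a_1+b_1+a_2+b_2)+3$.

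Next I would pass to a color‑by‑color accounting. For a color $c$ put $m_i(c):=\mathbf{1}[c\in X_i]+\mathbf{1}[c\in Y_i]\in\{0,1,2\}$. Then $a_1+b_1+a_2+b_2=\sum_c(m_1(c)+m_2(c))$, and by inclusion–exclusion the four intersection terms assemble into $|X_1\cap X_2|+|X_1\cap Y_2|+|Y_1\cap X_2|+|Y_1\cap Y_2|=\sum_c m_1(c)m_2(c)$, so $G_{\mathrm{par}}+G_{\mathrm{cr}}=2\sum_c(m_1(c)+m_2(c))-\sum_c m_1(c)m_2(c)$. Substituting, the target inequality becomes exactly
\[
\sum_c\big(m_1(c)+m_2(c)-m_1(c)m_2(c)\big)\;+\;H_{\mathrm{par}}+H_{\mathrm{cr}}\;\ge\;3 .
\]
On $\{0,1,2\}^2$ one checks $m_1+m_2-m_1m_2\ge 0$, so every summand is nonnegative, and moreover this quantity equals $1$ whenever $m_1=1$ or $m_2=1$.

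The hypothesis enters through ``private'' colors: since $X_1\not\subseteq Y_1$ and $Y_1\not\subseteq X_1$, there are colors $x_1\in X_1\setminus Y_1$ and $y_1\in Y_1\setminus X_1$, with $x_1\ne y_1$; each has $m_1$‑value $1$, hence each contributes exactly $1$ to the sum. Thus $\sum_c(m_1+m_2-m_1m_2)\ge 2$ unconditionally, which already settles the inequality whenever $H_{\mathrm{par}}+H_{\mathrm{cr}}\ge 1$.

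The remaining case $H_{\mathrm{par}}=H_{\mathrm{cr}}=0$ is the heart of the argument: both unordered pairs are $\subseteq$‑incomparable. Incomparability of $\{X_1\cup X_2,\,Y_1\cup Y_2\}$ gives distinct $p_1\in(X_1\cup X_2)\setminus(Y_1\cup Y_2)$ and $p_2\in(Y_1\cup Y_2)\setminus(X_1\cup X_2)$; incomparability of $\{X_1\cup Y_2,\,Y_1\cup X_2\}$ gives distinct $q_1\in(X_1\cup Y_2)\setminus(Y_1\cup X_2)$ and $q_2\in(Y_1\cup X_2)\setminus(X_1\cup Y_2)$; each of $p_1,p_2,q_1,q_2$ has an $m_i$‑value that is $\le 1$ with at least one equal to $1$, hence contributes exactly $1$. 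If $\{p_1,p_2,q_1,q_2\}$ has at least three distinct elements we are done, so assume $\{p_1,p_2\}=\{q_1,q_2\}$; I would then run through the two matchings. In either matching the combined membership constraints pin $p_1$ and $p_2$ down to ``single‑part'' colors: either $p_1\in X_1$ and $p_2\in Y_1$ and neither lies in $C_2$, or $p_1\in X_2$ and $p_2\in Y_2$ and neither lies in $C_1$. In the first subcase the private color $x_2\in X_2\setminus Y_2$ of the other component (which exists by the hypothesis on $C_2$) is a third color distinct from $p_1,p_2$ contributing $1$; in the second subcase $x_1\in X_1\setminus Y_1$ plays that role. Either way $\sum_c(m_1+m_2-m_1m_2)\ge 3$, completing the proof. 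I expect the bookkeeping of these coincidence cases — checking that the two matchings really force the ``single‑part'' structure, and that the newly produced color is genuinely distinct from $p_1,p_2$ — to be the only delicate point; everything preceding it is inclusion–exclusion and nonnegativity.
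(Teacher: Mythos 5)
Your proposal is correct, and at its core it is the same argument as the paper's: average over the four equiprobable merge orientations and account color by color. Indeed, your per-color quantity $m_1(c)+m_2(c)-m_1(c)m_2(c)$ is exactly four times the paper's per-color contribution to its quantity $\Delta_0$ (the expected potential gain coming from the unions $X_0,Y_0$ alone), so your bound ``$\sum_c(m_1+m_2-m_1m_2)\ge 2$ from the two private colors of $C_1$'' is precisely the paper's ``$\Delta_0\ge\frac12$ from $|X_1\oplus Y_1|\ge 2$.'' Where you genuinely diverge is in closing the tight case. The paper shows $\Delta_0\ge\frac34$ for every configuration except one explicitly identified exceptional state (both components having the same two symmetric-difference colors and a common core), and then handles that state by directly analyzing the forced colors of $v_a,v_b$ in each orientation, getting an expected increase of $1$. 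You instead fold the forced-new-color events into the indicators $H_{\mathrm{par}},H_{\mathrm{cr}}$ from the start, reduce the lemma to the single inequality $\sum_c(m_1+m_2-m_1m_2)+H_{\mathrm{par}}+H_{\mathrm{cr}}\ge 3$, and settle the case $H_{\mathrm{par}}=H_{\mathrm{cr}}=0$ by an incomparability/private-color argument with a small coincidence analysis (which checks out: the two matchings do force $p_1,p_2$ to be private to a single component, and the other component's hypothesis then supplies a third color contributing $1$). The trade-off: your version never needs to identify the extremal configuration and makes explicit exactly when a new color is forced, while the paper's version exhibits the worst-case state concretely; your statement of the orientation average should strictly be an inequality rather than an equality (the algorithm need not play the minimizing colors), but since only the lower bound is used this is cosmetic.
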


\begin{proof}
    Let $(X_1, Y_1)$ and $(X_2, Y_2)$ be the states of $C_1$ and $C_2$, respectively. Let $(X_0, Y_0)$ and $(X, Y)$ be the states of $C$ before and after coloring $\{v_a, v_b\}$, respectively. Here, $(X_0, Y_0)$ can be any of $(X_1 \cup X_2, Y_1 \cup Y_2), (X_1 \cup Y_2, Y_1 \cup X_2), (Y_1 \cup Y_2, X_1 \cup X_2), (Y_1 \cup X_2, X_1 \cup Y_2)$, with probability $\frac{1}{4}$ each. We consider the following value $\Delta_0$. Since $X_0 \subseteq X$ and $Y_0 \subseteq Y$, $\Delta_0$ gives a lower bound of $\mathbb{E}[\phi_1(C)] - \frac{1}{2} (\phi_1(C_1) + \phi_1(C_2))$.
    \begin{equation}\label{eq:potential_lemma1_eq}
        \Delta_0 := \mathbb{E}\left[\phi_1((X_0, Y_0))\right] - \frac{1}{2} (\phi_1(C_1) + \phi_1(C_2)) \quad \left(\phi_1((X_0, Y_0)) = \frac{1}{2} (|X_0| + |Y_0|)\right)
    \end{equation}
    
    We see how much each color $c$ contributes to $\Delta_0$, for the first and the second terms of \autoref{eq:potential_lemma1_eq}. Excluding the symmetric patterns, there are six cases to consider, shown in \autoref{tab:potential_lemma1}:
    \begin{table}[htbp]
        \centering
        \begin{tabular}{|c|c|c|c|}
            Case & 1st term & 2nd term & $\Delta_0$ \\ \hline
            $c \notin X_1, Y_1, X_2, Y_2$ & $0$ & $0$ & $0$ \\
            $c \in X_1, c \notin Y_1, X_2, Y_2$ & $1/2$ & $1/4$ & $1/4$ \\
            $c \in X_1, X_2, c \notin Y_1, Y_2$ & $3/4$ & $1/2$ & $1/4$ \\
            $c \in X_1, Y_1, c \notin X_2, Y_2$ & $1$ & $1/2$ & $1/2$ \\
            $c \in X_1, Y_1, X_2, c \notin Y_2$ & $1$ & $3/4$ & $1/4$ \\
            $c \in X_1, Y_1, X_2, Y_2$ & $1$ & $1$ & $0$
        \end{tabular}
        \caption{For each case, contribution to the 1st and 2nd term of (\ref{eq:potential_lemma1_eq}), along with their difference}
        \label{tab:potential_lemma1}
    \end{table}

    By \autoref{lem:potential_lemma2}, $|X_1 \oplus Y_1| \geq 2$ and $|X_2 \oplus Y_2| \geq 2$, where $X_i \oplus Y_i = (X_i \setminus Y_i) \cup (Y_i \setminus X_i)$ is the symmetric difference of $X_i$ and $Y_i$. The 2nd, 3rd, and 5th patterns in \autoref{tab:potential_lemma1} correspond to the case that $c \in X_i \oplus Y_i$ for an $i \in \{1, 2\}$ (the 3rd pattern is double-counted); for each case, each element in $X_i \oplus Y_i \ (i = 1, 2)$ contributes to $\Delta_0$ by $\frac{1}{4}, \frac{1}{8}, \frac{1}{4}$, respectively. Therefore, $\Delta_0 \geq 4 \times \frac{1}{8} = \frac{1}{2}$, and except for the case that ``two $c$'s are in the 3rd pattern and no $c$'s are in the 2nd, 4th, and 5th patterns,'' $\Delta_0 \geq \frac{3}{4}$, which means that the expected average potential increases by $\frac{3}{4}$. Note that there is no case that $\Delta_0 = \frac{1}{4} + \frac{1}{8} + \frac{1}{8} + \frac{1}{8} = \frac{5}{8}$ because the number of ``$\frac{1}{8}$'' must be even.

    The only exceptional case is, without loss of generality, $X_1 = \{1, \dots, c, c+1\}, Y_1 = \{1, \dots, c, c+2\}, X_2 = \{1, \dots, c, c+1\}, Y_2 = \{1, \dots, c, c+2\}$ for some $c$. In this case:
    \begin{itemize}
        \item If $(X_0, Y_0) = (X_1 \cup X_2, Y_1 \cup Y_2)$, we can color $v_a, v_b$ by color $c+1, c+2$, respectively, and $(X, Y) = (\{1, \dots, c, c+1\}, \{1, \dots, c, c+2\})$. The average potential stays at $c+1$.
        \item If $(X_0, Y_0) = (X_1 \cup Y_2, Y_1 \cup X_2)$, $X_0 = Y_0 = \{1, \dots, c+2\}$, so we need two extra colors to color $v_a$ and $v_b$. The average potential increases from $c+1$ to $c+3$.
        \item The cases of $(X_0, Y_0) = (Y_1 \cup Y_2, X_1 \cup X_2), (Y_1 \cup X_2, X_1 \cup Y_2)$ are symmetric to the first and the second cases, respectively.
    \end{itemize}
    Therefore, the expected average potential increases by $1$ in this case.
\end{proof}

\subsection{Two-Phase Analysis} \label{sec:k2-lower-twophase}

In the previous subsection, we showed that any randomized algorithm requires at least $\frac{3}{4} \log n - \frac{1}{2}$ colors, using potential $\phi_1$. Unfortunately, this is the best possible potential among functions of $|X \cap Y|, |X \setminus Y|, |Y \setminus X|$. In order to improve the lower bound, we need to develop more sophisticated analysis methods.

In this subsection, we see how much the expected average potential of the components increases in two phases. We consider the model with four connected components, $C_1, C_2, C_3, C_4$. Two pairs of components are merged in the first phase, and the two ``merged'' components are merged in the second phase. There are essentially $3 \times 2 \times 2^3 = 48$ outcomes, considering how components are paired ($3$ ways) and are merged in which order ($2$ ways), along with how merges happen ($2^3 = 8$ ways, because $3$ merges happen).\footnote{When merging components with states $(X_1, Y_1)$ and $(X_2, Y_2)$, there are essentially two possible resulting states: $(X_1 \cup X_2, Y_1 \cup Y_2)$ and $(X_1 \cup Y_2, Y_1 \cup X_2)$, as we can regard $(X, Y)$ and $(Y, X)$ as the same states.} The ``player'' (algorithm) must decide the color of two added vertices right after every merge. Therefore, the minimum expected increase in potential $\phi$ when the player plays optimally, denoted by $\textsc{PotentialIncrease}(C_1, C_2, C_3, C_4, \phi)$, can be computed in the expected minimax algorithm (\autoref{alg:minimax}). The complex nature of this procedure makes it more difficult to create cases with low increases in potential.

\begin{algorithm}[htbp]
	\caption{$\textsc{PotentialIncrease}(C_1, C_2, C_3, C_4, \phi)$}
    \label{alg:minimax}
    \begin{algorithmic}[1]
        \State $x \gets 0$
        \ForAll{$(C'_1, C'_2, C'_3, C'_4)$, a permutation of $(C_1, C_2, C_3, C_4)$, that $C'_1$ and $C'_2$ are merged first and $C'_3$ and $C'_4$ are merged second (there are $3 \times 2 = 6$ ways)}
            \State Let $(X'_i, Y'_i)$ be the state of $C'_i$ for $i = 1, 2, 3, 4$
            \State $a_0 \gets 0$
            \For{$(X_5, Y_5) = (X'_1 \cup X'_2, Y'_1 \cup Y'_2), (X'_1 \cup Y'_2, Y'_1 \cup X'_2)$}
                \State $b_0 \gets +\infty$
                \ForAll{$(c^{(1)}_a, c^{(1)}_b)$, the colors of two extra vertices when merging $C'_1$ and $C'_2$}
                    \State $(X'_5, Y'_5) \gets (X_5 \cup \{c^{(1)}_a\}, Y_5 \cup \{c^{(1)}_b\})$
                    \State $a_1 \gets 0$
                    \For{$(X_6, Y_6) = (X'_3 \cup X'_4, Y'_3 \cup Y'_4), (X'_3 \cup Y'_4, Y'_3 \cup X'_4)$}
                        \State $b_1 \gets +\infty$
                        \ForAll{$(c^{(2)}_a, c^{(2)}_b)$, the colors of two extra vertices when merging $C'_3$ and $C'_4$}
                            \State $(X'_6, Y'_6) \gets (X_6 \cup \{c^{(2)}_a\}, Y_6 \cup \{c^{(2)}_b\})$
                            \State $a_2 \gets 0$
                            \For{$(X_7, Y_7) = (X'_5 \cup X'_6, Y'_5 \cup Y'_6), (X'_5 \cup Y'_6, Y'_5 \cup X'_6)$}
                                \State $b_2 \gets +\infty$
                                \ForAll{$(c^{(3)}_a, c^{(3)}_b)$, the colors of two extra vertices in the final merge}
                                    \State $(X'_7, Y'_7) \gets (X_7 \cup \{c^{(3)}_a\}, Y_7 \cup \{c^{(3)}_b\})$
                                    \State $\Delta \gets \phi((X'_7, Y'_7)) - \frac{1}{4} (\phi(C_1) + \phi(C_2) + \phi(C_3) + \phi(C_4))$
                                    \State $b_2 \gets \min(b_2, \Delta)$
                                \EndFor
                                \State $a_2 \gets a_2 + \frac{1}{2} b_2$
                            \EndFor
                            \State $b_1 \gets \min(b_1, a_2)$
                        \EndFor
                        \State $a_1 \gets a_1 + \frac{1}{2} b_1$
                    \EndFor
                    \State $b_0 \gets \min(b_0, a_1)$
                \EndFor
                \State $a_0 \gets a_0 + \frac{1}{2} b_0$
            \EndFor
            \State $x \gets x + \frac{1}{6} a_0$
        \EndFor
        \State \Return $x$
    \end{algorithmic}
\end{algorithm}

We introduce the following potential $\phi_2$, which slightly modifies $\phi_1$.\footnote{Even with $\phi_1$, we can prove the lower bound of $\frac{59}{64} \log_2 n - O(1)$, which is only slightly worse than $\frac{89}{96} \log_2 n - O(1)$.} The goal is to prove \autoref{lem:potential_lemma3}, which directly leads to a lower bound of $\frac{89}{96} h - O(1)$ colors.
\begin{equation*}
    \phi_2(C) = |X \cap Y| + \frac{11}{21} |X \oplus Y|
\end{equation*}

\begin{lemma}
    $\textsc{PotentialIncrease}(C_1, C_2, C_3, C_4, \phi_2) \geq \frac{89}{48}$ for any components $C_1, C_2, C_3, C_4$ where state $(X, Y)$ neither satisfies $X \subseteq Y$ nor $Y \subseteq X$.
    \label{lem:potential_lemma3}
\end{lemma}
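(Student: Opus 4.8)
The plan is to prove \autoref{lem:potential_lemma3} by reducing the expected minimax value $\textsc{PotentialIncrease}(C_1,C_2,C_3,C_4,\phi_2)$ defined by \autoref{alg:minimax} to a \emph{finite} case analysis, and then verifying the bound $\ge \frac{89}{48}$ in every case by exact computation. The starting point is that $\phi_2$ is additive over colors: writing $\phi_2((X,Y))=\sum_c w(c)$ where $w(c)=1$ if $c\in X\cap Y$, $w(c)=\frac{11}{21}$ if $c\in X\oplus Y$, and $w(c)=0$ otherwise, every merge acts on each color independently (it just unions the $X$- and the $Y$-sides), so the only data about the input that matters is, for each color, its \emph{profile}: which of $X_i\cap Y_i$, $X_i\setminus Y_i$, $Y_i\setminus X_i$, or none of them it belongs to, for $i=1,2,3,4$, after fixing an orientation of each component.

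First I would carry out the normalizations that cut the input space down to finitely many configurations. Colors lying in all eight of $X_1,Y_1,\dots,X_4,Y_4$ contribute $1$ to each $\phi_2(C_i)$ and stay in $X\cap Y$ through every merge, so their net contribution to $\Delta=\phi_2(\mathrm{final})-\frac14\sum_i\phi_2(C_i)$ is $0$; colors in none of the eight are irrelevant (a ``fresh'' color introduced by the algorithm is exactly such a color, and unboundedly many are available). Hence both kinds can be discarded. Next, for a fixed choice of merge order, adversary alignments, and algorithm reuse pattern, $\Delta$ is affine in the vector of profile counts $(n_\pi)_\pi$, so the minimum of $\textsc{PotentialIncrease}$ over all admissible inputs is attained at a vertex of the polyhedron cut out by $n_\pi\ge0$ and the non-inclusion constraints (for each $i$: at least one profile with $i$-th coordinate $X_i\setminus Y_i$ has positive count, and likewise for $Y_i\setminus X_i$). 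Using this I would argue that it suffices to treat configurations in which $|X_i\oplus Y_i|=2$ for every $i$ (the minimum value forced by non-inclusion) and in which the four symmetric differences and the various intersections overlap in only boundedly many patterns — a few hundred combinatorial types in total.

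Second, for each surviving type I would run \autoref{alg:minimax} over exact rationals: the outer average over the $6$ merge orders; the inner weight-$\frac12$ averages over the two alignments at each of the three merges; and, at each merge, the minimization over the algorithm's legal choices of $(c_a,c_b)$, where ``legal'' means $c_a\notin Y_0$, $c_b\notin X_0$, $c_a\ne c_b$, and it is enough to let each of $c_a,c_b$ range over ``reuse a color from some nonempty profile class meeting the constraint'' or ``a fresh color'' — a bounded list. I would check $\textsc{PotentialIncrease}\ge\frac{89}{48}$ for all types, and additionally exhibit by hand the extremal configuration(s) attaining $\frac{89}{48}$ (the analogue of the exceptional case of \autoref{lem:potential_lemma1}), so that tightness — and hence the particular weight $\frac{11}{21}$ — is verifiable without a computer. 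Running everything in exact arithmetic avoids the floating-point gap guarded against in \autoref{chap:k2-upper}.

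The main obstacle is the reduction step: rigorously justifying that the infimum of $\textsc{PotentialIncrease}$ over all admissible $C_1,\dots,C_4$ is realized by one of finitely many small configurations — in particular, that enlarging $|X_i\oplus Y_i|$ beyond $2$, or growing the common parts, never decreases the minimax value. The affine-in-counts structure gives this for a \emph{fixed} strategy, but the minimizer's optimal strategy itself depends on which profile classes are nonempty, so one must argue carefully about how the availability of new reuse options interacts with the counts. A secondary difficulty is controlling the size of the search — $6$ merge orders, each a depth-three alternating min/average tree whose branching is the number of legal reuse patterns, repeated over all combinatorial types, in exact rational arithmetic — and organizing it so the verification is transparent. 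Finally, one must confirm that $\frac{11}{21}$ is the right trade-off weight: too small and some configuration pushes $\Delta$ below $\frac{89}{48}$, too large and $\phi_2$ no longer lower-bounds $|X\cup Y|$; in practice this means optimizing over the weight first and re-verifying at the chosen value.
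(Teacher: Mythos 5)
Your overall plan (exploit additivity of $\phi_2$ over colors, discard colors lying in all or none of the eight sets, and then verify a finite list of configurations by running \autoref{alg:minimax} in exact rational arithmetic) matches the spirit of the paper, but the step you yourself flag as the ``main obstacle'' is exactly the step that constitutes the paper's proof, and your sketch of it does not go through as stated. You argue that for a fixed merge order, alignment pattern, and reuse pattern the increase is affine in the profile counts, hence the minimum of the minimax value is attained at a vertex of the polyhedron cut out by nonnegativity and the non-inclusion constraints, hence it suffices to check configurations with $|X_i \oplus Y_i| = 2$. But the quantity being minimized over inputs is not a minimum of a \emph{fixed} finite family of affine functions: the algorithm's legal reuse options (and hence the inner $\min$ in \autoref{alg:minimax}) depend on which profile classes are nonempty, and in the second phase they also depend on the colors $c^{(1)}_a, c^{(1)}_b, c^{(2)}_a, c^{(2)}_b$ chosen in the first phase, so the value function is neither affine nor obviously concave in the counts, and the polyhedron is unbounded. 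Consequently the vertex/extremality claim — that enlarging $|X_i\oplus Y_i|$ or the shared intersections never decreases the two-phase minimax value — is unproven, and without it your ``few hundred combinatorial types'' is not a valid exhaustion of the worst case.

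The paper avoids this issue entirely by a different finiteness argument: it first proves (\autoref{lem:potential_lemma4}) that $\phi_2$ increases by at least $\frac{31}{42}$ in \emph{every} merge, so only configurations whose first-phase increase $\Delta_1$ is below $\frac{89}{48}-\frac{31}{42}=\frac{125}{112}$ can possibly violate the bound; it then enumerates \emph{all} such configurations by a DFS over $4\times m$ state matrices, made finite by a provable pruning rule (the $\Delta_0$-based lower bound of \autoref{alg:lowerbound}, which cannot decrease when further colors are added) together with a standard-form symmetry reduction, ending with $16829$ candidate configurations on which \autoref{alg:minimax} is evaluated. Note that this superset of candidates is not confined to $|X_i\oplus Y_i|=2$, which should make you cautious about the extremality claim. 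To repair your proposal you would either have to prove the monotonicity/attainment statement rigorously (handling the strategy-set dependence on nonempty classes across both phases), or replace it with an exhaustive-enumeration-plus-pruning argument of the paper's type; the exact-arithmetic verification and the per-configuration finite minimax are fine, but by themselves they do not yield the lemma.
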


\begin{proof}
    This lemma can be shown by a computer search. First, we need to make the number of candidates for the combination of states of $C_1, C_2, C_3, C_4$ computable (or at least finite). With all the ideas explained in the next subsection, we reduce the number of combinations to consider to $16829$. We compute $\textsc{PotentialIncrease}(C_1, C_2, C_3, C_4, \phi_2)$ for all of them, and each call returned a result of $\frac{89}{48}$ or more.
\end{proof}

\paragraph{Remarks on \autoref{alg:minimax}.}
When brute-forcing colors $(c^{(i)}_a, c^{(i)}_b) \ (i = 1, 2, 3)$ in \autoref{alg:minimax}, we may assume that we do not ``jump'' the colors; when there are only colors $1, \dots, c$ in the current graph, we only color the next vertex using any of color $1, \dots, c+1$. This makes the number of search states finite. Also, when $X_{i+4} \setminus Y_{i+4}$ is not empty, it is obvious that choosing $c^{(i)}_a$ from $X_{i+4} \setminus Y_{i+4}$ is an optimal strategy (then, it becomes $X'_{i+4} = X_{i+4}$). Similarly, when $Y_{i+4} \setminus X_{i+4}$ is not empty, it is optimal to choose $c^{(i)}_b$ from $Y_{i+4} \setminus X_{i+4}$. Then, we can further reduce the number of possibilities to consider.

\subsection{Further Improvement: Limiting the Number of Combinations} \label{sec:k2-lower-search}

In this subsection, we complete the proof of \autoref{lem:potential_lemma3}; indeed, we explain how to enumerate all candidates of combinations of states of $C_1, C_2, C_3, C_4$ such that $\textsc{PotentialIncrease}(C_1, C_2, C_3, C_4, \phi_2)$ is less than $\frac{89}{48}$, using a computer search.

\paragraph{The searching framework.}
The following lemma, the ``potential $\phi_2$ version'' of \autoref{lem:potential_lemma1}, shows that $\phi_2$ always increases by $\frac{31}{42}$ in the second phase.
\begin{lemma}
    Let $C_1$ and $C_2$ be connected components, and let $C := \textsc{Merge}(C_1, C_2, v_a, v_b)$. Then, $\mathbb{E}[\phi_2(C)] \geq \frac{1}{2} (\phi_2(C_1) + \phi_2(C_2)) + \frac{31}{42}$ when the state $(X, Y)$ for each $C_i$ neither satisfies $X \subseteq Y$ nor $Y \subseteq X$.
    \label{lem:potential_lemma4}
\end{lemma}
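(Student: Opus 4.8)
The plan is to follow the template of \autoref{lem:potential_lemma1}, but to keep track, in addition to $\mathbb{E}[\phi_2((X_0,Y_0))]$, of the colours that are \emph{forced to be new} when $v_a$ and $v_b$ are coloured. Write $\phi_2((X,Y)) = |X\cap Y| + \tfrac{11}{21}|X\oplus Y|$, let $(X_i,Y_i)$ be the state of $C_i$, let $(X_0,Y_0)$ be the state of $C$ just before $v_a,v_b$ are coloured, and $(X,Y)$ the final state. Exactly as in \autoref{lem:potential_lemma2}, $c_a\notin Y_0$, $c_b\notin X_0$ and $c_a\neq c_b$; hence $X\cap Y = X_0\cap Y_0$ and $|X\oplus Y| \ge |X_0\oplus Y_0| + \nu$, where $\nu := \mathbf{1}[X_0\subseteq Y_0] + \mathbf{1}[Y_0\subseteq X_0]$ (the algorithm can reuse a colour of $X_0\setminus Y_0$ for $v_a$ and of $Y_0\setminus X_0$ for $v_b$ whenever those sets are nonempty, and they are disjoint). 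Consequently $\phi_2(C)\ge \phi_2((X_0,Y_0)) + \tfrac{11}{21}\nu$, and since $(X_0,Y_0)$ is uniform over the four combinations $(X_1\cup X_2,\,Y_1\cup Y_2)$, $(X_1\cup Y_2,\,Y_1\cup X_2)$ and their $X\!\leftrightarrow\! Y$ swaps, it suffices to prove
\[
\Delta_0 + \tfrac{11}{21}\,\mathbb{E}[\nu]\ \ge\ \tfrac{31}{42},
\qquad
\Delta_0 := \mathbb{E}\big[\phi_2((X_0,Y_0))\big] - \tfrac12\big(\phi_2(C_1)+\phi_2(C_2)\big).
\]

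Next, I would expand $\Delta_0$ colour by colour, as in \autoref{lem:potential_lemma1}: each colour is classified by its type (``in neither'', ``in exactly one'', ``in both'') with respect to $C_1$ and with respect to $C_2$. By invariance of the per-colour contribution under $X_i\leftrightarrow Y_i$ and under $C_1\leftrightarrow C_2$, and because every colour lies in at least one of $X_1,Y_1,X_2,Y_2$, there are six types, and a short calculation gives their contributions to $42\,\Delta_0$: a both/both colour contributes $0$, a single/single or single/both colour contributes $10$, a single/neither colour contributes $11$, and a neither/both colour contributes $21$. Let $A$ and $B$ be the sets of colours that are ``single'' with respect to $C_1$ and to $C_2$ respectively; the hypothesis forces $|A|,|B|\ge 2$, and every colour of $A\cup B$ contributes at least $10$, so $42\,\Delta_0 \ge 10\,|A\cup B| + 21\cdot(\text{number of neither/both colours})$. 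If $|A\cup B|\ge 4$, or if a neither/both (equivalently a both/neither) colour exists, this already gives $\Delta_0\ge\tfrac{31}{42}$ and we are done.

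The remaining situations are $|A\cup B|\in\{2,3\}$ with no neither/both colour; since the both/both colours merely form a common ``spectator'' part, these collapse to essentially two configurations (up to symmetry): (A) $A=B$, $|A|=2$, both colours of type single/single; and (B) $|A\cup B|=3$, with the shared colour of type single/single and the two non-shared colours of types single/both and both/single. In any other sub-case some colour of $A\cup B$ is of type single/neither, which raises $42\,\Delta_0$ to at least $31$. For the two surviving configurations I would write out the four possible states $(X_0,Y_0)$ explicitly and check that $\mathbb{E}[\nu]\ge 1$ in case (A) and $\mathbb{E}[\nu]=\tfrac32$ in case (B), independently of the ``alignment'' of the relevant colours; combined with $42\,\Delta_0\ge 20$ in case (A) or $\ge 30$ in case (B) this yields $\Delta_0 + \tfrac{11}{21}\mathbb{E}[\nu]\ge 1\ge\tfrac{31}{42}$. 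The main obstacle is making this last enumeration airtight: one must be sure that the collapse to configurations (A) and (B) is exhaustive, and in particular that neither the both/both colours nor any colour outside $A\cup B$ can destroy the forced-inclusion events that make $\mathbb{E}[\nu]$ large enough. I would verify this either by a careful hand enumeration of the few admissible patterns of $(X_1,Y_1,X_2,Y_2)$ on the relevant colours, or---more conservatively---by an exhaustive computer check in the spirit of \autoref{lem:potential_lemma3}.
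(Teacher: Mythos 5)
Your overall strategy is sound and is essentially the route the paper itself indicates (its proof of this lemma is one line: argue ``similarly to \autoref{lem:potential_lemma1}'' or run the computer check via \autoref{alg:lowerbound}). Your structural facts are correct: $X\cap Y=X_0\cap Y_0$, $|X\oplus Y|\ge |X_0\oplus Y_0|+\nu$, the per-colour contributions to $42\Delta_0$ (namely $0,10,10,11,21,0$ for the six types) all check out, and your computations $\mathbb{E}[\nu]=1$ in configuration (A) and $\mathbb{E}[\nu]=\tfrac32$ in configuration (B) are right. The genuine gap is the exhaustiveness claim for the final case analysis: it is \emph{not} true that every remaining sub-case (with $|A\cup B|\in\{2,3\}$ and no neither/both colour) other than (A) and (B) contains a single/neither colour. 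Writing $A=X_1\oplus Y_1$ and $B=X_2\oplus Y_2$, two families are missed, both with $42\Delta_0=30<31$ and no single/neither colour: (i) $A=B$ with $|A|=3$, all three colours of type single/single; and (ii) $B\subsetneq A$ with $|A|=3$, $|B|=2$ (or symmetrically), where the colour of $A\setminus B$ has type single/both. As written, your dichotomy does not cover these, so the proof is incomplete exactly at the step you yourself flagged as the main obstacle.

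The gap is repairable inside your framework, with the same $\nu$ device. In (ii), the single/both colour lies in $X_2\cap Y_2$, hence always in $X_0\cap Y_0$, so only the two colours of $B$ can enter $X_0\oplus Y_0$; recording for each relevant colour the side it occupies in $C_1$ and in $C_2$, having $\nu=0$ in both essentially distinct merge outcomes would require all four side-patterns $(X,X),(Y,Y),(X,Y),(Y,X)$ to occur, which is impossible with two colours, so $\mathbb{E}[\nu]\ge\tfrac12$ and the total is at least $\tfrac{30+11}{42}\ge\tfrac{31}{42}$ (in fact a direct check gives $\mathbb{E}[\nu]=1$ here). The same four-pattern argument with three colours handles (i), again giving $\mathbb{E}[\nu]\ge\tfrac12$ and a total of at least $\tfrac{41}{42}$. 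Adding these two cases makes your enumeration exhaustive; alternatively, the exhaustive computer verification you propose is precisely the paper's second suggested proof (calling \textsc{LowerBound} on empty states), so either repair brings your argument in line with the paper.
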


We can prove this lemma similarly to \autoref{lem:potential_lemma1}, but we can also prove by running \autoref{alg:lowerbound} (by calling $g((\emptyset, \emptyset), (\emptyset, \emptyset))$, which will be explained later). Therefore, it suffices to search all the cases where, in the first phase, the expected average potential increases by less than $\frac{89}{48} - \frac{31}{42} = \frac{125}{112}$. Let $f(C_i, C_j)$ be the expected increase of a potential $\phi_2$ by merging components $C_i$ and $C_j$. We enumerate all the cases such that:

\begin{equation*}
    \Delta_1 = \frac{1}{6} \left(f(C_1, C_2) + f(C_1, C_3) + f(C_1, C_4) + f(C_2, C_3) + f(C_2, C_4) + f(C_3, C_4)\right) < \frac{125}{112}
\end{equation*}

\paragraph{Implementation and the state matrix.}
In order to implement the brute force of the cases such that $\Delta_1 < \frac{125}{112}$, we define \emph{state matrix} $M$ to represent $(X_i, Y_i)$, the state of $C_i$, for $i = 1, 2, 3, 4$. $M$ is a $4 \times m$ matrix where $m$ is the number of colors, defined in the following way.
\begin{equation*}
    M_{i, j} = \begin{cases}
        0 & (j \notin X_i, Y_i) \\
        1 & (j \in X_i, j \notin Y_i) \\
        2 & (j \in Y_i, j \notin X_i) \\
        3 & (j \in X_i, Y_i)
    \end{cases}
\end{equation*}
Here, we can assume that each column is not ``all $0$'' or ``all $3$'' (so there are $254^m$ matrices with $m$ columns). This is because if a color is in none or all of $X_1, Y_1, \dots, X_4, Y_4$, there is no effect on the increase of potential. We attempt to brute-force these matrices by DFS (depth-first search), which appends one column to the right in each step (corresponds to adding a new color).

\paragraph{Lower-bounding for the DFS.}
In order to execute the DFS in a finite and realistic time, we need to apply ``pruning'' by showing that, for the current state matrix, it is impossible to achieve $\Delta_1 < \frac{125}{112}$ no matter how subsequent columns are appended.

Let $g(C_i, C_j)$ be the minimum value of $f(C_i, C_j)$ when we can freely add subsequent colors to $(X_i, Y_i)$ and $(X_j, Y_j)$. Then, $\mathrm{LB} := \frac{1}{6} (g(C_1, C_2) + g(C_1, C_3) + g(C_1, C_4) + g(C_2, C_3) + g(C_2, C_4) + g(C_3, C_4))$ gives the lower bound of $\Delta_1$ for any subsequent state matrices. Therefore, once $\mathrm{LB} \geq \frac{125}{112}$ is met, we do not need to perform the DFS for subsequent columns and can apply pruning. We can compute $g(C_i, C_j)$ using \autoref{alg:lowerbound}, which is given by $\textsc{LowerBound}(C_i, C_j, m, +\infty)$.

\begin{algorithm}[htbp]
    \caption{$\textsc{LowerBound}(C_1, C_2, m, \beta)$: Function to calculate the minimum value of $f(C_1, C_2)$ when colors $m+1, m+2, \dots$ are added, or report that the minimum value is $\beta$ or more}
    \label{alg:lowerbound}
    \begin{algorithmic}[1]
        \State $d_0 \gets$ (the value of $\Delta_0$ defined in \autoref{lem:potential_lemma1} (for potential $\phi_2$) for the current $C_1, C_2$)
        \If{$d_0 \geq \beta$}
            \State \Return $\beta$ \Comment{$f(C_1, C_2) \geq \beta$ no matter how subsequent colors are added}
        \EndIf
        \State $(X^*_1, Y^*_1), (X^*_2, Y^*_2) \gets$ (states of components which adds minimum possible subsequent colors into $C_1, C_2$ to make it satisfy $X_1 \not\subseteq Y_1, Y_1 \not\subseteq X_1, X_2 \not\subseteq Y_2, Y_2 \not\subseteq X_2$)
        \State $\beta \gets \min(\beta, f((X^*_1, Y^*_1), (X^*_2, Y^*_2)))$
        \ForAll{$(X'_1, Y'_1), (X'_2, Y'_2)$, the states of $C_1, C_2$ after adding color $m+1$ (there are $2^4 - 2 = 14$ ways)}
            \State $d \gets \textsc{LowerBound}((X'_1, Y'_1), (X'_2, Y'_2), m+1, \beta)$
            \State $\beta \gets \min(\beta, d)$
        \EndFor
        \State \Return $\beta$
    \end{algorithmic}
\end{algorithm}

\paragraph{Remarks on \autoref{alg:lowerbound}.}
The idea of this algorithm is that $\Delta_0$ defined in \autoref{lem:potential_lemma1} not only gives a lower bound to $f(C_1, C_2)$; no matter how subsequent colors are added to $(X_1, Y_1)$ and $(X_2, Y_2)$, $f(C_1, C_2)$ will not be lower than $\Delta_0$. To compute $g(C_1, C_2)$, we search all possibilities on subsequent colors by DFS, but once $\Delta_0$ exceeds or equals to the current minimum value of $f(C_1, C_2)$, we do not need to perform the DFS for subsequent colors and can apply pruning. Also, since most of the calls to $\textsc{LowerBound}$ are identical (up to swapping colors), we can apply memoization to reduce redundant calculations of $\textsc{LowerBound}$.

\paragraph{Utilizing symmetry.}
Next, we reduce the number of combinations by utilizing the ``symmetry'' that some state matrices represent essentially the same state. Specifically, the following operations on a state matrix $M$ do not essentially change the state:
\begin{enumerate}
    \item Swap two rows of $M$. (Corresponds to swapping $C_i$ and $C_j$)
    \item Swap two columns of $M$. (Corresponds to swapping the indices of two colors)
    \item Choose one row, and for each element of the row, change $1$ to $2$ and $2$ to $1$. (Corresponds to swapping $X_i$ and $Y_i$)
\end{enumerate}
We say that $M$ is in \emph{standard form} if $(M_{1, 1}, \dots, M_{4, 1}, \dots, M_{1, m}, \dots, M_{4, m})$ is lexicographically earliest among the state matrices that can be obtained by repeating these three kinds of operations. It is easy to see that, once $M$ becomes a matrix not in standard form, it will never be in standard form again after adding subsequent columns. In this case, we can apply pruning in the DFS.

We can check if $M$ is in standard form by brute force. We brute-force the choice of how the rows are permuted and which rows we flip $1$ and $2$ in (there are $4! \times 2^4 = 384$ ways), and for each choice, we sort columns in lexicographical order to obtain a candidate of the standard form.

\paragraph{Results.} In the DFS, we search $62195$ state matrices that satisfy $\mathrm{LB} < \frac{125}{112}$ and are in standard form. Among them, $22558$ state matrices contain $1$ and $2$ in every row, and $16829$ of them actually satisfy $\Delta_1 < \frac{125}{112}$.

\subsection{The Final Piece: Potential Decomposition}

To further improve the analysis of the lower bound, we introduce the idea that the potentials in the first and second phases can be different. We consider decomposing potential $\phi$ as $\phi = \phi_A + \phi_B$. Let $p_i, a_i, b_i \ (i = 0, \dots, h)$ be the average potentials $\phi, \phi_A, \phi_B$ of components after phase $i$. Then, the following equation holds due to the telescoping sum:
\begin{equation*}
    p_h - p_0 = (a_h - a_{h-1}) + (b_1 - b_0) + \sum_{i=1}^{h-1} \{(a_i - a_{i-1}) + (b_{i+1} - b_i)\}
\end{equation*}

The term $(a_i - a_{i-1}) + (b_{i+1} - b_i)$ represents the increase of $\phi_A$ at phase $i$ plus the increase of $\phi_B$ at phase $i+1$. Therefore, when merging $C_1, C_2, C_3, C_4$ as in \autoref{sec:k2-lower-twophase}, if we know that the expected (increase of average $\phi_A$ in the first phase) + (increase of average $\phi_B$ in the second phase) is always $x$ or more, we know that $p_k \geq xk - O(1)$. The new ``potential increase'' can be calculated by modifying Line 19 of \autoref{alg:minimax} to $\Delta \gets \{\phi_B((X'_7, Y'_7)) - \frac{1}{2}(\phi_B((X'_5, Y'_5)) + \phi_B((X'_6, Y'_6)))\} + \{\frac{1}{2}(\phi_A((X'_5, Y'_5)) + \phi_A((X'_6, Y'_6))) - \frac{1}{4}(\phi_A(C_1) + \phi_A(C_2) + \phi_A(C_3) + \phi_A(C_4))\}$. The new algorithm is denoted as $\textsc{PotentialIncrease}(C_1, C_2, C_3, C_4, \phi_A, \phi_B)$. In Subsection 6.3, we only considered the case that $\phi_A = \phi_B$, so by increasing the degrees of freedom, we can expect a better lower bound.

\paragraph{The potential setting.}
We consider setting $\phi_A, \phi_B$ in the following way:
\begin{align*}
    \phi_A(C) & = \frac{1}{2} |X \cap Y| + \begin{cases}
        \frac{17}{24} & ((|X \setminus Y|, |Y \setminus X|) = (2, 1), (1, 2)) \\
        \frac{5}{6} & ((|X \setminus Y|, |Y \setminus X|) = (3, 1), (1, 3)) \\
        \frac{1}{4} |X \oplus Y| & (\text{otherwise}) \\
    \end{cases} \\
    \phi_B(C) & = \frac{1}{2} |X \cap Y| + \frac{1}{3} |X \oplus Y|
\end{align*}
Then, the following lemma holds:
\begin{lemma}
    $\textsc{PotentialIncrease}(C_1, C_2, C_3, C_4, \phi_A, \phi_B) \geq \frac{91}{96}$ for any components $C_1, C_2, C_3, C_4$ where state $(X, Y)$ neither satisfies $X \subseteq Y$ nor $Y \subseteq X$.
    \label{lem:potential_lemma5}
\end{lemma}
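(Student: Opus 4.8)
The plan is to prove Lemma \ref{lem:potential_lemma5} by the same computer-aided enumeration that established Lemma \ref{lem:potential_lemma3}, but run with the decomposed potential $\phi=\phi_A+\phi_B$. First I would record a cheap worst-case bound on the second phase alone: the analog of Lemma \ref{lem:potential_lemma4} for $\phi_B$, namely $\mathbb{E}[\phi_B(C)]\ge \tfrac12(\phi_B(C_1)+\phi_B(C_2))+y_B$ for an explicit constant $y_B$, where $C=\textsc{Merge}(C_1,C_2,v_a,v_b)$ and each $C_i$ has its state in no inclusion relation. This is proved exactly like Lemma \ref{lem:potential_lemma1} by the $\Delta_0$-style per-color case split, or equivalently by one call to a \textsc{LowerBound}-type routine (Algorithm \ref{alg:lowerbound}) with $\phi_B$ replacing $\phi_2$; the pruning there is sound because $\phi_B$ is monotone under adding a color. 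Since in $\textsc{PotentialIncrease}(C_1,C_2,C_3,C_4,\phi_A,\phi_B)$ the second-phase $\phi_B$-increase is at least $y_B$ regardless of how the algorithm colors, the computed minimax is at least (expected first-phase $\phi_A$-increase) $+\,y_B$; hence whenever every first-phase move of the algorithm raises the average $\phi_A$ by at least $\tfrac{91}{96}-y_B$, the bound holds for free.

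It therefore remains to enumerate the residual configurations, which I would do mirroring Section \ref{sec:k2-lower-search}: encode the states of $C_1,\dots,C_4$ as a $4\times m$ state matrix in standard form, build matrices by a depth-first search appending one column (one new color) at a time, and prune using (a) a \textsc{LowerBound}-style bound $g_A(C_i,C_j)$ on each per-pair first-phase $\phi_A$-increase $f_A(C_i,C_j)$ — valid because $\phi_A$, although piecewise, is still monotone under color-addition, so its $\Delta_0$ lower bound behaves like the one in Algorithm \ref{alg:lowerbound} — aggregated as $\mathrm{LB}=\tfrac16\sum_{\text{6 pairs}}g_A(C_i,C_j)\ge\tfrac{91}{96}-y_B$, and (b) the three symmetry reductions (row permutation, column permutation, swapping $X_i\leftrightarrow Y_i$). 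For each surviving configuration that genuinely has $\Delta_1=\tfrac16\sum_{\text{6 pairs}}f_A(C_i,C_j)<\tfrac{91}{96}-y_B$ I would evaluate $\textsc{PotentialIncrease}(C_1,C_2,C_3,C_4,\phi_A,\phi_B)$ by the decomposed minimax (Algorithm \ref{alg:minimax} with line 19 modified as described in Section \ref{sec:k2-lower-search}), using the no-color-jumping reduction and the observation that it is optimal to fill a nonempty $X\setminus Y$ or $Y\setminus X$ slot first; verifying the returned value is $\ge\tfrac{91}{96}$ in every case finishes the proof, and the main theorem \ref{thm:lowerbound2} then follows from the telescoping identity of Section \ref{sec:k2-lower-search}.

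The main obstacle is not the search bookkeeping — that machinery is already in place — but confirming that the specific decomposition is both \emph{valid} and \emph{exactly tight}. Three points need care: (i) one must check $\phi_A(C)+\phi_B(C)\le|X\cup Y|$ for all states, which reduces to the three branches in the definition of $\phi_A$ and is what lets the telescoping argument still lower-bound the number of colors; (ii) one must argue that $\phi_A$ is monotone under adding a color, so the $\Delta_0$-based pruning is legitimate — the non-monotonicity lives only between states of equal $|X\oplus Y|$ that are not reachable from one another by color-addition (e.g.\ $(|X\setminus Y|,|Y\setminus X|)=(2,1)$ versus $(1,2)$), so monotonicity does hold but must be spelled out; and (iii) one must be confident the constants are sharp — the gap between $\tfrac{91}{96}$ and the previously obtained $\tfrac{89}{96}$ is tiny, so the piecewise bonuses $\tfrac{17}{24}$ and $\tfrac56$ in $\phi_A$ (and the value $y_B$) are finely tuned, and locating them likely required a sweep over nearby rational parameters before the minimax check passes on every enumerated case. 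As in Lemma \ref{lem:potential_lemma3}, configurations in which some $C_i$ has empty $X_i\setminus Y_i$ or $Y_i\setminus X_i$ are excluded by the hypothesis and need no separate handling.
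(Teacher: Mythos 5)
Your overall plan is the same as the paper's: bound the second-phase $\phi_B$-increase by a constant (the paper proves it is $\frac{1}{3}$, so the threshold for the first phase is $\frac{91}{96}-\frac{1}{3}=\frac{59}{96}$), enumerate by DFS the standard-form state matrices whose first-phase $\phi_A$-increase could fall below that threshold, pruning with a \textsc{LowerBound}-style bound and the three symmetries, and then run the decomposed minimax on the survivors. The one step where you diverge, and where your justification does not hold up, is the soundness of the $\Delta_0$-based pruning for $\phi_A$. Monotonicity of $\phi_A$ under adding a color is not the relevant property: the pruning in \autoref{alg:lowerbound} needs (a) the per-color case analysis of \autoref{tab:potential_lemma1} to compute $\Delta_0$, and (b) the guarantee that the $\Delta_0$ computed at the current prefix lower-bounds $f(C_1,C_2)$ for \emph{every} extension by further colors. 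Both rely on the potential being additive over colors with nonnegative per-color contributions; $\phi_A$ is not additive because of the case bonuses $\frac{17}{24}$ and $\frac{5}{6}$, so a later color can change which branch applies and neither (a) nor (b) follows from monotonicity alone. The paper repairs this by sandwiching $\phi_A$ between the linear surrogate $\phi'_A(C)=\frac{1}{2}|X\cap Y|+\frac{1}{4}|X\oplus Y|$ and $\phi'_A-\frac{1}{6}$, and pruning with $\Delta_0(\phi'_A)-\frac{1}{6}$; without this (or an equivalent device) your DFS pruning is unjustified and the enumeration is not known to be exhaustive.

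A second, smaller point: you assert that in the decomposed minimax it is ``optimal'' to reuse a color from a nonempty $X\setminus Y$ (resp.\ $Y\setminus X$) slot. The paper explicitly warns that this optimality claim can fail once $\phi_A\neq\phi_B$ (reusing a color lowers the first-phase $\phi_A$ charge but also lowers the baseline against which the second-phase $\phi_B$-increase is measured), and it instead \emph{defines} the quantity in \autoref{lem:potential_lemma5} with that restriction built in. So the computation you would run coincides with the paper's, but your reason for the reduction is not valid as stated. Your remaining checks --- $\phi_A+\phi_B\le|X\cup Y|$, the $\phi_B$ analogue of \autoref{lem:potential_lemma4}, and the concern that the constants are finely tuned --- are consistent with what the paper does.
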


We assume that in \autoref{lem:potential_lemma5}, the function $\textsc{PotentialIncrease}(C_1, C_2, C_3, C_4, \phi_A, \phi_B)$ is calculated in the way explained in ``Remarks on \autoref{alg:minimax}'' in \autoref{sec:k2-lower-twophase}. We note that the result may change when we allow to choose a new color for $c^{(i)}_a$ when $X_{i+4} \setminus Y_{i+4}$ is not empty (and similarly for $c^{(i)}_b$), even if this is not an optimal strategy, because of the difference of $\phi_A$ and $\phi_B$. Below, we give a computer assisted proof. 

\paragraph{The searching framework.}
We prove \autoref{lem:potential_lemma5} in a similar way to \autoref{sec:k2-lower-search}. In the second phase, it can be proven that $\phi_B$ always increases by $\frac{1}{3}$, similarly to \autoref{lem:potential_lemma4}. Therefore, we have to enumerate all the cases such that $\phi_A$ increases by at most $\frac{91}{96} - \frac{1}{3} = \frac{59}{96}$ in the first phase.

\paragraph{The lower-bounding for $\phi_A$.}
We aim to calculate $g(C_1, C_2)$, defined in \autoref{sec:k2-lower-search}. However, the difference is that $\phi_A$ is no longer a linear function of $|X \oplus Y|$, so each $c \in X_i \oplus Y_i$ does not directly contribute to $\Delta_0$. In order to cope with this issue, we consider another potential function $\phi_A'(C) = \frac{1}{2} |X \cap Y| + \frac{1}{4} |X \oplus Y|$ (which is identical to $\frac{1}{2} \phi_1(C)$). Note that $\phi'_A(C) - \frac{1}{6} \leq \phi_A(C) \leq \phi'_A(C)$ always holds. Then, if we define $\Delta_0$ for $\phi'_A$ (referring to \autoref{tab:potential_lemma1}, the contribution to $\Delta_0$ for each case (from the top) becomes $0, \frac{1}{8}, \frac{1}{8}, \frac{1}{4}, \frac{1}{8}, 0$), the increase of $\phi_A$ can be lower-bounded by $\Delta_0 - \frac{1}{6}$. Therefore, we can calculate $g(C_1, C_2)$ by changing Line 1 of \autoref{alg:lowerbound} to $d_0 \gets (\text{the value of $\Delta_0$ for potential $\phi'_A$ for the current $C_1, C_2$}) - \frac{1}{6}$.

\paragraph{Results.} In the DFS, we search $1773334$ state matrices that satisfy $\mathrm{LB} < \frac{59}{96}$ (where $\mathrm{LB}$ is defined for $\phi_A$) and are in standard form. Among them, $700415$ state matrices contain $1$ and $2$ in every row, and $415942$ of them actually satisfy $\Delta_1 < \frac{59}{96}$ (where $\Delta_1$ is defined for $\phi_A$). We compute $\textsc{PotentialIncrease}(C_1, C_2, C_3, C_4, \phi_A, \phi_B)$ for all of them, and each call returned a result of $\frac{91}{96}$ or more. In conclusion, it is shown that any randomized online coloring algorithm for bipartite graphs requires at least $\frac{91}{96} \log n - O(1)$ colors (\autoref{thm:lowerbound1}).

\paragraph{Computer checks and programs.} The programs used for the analysis and their results in this section can be downloaded at \url{https://github.com/square1001/online-bipartite-coloring}, which is the same URL as that in \autoref{sec:k2-upper}.

\section{Conclusion}

In this paper, we studied the online coloring of $k$-colorable graphs for $k \geq 5, k = 4$, and $k = 2$.

\begin{itemize}
    \item In \autoref{sec:k5}, we presented a deterministic online algorithm to color $k$-colorable graphs with $\widetilde{O}(n^{1-\frac{1}{k(k-1)/2}})$ colors. The key was to create an online algorithm for locally $k$-colorable graphs with $O(n^{1-\frac{1}{k(k-1)/2+1}})$ colors. We also improved the competitive ratio to $O(\frac{n}{\log \log n})$.
    \item In \autoref{sec:k4}, we presented a deterministic online algorithm to color $4$-colorable graphs with $O(n^{14/17})$ colors. The key was to make use of the second neighborhoods with the \emph{double greedy method} that uses First-Fit twice. We also applied the \emph{Common \& Simplify Technique} to take advantage of dense subgraph structures.
    \item In \autoref{sec:k2-upper}, we showed that a randomization of the algorithm by Lov\'{a}sz, Saks, and Trotter \cite{LST89} improves the performance to $1.034 \log_2 n + O(1)$ colors. We also showed that, in \autoref{sec:k2-lower}, no randomized algorithms can achieve $\frac{91}{96} \log_2 n - O(1)$ colors.
\end{itemize}

We were unable to improve the state-of-the-art bound $\widetilde{O}(n^{2/3})$ colors for $k = 3$ by \cite{Kie98}. The main reason is that the graph of degree $n^{2/3}$ is too sparse to use the Common \& Simplify technique, which we described in \autoref{subsec:k4-dense}. Indeed, to achieve $\widetilde{O}(n^{2/3-\varepsilon})$ colors for some $\varepsilon > 0$ with this technique, we must set parameters $(\alpha, \beta)$ satisfying $\alpha + \beta = 1-\varepsilon$, but the present technique requires $\widetilde{O}(n^{2-(\alpha+\beta)})$ colors, which is much above the requirement. Especially, we are still unable to solve the special case where there exists a 3-coloring of $G$ that for every $i \in \{1, 2, \dots, \frac{n}{k}\}$ ($k = n^{1/3-\varepsilon}$), the $(i-1)k+1, (i-1)k+2, \dots, ik$-th vertices (in the arrival order) have the same color.

Finally, we conclude this paper by highlighting several important open problems related to online graph coloring. We conjecture the following:

\begin{conjecture}\label{conj:ai-solve-1}
    There exists a deterministic online algorithm to color $3$-colorable graphs with $\widetilde{O}(n^{2/3-\varepsilon})$ colors (for some $\varepsilon > 0$).
\end{conjecture}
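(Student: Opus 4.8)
The plan is to keep the two-step architecture of \autoref{chap:k4} --- FirstFit down to $n^{2/3-\epsilon}$ colours, then a ``special algorithm'' for the else-problem --- and to push its key quantities one level lower than the $4$-colourable case. After FirstFit with colours $1,\dots,n^{2/3-\epsilon}$, every uncoloured $v\in T_0$ has $|N_{S_0}(v)|\ge n^{2/3-\epsilon}$, and the crucial structural fact is that in any proper $3$-colouring of $G$ the set $N_{S_0}(v)$ uses at most two colours; equivalently $G[N_{S_0}(v)]$ is bipartite. So the else-problem becomes: colour $T_0$ online with $\widetilde O(n^{2/3-\epsilon})$ colours, and we may freely invoke (i) \autoref{lem:kierstead-2good} (an $O(n^{1/2})$-colour algorithm for graphs with no $C_3,C_5$), (ii) \textsc{LST89} ($O(\log n)$ colours for bipartite graphs, \autoref{thm:lst89}), and (iii) the ``$k$-colourable'' subproblem+ machinery of \autoref{sec:k5-improvement} with $k=3$.

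First I would settle a \emph{no-dense} case, defined exactly as in \autoref{sec:k4-intro} but with all exponents shifted down. Mirroring \textsc{No-Dense-Case}: run a second FirstFit inside $T_0$; when it fails on some $v$, feed the current $T_0$-FirstFit block to \autoref{lem:kierstead-2good} to extract a forced $1$-color set $T_D\subseteq T_0$ of size $\widetilde\Omega(n^{1/3})$ (the analogue of the $\Omega(n^{7/17})$ $1$-color set, dropped by a level). Then $N_{S_0}(T_D)$ is a $2$-color set, of size $\Omega(n)$ under the no-dense hypothesis, so $n^{O(\epsilon)}$ such sets cover $S_0$; a future vertex with many neighbours in one of them lies in a bipartite induced subgraph and is routed to an instance of the $2$-color version of subproblem+ (the analogue of \autoref{lem:k4-color-set}), which costs only $\widetilde O(\log n)$-type colours because its base case is \textsc{LST89} rather than \autoref{lem:kierstead-2good}. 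Counting as in \autoref{sec:k4-no-dense-analysis} then gives $\widetilde O(n^{2/3-\epsilon})$ colours in the no-dense case.

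The hard part is the dense case, and it is exactly the obstruction flagged in \autoref{chap:conclusion}. The Common \& Simplify idea would take two vertices $u_1,u_2\in T_0$ that are $\beta$-common and differently coloured, and use that $N_{S_0}(u_1)\cap N_{S_0}(u_2)$ is then a $1$-color set; but a double-counting argument forces a common neighbourhood of size only $\approx n^{\beta}$ with $\beta$ essentially bounded by $4/3$ minus $\log_n|S_D|$ minus $\log_n|T_D|$, which in the worst dense block is no larger than $n^{2/3}$, whereas the bookkeeping for the ``special colours'' of \autoref{alg:division} costs $\widetilde O(n^{2-\alpha-\beta})$; making $2-\alpha-\beta\le 2/3-\epsilon$ would need $\alpha+\beta\ge 4/3+\epsilon$, which the dense block does not supply. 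To beat this I would try, in order: (a) \emph{amplify} the forced $2$-color set by iterating the common-neighbourhood step along a chain $u_1,\dots,u_r$ of pairwise-$\beta$-common vertices --- a longer version of the four-link fan-out already inside \textsc{Division} --- paying only polylogarithmically per chain; (b) \emph{reuse} colours across dense groups by tracking a single global potential instead of charging $n^{1-\alpha}$ groups times $n^{1-\beta}$ resets independently; and (c) attack the stated extremal configuration --- a $3$-colouring whose colour classes are unions of length-$n^{1/3-\epsilon}$ arrival-order intervals --- directly by a majority/pigeonhole argument over arrival order, which FirstFit already partly exploits.

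I expect (a) to be the crux and the place the plan most likely fails: it is unclear that a chain of $\beta$-common vertices that are \emph{pairwise differently coloured} even exists, since only three colours are available and no adjacency among the $u_i$ is assumed, and even granting such a chain it is unclear whether the forced $2$-color set grows rather than collapses along it. Deciding this, or else constructing a genuinely $3$-colourable adversarial instance that rules the amplification out --- which would turn the conjecture into a separation --- is, in my view, the real content of \autoref{conj:ai-solve-1}.
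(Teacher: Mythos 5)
There is no proof to compare against: the statement you were given is \autoref{conj:ai-solve-1}, which the paper explicitly leaves open, and \autoref{chap:conclusion} records precisely why the authors could not prove it. Your proposal does not close this gap either, and to your credit you say so. The no-dense half of your plan is plausible (it is the $k=4$ architecture of \autoref{chap:k4} with exponents shifted, using that $N_{S_0}(v)$ is a $2$-color set and that the $2$-color subproblem+ bottoms out in \textsc{LST89}), but the entire difficulty sits in the dense case, and there your analysis merely re-derives the paper's own obstruction: the analogue of \autoref{lem:k4-ab-free} at degree $n^{2/3}$ only yields pairs with $\alpha+\beta\approx 1$, while the special-color bookkeeping of \autoref{alg:division} costs $\widetilde{O}(n^{2-\alpha-\beta})$ colors, so one would need $\alpha+\beta\ge \frac{4}{3}+\epsilon$, which the dense block does not supply. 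Identifying the obstruction is not the same as overcoming it.

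The escape routes (a)--(c) are where a proof would have to live, and none of them is established. For (a), you have no existence argument for a chain of pairwise $\beta$-common vertices that are forced to receive distinct colors in some $3$-coloring: the fan-out in \textsc{Division}$(\alpha,\beta)$ works for $k=4$ because the endpoints $u_1,u_2$ are \emph{adjacent}, so some consecutive pair along the chain must be bichromatic; with only three colors and no adjacency among your $u_i$, the adversary can color an entire dense block monochromatically (exactly the extremal configuration named in \autoref{chap:conclusion}, where color classes are unions of arrival-order intervals of length $n^{1/3-\epsilon}$), and then no pair yields a usable $1$-color set at all. For (b), the $\widetilde{O}(n^{2-\alpha-\beta})$ cost is not an accounting artifact that a ``global potential'' obviously removes: each dense group needs its own special color because vertices assigned to different groups may be adjacent, and you give no mechanism for safely sharing colors across groups. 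For (c), ``majority/pigeonhole over arrival order'' is not an argument. So the proposal is a reasonable research plan, but it contains no step that gets past the point where the paper itself stopped; as a proof of \autoref{conj:ai-solve-1} it has a genuine, acknowledged gap.
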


\begin{conjecture}\label{conj:ai-solve-2}
    There exists a deterministic online algorithm to color $k$-colorable graphs with $n^{1-1/o(k^2)}$ colors.
\end{conjecture}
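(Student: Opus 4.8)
The plan is to push the locally $\ell$-colorable framework of Section~\ref{chap:k5} past its current $n^{1-2/(\ell(\ell-1)+2)}$ barrier by attacking precisely the place where the quadratic loss is created. Writing $u(\ell)=1/\delta(\ell)$, where $n^{1-\delta(\ell)}$ is the attainable bound for locally $\ell$-colorable graphs, the analysis behind \autoref{lem:k5-base} gives exactly the recursion $u(\ell)=u(\ell-1)+(\ell-1)$, whose ``$+(\ell-1)$'' counts the level increases $d=0,1,\dots,\ell-1$ one must perform before reaching the base case in which the neighbourhood of a $1$-color set becomes locally $(\ell-1)$-colorable (for the genuinely $k$-colorable case one uses the bipartite base case as in Section~\ref{sec:k5-improvement}, but the count is the same). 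To beat $\Theta(\ell^2)$ it suffices to make this increment $o(\ell)$ for a positive fraction of the steps: a per-step increment of $O(\ell/\log\ell)$ already yields $u(\ell)=O(\ell^2/\log\ell)$ and hence $n^{1-1/o(\ell^2)}$.

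First I would try to compress the chain of level increases by jumping several levels at once. Instead of spawning, for each $x$ in a small $D'\subseteq T$ with $\chi(G[D'])>d$, a single level-$(d+1)$ subroutine on $N_S(x)$, one would spawn level-$(d+t)$ subroutines on $t$-wise intersections $N_S(x_{i_1})\cap\cdots\cap N_S(x_{i_t})$ taken over ``rainbow'' tuples inside a subgraph with $\chi$ as large as $\ell$. The combinatorial input needed is a $t$-uniform generalization of \autoref{lem:k5-common}: if sets of size $\ge s/a$ have all $t$-wise intersections $\le s/a^{t}$, only $O(a)$ of them fit, so that a jump of $t$ levels still spawns only $O(n^{t\epsilon})$ surviving subroutines; balancing $t\approx\sqrt\ell$ would trade ``$\ell$ levels, one factor each'' for ``$\sqrt\ell$ jumps, $\sqrt\ell$ factors each.'' The delicate point is that a $t$-wise intersection must actually certify a level-$(d+t)$ set: this needs the witnessing $\ell$-coloring to place $x_{i_1},\dots,x_{i_t}$ on colours lying inside the $\le\ell-d$ colours used by $S$, which I expect forces $\chi(G[D'])=\ell$ rather than merely $>d$, plus a ``rainbow-inside-$S$'s-palette'' argument that is not present in the current proof.

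An alternative and perhaps more robust route is to lift the $k=4$ machinery of Section~\ref{chap:k4} — the double greedy method and the Common \& Simplify technique — to all $k$, recursively. For $k=4$ one extra pass of \textsc{FirstFit} together with one common-neighbourhood step upgrades an $\Omega(n^{1/2})$ $1$-color set into an $\widetilde\Omega(n^{12/17})$ $2$-color set, far beyond the naive $n^{2/3}$. The natural generalization runs \textsc{FirstFit} $\Theta(k)$ times and uses $r$-th order common neighbourhoods $\bigcap_{i\le r}N_{S_0}(u_i)$: if $u_0,\dots,u_r$ get distinct colours in some optimal colouring this set is a $(k-r-1)$-color set, and its size is governed by a K\H{o}v\'ari--S\'os--Tur\'an / Lov\'asz-type bound. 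The exceptional ``dense'' configurations would be peeled off by a recursive \textsc{Branch}/\textsc{Division} hierarchy, one density threshold per level, each reducing to a strictly smaller colour-set subproblem.

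The hard part will be exactly this exceptional-case hierarchy. Already in the $k=4$ proof \autoref{lem:k4-division-colors} spends $\widetilde O(n^{2-\alpha-\beta})$ colours, which is affordable only because a \emph{single} dense level is being handled; with $\Theta(k)$ nested density levels the naive bookkeeping blows up, and — as the authors note in the conclusion — even $k=3$ already resists because degree $n^{1-1/\Theta(k^2)}$ is too sparse to feed Common \& Simplify. So the crux is to design a division subproblem whose colour cost stays $\widetilde O(n^{1-\delta})$ while being iterated $\Theta(k)$ (or $\Theta(\log k)$) times — equivalently, to simplify ``many weak dense structures'' in aggregate rather than one layer at a time. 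If that succeeds, the inductive accounting of Section~\ref{sec:k5-induction-analysis} carries over essentially verbatim and gives $\epsilon=1/o(k^2)$; if not, the conjecture most likely requires a genuinely new idea beyond either of these two routes.
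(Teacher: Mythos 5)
This statement is posed in the paper as an open conjecture (\autoref{conj:ai-solve-2}); the paper supplies no proof of it, and explicitly leaves it to future work. Your proposal does not close it either: it is a research program, and both of its routes leave the decisive step unproved — as you yourself concede at the end. Concretely, in the level-jumping route the object you spawn, a $t$-wise intersection $N_S(x_{i_1})\cap\cdots\cap N_S(x_{i_t})$, has two unresolved problems. First, the certification: to conclude it is a level-$(d+t)$ set you need every witnessing $\ell$-coloring of every small subgraph to place $x_{i_1},\dots,x_{i_t}$ on $t$ \emph{distinct} colors inside the at most $\ell-d$ colors available to $S$; the paper's mechanism (\autoref{lem:k5-violate}) only ever extracts one vertex of $D'$ landing in $S$'s palette, and your ``rainbow-inside-$S$'s-palette'' argument is exactly the missing lemma, not a routine extension. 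Second, the size: the large-degree hypothesis of the subproblem bounds single neighborhoods from below, but gives no lower bound whatsoever on $t$-wise intersections, which can be empty; this is precisely the dense/no-dense dichotomy that forces the Common \& Simplify machinery already at $k=4$, so the proposed generalization of \autoref{lem:k5-common} (which is an upper-bound packing statement, not a lower bound on intersections) does not address it.

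The second route inherits the same obstruction in amplified form. The $k=4$ gain rests on a delicate numerical coincidence: \autoref{lem:k4-division-colors} tolerates $\widetilde O(n^{2-\alpha-\beta})$ colors only because \autoref{lem:k4-ab-free} delivers $\alpha+\beta=\frac{20}{17}-K>1+\delta$ for the single density layer at degree threshold $n^{14/17}$, and the paper's conclusion records that already at $k=3$ the analogous count fails ($\alpha+\beta=1-\epsilon$, cost $n^{1+\epsilon}$). For general $k$ you would need $r$-wise common neighborhoods to stay large through $\Theta(k)$ nested density layers, with each layer's coloring cost kept at $\widetilde O(n^{1-\delta})$; no counting lemma in the proposal, and none in the paper, establishes this, and your own closing sentence acknowledges that without it the argument does not go through. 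So the gap is not a technicality: the quantitative heart of the conjecture — making the per-level loss $o(\ell)$ on average, or equivalently designing a division hierarchy whose aggregate cost does not blow up — is exactly what remains open, both in your write-up and in the paper.
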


\begin{conjecture}\label{conj:ai-solve-3}
    \textsc{RandomizedLST} uses at most $1.027 \log_2 n + O(1)$ colors for any bipartite graph $G$.
\end{conjecture}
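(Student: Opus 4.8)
The plan is to identify the exact worst case of \textsc{RandomizedLST} and then analyse it quantitatively. By \autoref{lem:binarytree} it suffices to bound $f(m) := \max_T \mathbb{E}[\ell_{\mathrm{root}(T)}]$ over binary trees $T$ with at most $m$ leaves, since \textsc{RandomizedLST} uses at most $2f(n)$ colours in expectation. Let $B_k$ be the complete binary tree of depth $k$ (with $2^k$ leaves) and put $\rho := \lim_{k\to\infty}\mathbb{E}[\ell_{\mathrm{root}(B_k)}]/k$, which the dynamic-programming experiments reported after \autoref{thm:upperbound} suggest is at most $0.5135$. I would prove the conjecture by establishing: (i) $f(m) \le \rho\log_2 m + O(1)$, i.e.\ the complete binary tree is asymptotically extremal among binary trees; and (ii) $\rho \le 0.5135$. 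Then the expected number of colours is at most $2f(n)+O(1)\le 2\rho\log_2 n + O(1)\le 1.027\log_2 n + O(1)$.

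For (i), the object to study is the recursion for $p_{m,t} := \max_T\Pr[\ell_{\mathrm{root}(T)}\ge t]$, which the excerpt bounds above by $p'_{m,t} = \max_{m_l+m_r=m}\Phi_t\!\big(p'_{m_l,\bullet},\,p'_{m_r,\bullet}\big)$, where $\Phi_t$ is the merge map $\Phi_t(u,v) = 1-(1-u_t)(1-v_t) + \tfrac12(u_{t-1}-u_t)(v_{t-1}-v_t)$, coordinate-wise monotone by \autoref{lem:monotone}. Since $f(m) \le \sum_t p_{m,t} \le \sum_t p'_{m,t}$, it suffices to show $p'_{m,t}$ obeys a \emph{traveling-wave} law $p'_{m,t} = w\!\big(\rho\log_2 m - t\big) + o(1)$ for a fixed front shape $w$ of finite width, with the \emph{balanced} split $m_l = m_r = m/2$ governing the speed: concretely $p'_{m,t} \le p^{\mathrm{bal}}_{\lceil\log_2 m\rceil}(t-c)$ for a universal constant $c$, where $p^{\mathrm{bal}}_k(\cdot)$ is the level distribution of $B_k$. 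The heuristic is that an unbalanced split $m=m_l+m_r$ with $m_l\le m_r$ has $\Phi_t\approx p'_{m_r,t}$ up to a correction that is negligible unless $p'_{m_l,t-1}=\Theta(1)$, which already forces $\log_2 m_l\ge (t-1)/\rho - O(1)$, i.e.\ the split is essentially balanced; so unbalancing never advances the front by more than $O(1)$. This gives $\sum_t p'_{m,t}\le\rho\log_2 m + O(1)$, which is (i).

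For (ii), I would evaluate $\rho$ through the level-$L$ terminal machinery. First, generalise \autoref{thm:level2c}: the same leaf-pairing exchange argument should show that $B_k$ maximises the expected number of level-$(L+1)$ terminals per leaf, so $\gamma^{(L)} = \sum_{j\ge1}2^{-j}r^{(L)}_j$ with $r^{(L)}_j = \tfrac12\big(p^{\mathrm{bal}}_{j-1}(L)-p^{\mathrm{bal}}_{j-1}(L+1)\big)^2$; the traveling-wave analysis of the balanced recursion then yields $\gamma^{(L)} = \Theta\!\big(2^{-L/\rho}\big)$, hence $L/\log_2(1/\gamma^{(L)})\to\rho$, and \autoref{lem:levelk} gives $\mathbb{E}[\ell_{\mathrm{root}}]\le\rho\log_2 m + O(1)$. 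To certify $\rho\le 0.5135$ rigorously — the recursion for $p^{\mathrm{bal}}_k(\cdot)$ has no apparent closed form, so the constant is genuinely numerical — I would combine a monotonicity argument, namely that $L\mapsto L/\log_2(1/\gamma^{(L)})$ is nonincreasing (so a single finite $L$ already upper-bounds the limit), with the round-up, fixed-denominator interval arithmetic of \autoref{lem:small_upperbound}, run to a larger $L$.

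The hard part is step (i). The preorder $\preceq$ does \emph{not} place $B_k$ below every binary tree (already $B_1\not\preceq$ the $3$-leaf tree), so no clean exchange argument is available, and one must instead control the shape and speed of the traveling wave of the nonlinear recursion $\Phi_t$ \emph{uniformly over all split patterns}, ruling out that a cleverly tuned sequence of slightly unbalanced merges accelerates the front by a super-constant amount. A secondary difficulty is that $\mathbb{E}[\ell_{\mathrm{root}(B_k)}]/k$ could a priori oscillate (log-periodic, digital-sum type fluctuations), so the existence of $\rho$ — or at least $\limsup_k \mathbb{E}[\ell_{\mathrm{root}(B_k)}]/k\le 0.5135$ — must itself be proved rather than read off from the experiments.
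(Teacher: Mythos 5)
You are attempting to prove a statement that the paper itself does not prove: \autoref{conj:ai-solve-3} is explicitly left as a conjecture, supported only by dynamic-programming experiments suggesting that the complete binary tree is the worst case (the paper's proven bound, \autoref{thm:upperbound}, is $1.034\log n + O(1)$). So there is no paper proof to compare against, and your proposal does not close the gap either: it is a research program whose central step is exactly the open content of the conjecture. Step (i) --- that the balanced split governs the recursion $p'_{m,t}$ and that unbalanced merges cannot advance the ``front'' by more than $O(1)$, i.e.\ that the complete binary tree is asymptotically extremal among all binary trees --- is stated as a heuristic, and you yourself note that the preorder $\preceq$ does not place $B_k$ below other trees, so no exchange argument is available. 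Nothing in the proposal substitutes for this missing argument, and without it the chain $f(m)\le \rho\log_2 m + O(1)$ is unestablished; likewise the existence of $\rho$ (absence of log-periodic oscillation) is flagged but not addressed.

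Step (ii) also contains an unsupported claim that would likely fail as stated: the leaf-pairing exchange proof of \autoref{thm:level2c} hinges on the closed form $a(T)=\sum_v 2^{-(s_v-1)}$ for the expected number of level-$2$ terminals, where each vertex's contribution depends only on its subtree size. For $L\ge 2$ the probability that a vertex is a level-$(L+1)$ terminal is not a function of $s_v$ alone, no such closed form is available, and it is precisely for this reason that the paper abandons the search for the extremal tree and instead upper-bounds $\gamma^{(L)}$ via the relaxed quantities $b'^{(L)}_m$, $p'_{m,t}$ and \autoref{lem:small_upperbound}. Your asserted formula $\gamma^{(L)}=\sum_j 2^{-j} r^{(L)}_j$ with $B_k$ extremal, the claimed monotonicity of $L\mapsto L/\log_2(1/\gamma^{(L)})$, and the identification of the limit with $\rho$ are all unproven, and the target constant $0.5135$ is read off experiments rather than certified. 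In short, the proposal is a plausible outline of how one might eventually attack the conjecture, but both of its pillars rest on statements that are currently open; it does not constitute a proof, which is consistent with the paper leaving the statement as a conjecture.
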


\begin{conjecture}\label{conj:ai-solve-4}
    The optimal randomized online algorithm for coloring bipartite graphs is \textsc{RandomizedLST}, up to constant number of colors; therefore, no algorithm can achieve $1.026 \log_2 n - O(1)$ colors.
\end{conjecture}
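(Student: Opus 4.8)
The plan is to establish matching upper and lower bounds on the minimax expected number of colors and then combine them via Yao's lemma. Write $V(n)$ for $\min_A \max_G \mathbb{E}[\mathrm{colors}(A,G)]$ over randomized online algorithms $A$ and $n$-vertex bipartite graphs $G$; by Yao's lemma \cite{Yao77} this equals $\max_{\mathcal D}\min_{A\ \mathrm{det}}\mathbb{E}_{G\sim\mathcal D}[\mathrm{colors}(A,G)]$. Let $\rho$ be the asymptotic worst-case constant of \textsc{RandomizedLST}. The conjecture is equivalent to the single statement ``$V(n)=\rho\log_2 n + O(1)$ and $\rho>1.026$'': the first clause of the conjecture is exactly $\max_G\mathbb{E}[\mathrm{colors}(\textsc{RandomizedLST},G)]=V(n)+O(1)$, and since \textsc{RandomizedLST} attains $\rho\log_2 n+O(1)$ (\autoref{thm:upperbound}) this reduces to the lower bound $V(n)\ge\rho\log_2 n - O(1)$; the second clause, ``no algorithm achieves $1.026\log_2 n-O(1)$ colors'', is then immediate from $\rho>1.026$. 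So the proof has three parts.

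\textbf{Part (i): pin down $\rho$.} First I would prove the extremal claim behind \autoref{conj:ai-solve-3}, that the complete binary tree maximizes $\mathbb{E}[\ell_{\mathrm{root}(T)}]$ among trees with a given number of leaves, by iterating the ``balancing'' exchange arguments of \autoref{lem:monotone} and \autoref{thm:level2c}: each local rebalancing of a subtree moves the whole tree up in the preorder $\preceq$, so the supremum over $m$-leaf trees is attained in the limit by complete binary trees. Consequently $\rho$ equals the limiting per-leaf constant of \textsc{RandomizedLST} on complete binary trees, which is computable, in principle exactly, from the recursions underlying \autoref{lem:levelk} and \autoref{lem:small_upperbound}; the monotone certificates of \autoref{tab:upperbound} already give $\rho\le1.034$, and carrying the computation to larger $L$, with the integer-rounding safeguard of \autoref{chap:k2-upper}, should confirm $1.026<\rho\le1.027$. \textbf{Part (ii): upper bound.} Then $V(n)\le\rho\log_2 n+O(1)$ is immediate, since \textsc{RandomizedLST} is one particular randomized algorithm.

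\textbf{Part (iii): lower bound.} Apply Yao's lemma with the distribution $\mathcal D_h$ of the grade-$h$ construction of \autoref{chap:k2-lower} --- which is already a complete binary tree of random merges --- and show that every deterministic algorithm uses at least $\rho h-O(1)$ colors in expectation on $\mathcal D_h$. First, by the symmetry of the merge orientations one may restrict to stationary policies that map the multiset of current component states to a coloring of the two new vertices. Then generalize the two-phase potential machinery of \autoref{sec:k2-lower-twophase} and \autoref{sec:k2-lower-search} to an \textbf{$L$-phase analysis with a telescoping potential decomposition} $\phi=\phi^{(1)}_A+\dots+\phi^{(L-1)}_A+\phi_B$, pairing the increase of $\phi^{(j)}_A$ at phase $i$ with the increase of $\phi^{(j+1)}_A$ (or $\phi_B$) at phase $i+1$, and running the corresponding $L$-step expected minimax --- the natural extension of \autoref{alg:minimax} --- over all bounded-size combinations of component states, enumerated with the pruning and standard-form symmetry reductions of \autoref{sec:k2-lower-search}. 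This yields a provable per-phase lower bound $\rho_L$, and the heart of the argument is to show $\rho_L\to\rho$, i.e.\ that the $L$-phase relaxation is asymptotically tight against the complete-binary-tree adversary. The natural route is an LP-duality-type equality: from the structure of the optimal stationary policy, construct a potential decomposition whose guaranteed per-phase increase matches the known expected cost of \textsc{RandomizedLST} on the same instance, forcing the two estimates to meet at $\rho$. Exhibiting an explicit finite $L$ with $\rho_L\ge1.026$ then completes the proof of the whole conjecture.

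The main obstacle is the tightness claim $\rho_L\to\rho$ in Part (iii). The two-phase bound $\frac{91}{96}\approx0.948$ of \autoref{thm:lowerbound2} is still far below $1.026$, and nothing a priori guarantees that increasing $L$ closes the gap; one genuinely needs the duality statement that the deterministic-versus-random game value on the complete binary tree coincides with what \textsc{RandomizedLST} achieves there, which amounts to showing that this particular adversary is near-optimal. Compounding this, the state-combination search behind each $\rho_L$ already involved on the order of $10^5$ to $10^6$ cases at $L=2$ (\autoref{sec:k2-lower-search}), and the number of cases grows rapidly in $L$, so a computer-assisted verification at the $L$ needed to reach $1.026$ will likely require substantially sharper pruning, or a closed-form description of the limiting optimal potential, rather than brute force. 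A secondary obstacle is Part (i): the exact worst-case constant $\rho$ of \textsc{RandomizedLST} is itself open (\autoref{conj:ai-solve-3}); the monotone certificates only yield $\rho\le1.034$, and a rigorous proof that complete binary trees are exactly extremal, together with an evaluation of their limiting constant, is needed before the upper and lower sides can be made to coincide.
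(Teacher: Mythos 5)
This statement is \autoref{conj:ai-solve-4}, which the paper leaves as an open conjecture --- there is no proof in the paper to compare against, and the authors explicitly pose it (together with \autoref{conj:ai-solve-3}) as a problem for future work. Your submission is a research plan rather than a proof, and you say so yourself; the honest accounting of obstacles is appreciated, but every step that would make the plan a proof is precisely the open content of the conjecture. Concretely: (i) the constant $\rho$ is not pinned down --- the paper only certifies $\rho \leq 1.034$ (\autoref{thm:upperbound}) and conjectures $\approx 1.027$, and your ``iterate the exchange arguments'' claim that complete binary trees are exactly extremal is not established by \autoref{lem:monotone} or \autoref{thm:level2c}, which only handle binarization and the level-$2$-terminal count, not $\mathbb{E}[\ell_{\mathrm{root}(T)}]$ itself; (ii) the convergence $\rho_L \to \rho$ of the $L$-phase potential bounds has no argument behind it --- the paper's two-phase value is $\frac{91}{96} \approx 0.948$, far from $1.026$, and nothing in the potential framework guarantees that richer decompositions close the gap rather than stalling below $\rho$; and (iii) even granting (ii), you would only obtain the value of the game against the grade-$h$ distribution, whereas the conjecture asserts optimality of \textsc{RandomizedLST} against the best possible adversary, so you would additionally need that this particular binary-tree-like distribution is an asymptotically optimal adversary --- the ``LP-duality-type equality'' you invoke is exactly the missing theorem, not a tool you can cite.

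A secondary but genuine flaw: in Part (iii) you propose to ``restrict to stationary policies that map the multiset of current component states to a coloring of the two new vertices.'' A lower bound via Yao's lemma must hold for \emph{every} deterministic online algorithm, including ones that exploit the full history and cross-component color correlations; symmetry of the merge orientations does not by itself license this restriction. The paper's potential argument deliberately avoids any such restriction, because the per-merge minimax bound (\autoref{lem:potential_lemma1}, \autoref{alg:minimax}) holds for arbitrary coloring decisions; if you want to reason about a reduced policy class you must prove a reduction lemma, which is absent. As it stands, the proposal correctly reformulates the conjecture as ``$V(n) = \rho \log_2 n + O(1)$ with $\rho > 1.026$'' and identifies the right machinery to attack it, but it does not prove the statement.
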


We hope that future researchers (or possibly AIs) solve these questions.

\section*{Acknowledgments}

The authors thank our laboratory members Kaisei Deguchi, Tomofumi Ikeura, Yuta Inoue, Hiroaki Mori, and Akihito Yoneyama for their helpful suggestions for improving this paper.

\printbibliography

\end{document}